\definecolor{linkColor}{rgb}{0.0,0.0,0.554}
\definecolor{citeColor}{rgb}{0.0,0.0,0.554}
\definecolor{fileColor}{rgb}{0.0,0.0,0.554}
\definecolor{urlColor}{rgb}{0.0,0.0,0.554}
\definecolor{promptColor}{rgb}{0.0,0.0,0.589}
\definecolor{brkpromptColor}{rgb}{0.589,0.0,0.0}
\definecolor{gapinputColor}{rgb}{0.589,0.0,0.0}
\definecolor{gapoutputColor}{rgb}{0.0,0.0,0.0}
\definecolor{cof}{RGB}{219,144,71}
\definecolor{pur}{RGB}{186,146,162}
\definecolor{greeo}{RGB}{91,173,69}
\definecolor{greet}{RGB}{52,111,72}
\newcommand{\changefont}{%
    \fontsize{8}{8}\selectfont
}
\mathchardef\mhyphen="2D 
\def\BState{\State\hskip-\ALG@thistlm}
\title[Some integrators for the KNdS geodesic equation and black holes shadowing]{Motion equations in a Kerr--Newman--de Sitter spacetime: some methods of integration and application to black holes shadowing in Scilab}
\author{Arthur Garnier \orcidlink{0000-0003-4069-3203}}
\address{\newline
Universit\'e de Picardie,
\newline D\'epartement de Math\'ematiques et LAMFA (UMR 7352 du CNRS),
\newline 33 rue St Leu,
\newline F-80039 Amiens Cedex 1,
\newline France}
\email{arthur.garnier@math.cnrs.fr}
\theoremstyle{plain}
\newtheorem{prop}{Proposition}[subsection]
\newtheorem{prop-def}[prop]{Proposition-Definition}
\newtheorem{theo}[prop]{Theorem}
\newtheorem{cor}[prop]{Corollary}
\newtheorem{rem}[prop]{Remark}
\newtheorem{definition}[prop]{Definition}
\newtheorem*{prop*}{Proposition}
\newtheorem*{prop-def*}{Proposition-Definition}
\newtheorem*{propri*}{Property}
\newtheorem*{lem*}{Lemma}
\newtheorem*{theo*}{Theorem}
\newtheorem*{cor*}{Corollary}
\newtheorem*{rem*}{Remark}
\newtheorem*{definition*}{Definition}
\newtheorem*{exemple*}{Example}
\newtheorem*{notation*}{Notation}
\newcommand{\lra}{\longrightarrow}
\newcommand{\ra}{\rightarrow}
\newcommand{\sdp}{\times\kern-.2em\vrule height1.1ex depth-.05ex}
\newcommand{\epi}{\lra \kern-.8em\ra}
\newcommand{\C}{{\mathbb C}}
\newcommand{\R}{{\mathbb R}}
\newcommand{\Sph}{\mathbb{S}}
\newlength\@SizeOfCirc%
\newcommand{\CircleArrowRight}[1]{%
    \setlength{\@SizeOfCirc}{\maxof{\widthof{#1}}{\heightof{#1}}}%
    \tikz [x=1.0ex,y=1.0ex,line width=.12ex]%
        \draw [->,anchor=center]%
            node (0,0) {#1}%
            (0,0.8\@SizeOfCirc) arc (85:-240:0.8\@SizeOfCirc);%
}%
\newcommand{\CircleArrowLeft}[1]{%
    \setlength{\@SizeOfCirc}{\maxof{\widthof{#1}}{\heightof{#1}}}%
    \tikz [x=1.0ex,y=1.0ex,line width=.12ex]%
        \draw [<-,anchor=center]%
            node (0,0) {#1}%
            (0,0.8\@SizeOfCirc) arc (85:-240:0.8\@SizeOfCirc);%
}%
\tikzset{
    set arrow inside/.code={\pgfqkeys{/tikz/arrow inside}{#1}},
    set arrow inside={end/.initial=>, opt/.initial=},
    /pgf/decoration/Mark/.style={
        mark/.expanded=at position #1 with
        {
            \noexpand\arrow[\pgfkeysvalueof{/tikz/arrow inside/opt}]{\pgfkeysvalueof{/tikz/arrow inside/end}}
        }
    },
    arrow inside/.style 2 args={
        set arrow inside={#1},
        postaction={
            decorate,decoration={
                markings,Mark/.list={#2}
            }
        }
    },
}
\subjclass[2020]{Primary 83C57, 83C10, 83-10; Secondary 85A25, 83F05, 85-10}
\date{April 14, 2023.}
\begin{document}

\begin{abstract}
In this paper, we recall some basic facts about the Kerr--Newman--(anti) de Sitter (KNdS) spacetime and review several formulations and integration methods for the geodesic equation of a test particle in such a spacetime. In particular, we introduce some basic general symplectic integrators in the Hamiltonian formalism and we re-derive the separated motion equations using Carter's method.

After this theoretical background, we explain how to ray-trace a KNdS black hole, equipped with a thin accretion disk, using Scilab. We compare the accuracy and execution time of the previous methods, concluding that the Carter equations is the best one. Then, inspired by Hagihara, we apply Weierstrass' elliptic functions to the non-rotating case, yielding a fairly fast shadowing program for such a spacetime.

We provide some illustrations of the code, including a depiction of the effects of the cosmological constant on shadows and accretion disk, as well as a simulation of M87*.
\end{abstract}

\maketitle

\subsection*{Copyright statement} This Accepted Manuscript is available for reuse under a CC BY-NC-ND licence after the 12 month embargo period provided that all the terms of the licence are adhered to. This is a peer-reviewed, un-copyedited version of an article published in Classical and Quantum Gravity. IOP Publishing Ltd is not responsible for any errors or omissions in this version of the manuscript or any version derived from it. The Published Version is available online at \url{https://doi.org/10.1088/1361-6382/accbfe}.
\vspace{1cm}

\section*{Introduction and motivation}

\indent The numerical computation of shadows and images of black holes and related relativistic objects is a crucial tool in understanding the effects of a strong (non-Newtonian) gravity field. This has been an extensive area of research for the last four decades, with significant progress in the last few years, due to an always increasing computational power and related observations of actual black holes, such as Sgr A* or M87* \cite{chargeconstraints,M87}.

The literature regarding the subject is quite extensive and many ray-tracing codes were produced, with various aspects: the appearance of a star orbiting a black hole \cite{luminet,cunningham-bardeen,perez-giz-levin}, images of accretion structures \cite{fanton-et-al,fuerst-wu,agol-dexter,karas-polnarev,marck,schnittman-krolik-hawley}, modelizations related to existing black holes \cite{broderick-loeb,M87}. Moreover, a lot of free codes is available \cite{agol-dexter,GRay,gyoto,PYYY}. See also \cite{shadow_scalar_hair,parametrised,osiris}.

Given so numerous and various works, why yet a new paper on the subject? We have three main reasons.

First, to the knowledge of the author, no ray-tracing code takes cosmological effects into account, that is, the assumption that the cosmological constant $\Lambda$ vanishes is always made. Moreover, the charge $Q$ of the black hole is also assumed to be zero. These are reasonable simplifications, since $\Lambda$ and $Q$ are expected to be negligible in the case of the observable black holes of our universe. Indeed, according to \cite[\S 3.2]{planck18}, the physical value of $\Lambda$ should be $\Lambda=(1.090\pm0.029)\cdot10^{-52}\rm{m}^{-2}$ in SI units and the charge should be small due to the plasma orbiting the object, see \cite[\S 6]{teukolsky}. The latter claim is confirmed in \cite[\S 4]{chargeconstraints} for Sgr A*. However, as pointed out in \cite{charge-plasma}, even a small charge could, in certain cases, have a great influence on electrons and thus on the plasma motion (provided that the electromagnetic field of the plasma is small). Moreover, to introduce a cosmological constant allows to visualise the properties of black holes in a faster-expanding (or even contracting) universe. We chose to add the charge term $Q$ for completeness and because it doesn't complicate the calculations too much, especially in comparison to the introduction of $\Lambda$. As an illustration of our code, the Figures \ref{lambda_on_shadows} and \ref{lambda_on_accretion} depict the visual influence of $\Lambda$ on shadows and accretion disks.

Furthermore, our code is freely available\footnote{at \url{https://github.com/arthur-garnier/knds_orbits_and_shadows.git}} and, again to the knowledge of the author, is the only black hole vizualising tool developed for \texttt{scilab}\footnote{Version 6.1.1, equipped with the \texttt{IPCV} package, see \url{https://www.scilab.org/} and \url{https://ipcv.scilab-academy.com/}}, a free software providing efficient routines for matrix manipulations and elementary image processing. This makes the code relatively transparent easy to explore and modify and makes the formulae of the paper easy to track in the code. We also designed the code in a way that the user may tune each parameter of the simulation, including the choice between the different redshifts to apply to the accretion disk, the brightness rescaling, etc. Moreover, a single geodesic tracer code is provided, so that the user may plot and compare various orbits, including that of a charged particle around a KNdS black hole. All this could make the codes useful for educational purposes. See \S\ref{rem_implem} and the documentation of the package for more details.

Finally, we wanted to derive and make explicit all the formulae involved in the process, so that the reader may easily create its own code out of them. Indeed, the statement of elementary formulae giving the motion constants in terms of prescribed initial conditions is rare in the literature (though \cite{PYYY} is an example). We tried to make the formulae as readable as possible, with conventions that are as close as possible from the existing references. For the convenience of the reader, the tedious proofs are put in Appendix \ref{some_proofs}.

The paper is organized as follows: after a reminder on Einstein's general theory of relativity, we introduce the KNdS metric as in \cite{gibbons-hawking} and re-prove in Theorem \ref{EFE_ok} that it maximally extends to an analytic metric satisfying the Maxwell-Einstein field equation.

Then, \S\ref{formulations} focuses on the geodesic equation of a (possibly charged) test particle in the KNdS spacetime. It also considers some of the formulations that can be used to numerically solve it, such as the Lagrangian and Hamiltonian formalisms. The latter is rather efficient, since it features some nice symplectic geometric properties. We then recall some classical general symplectic schemes which we implement. As we will later integrate the geometric equation \emph{backwards}, the symplectic schemes that are \emph{reversible} are of particular interest. However, we shall see that all of them will show some instabilities around the symmetry axis and moreover, these methods can be quite long to process as the stable ones are implicit. To get rid of this issue, we use the method from \cite{carter}.

Carter's method consists in identifying a fourth motion constant that makes the geodesic equation \emph{integrable}. We apply this method to our context in \S\ref{mot_con}. The resulting differential system is much simpler than the original one and can be solved quite easily using the routine \texttt{lsode} for Scilab \cite{hindmarsh}. For more details, see Theorem \ref{carter_equations} and Corollary \ref{with_conj_mom}. In Proposition \ref{find_constants}, we derive the motion constants from the \emph{rest mass} and the initial data of the geodesic.

In \S\ref{wp}, we treat the particular case of a non-rotating black hole. Following the original idea of \cite{hagihara}, we consider planar geodesics, parametrized in polar coordinates. In the case of a photon orbit in the \emph{Reissner--Nordstr\"{o}m--(anti) de Sitter black hole} (i.e. a non-rotating KNdS black hole), the geodesic equation can be reduced to the Weierstrass equation $\dot{\wp}^2=4\wp^3-g_2\wp-g_3$, whose solution is a \emph{Weierstrass elliptic function}; see Proposition \ref{into_weierstrass} and Corollary \ref{ini_conds_weierstrass}. Coupled with Carlson's algorithm for elliptic integrals (\cite{carlson}) and an elementary Newton approximation method, this provides an efficient way to shadow an RNdS black hole which is much faster than numerical integration of motion equations.

Then, we explain how we choose our model for the thin accretion disk, based on \cite{shakura-sunyaev} and \cite{spruit}. We assume that the matter in the accretion disk radiates as a blackbody and we use (a rescaled version of) Planck's law for the brightness. We also include the gravitational and Doppler redshift effects to the implementation. See \S\ref{mod_acc} for more details.

In \S\ref{rem_implem}, we make some remarks on the implementation process and provide details about the \emph{backward ray tracing} algorithm we use. We compare the different integration methods introduced earlier, regarding conservation of motion constants and execution times. Among others, we explain how the Weierstrass functions can be used to make an efficient program in the case of a non-rotating black hole. In the general case, the Carter equations are by far the best integration method. Among other illustrations, we display the effect of the cosmological constant on shadows and accretion disks in Figures \ref{lambda_on_shadows} and \ref{lambda_on_accretion}. We finish by giving a simulation of the M87 black hole in Figures \ref{m87_pict} and \ref{blurred}.

\section{The Kerr--Newman--(anti) de Sitter spacetime}\label{reminders}

\subsection{Reminders on Einstein's field equation and electromagnetic stress-energy tensor}
We start by recalling some very general facts and notation on Lorentzian manifolds and Einstein--Maxwell equations.

Consider a Lorentzian 4-manifold $(\mathcal{M},\mathbf{g})$ and let $\mathbf{R}$ be its Ricci tensor. Let $R:=\mathrm{tr}_{\mathbf{g}}(\mathbf{R})$ be the Ricci (scalar) curvature and $\mathbf{G}:=\mathbf{R}-\tfrac{1}{2}R\mathbf{g}$ be the associated \emph{Einstein tensor}. Then, the \emph{Einstein field equation (EFE)} is the following equality
\begin{equation}\label{intrin_EFE}
\mathbf{G}+\Lambda\mathbf{g}=\kappa\mathbf{T},
\end{equation}
where $\mathbf{T}$ is a symmetric 2-tensor on $\mathcal{M}$, $\kappa:=8\pi G/c^4$ is the \emph{Einstein gravitational constant} and $\Lambda\in\R$ is called the \emph{cosmological constant}. In this case, notice that the Bianchi identity implies that the covariant derivative of $\mathbf{T}$ vanishes. If $(x^\mu)_{\mu=0,1,2,3}$ is a (local) coordinate frame on $\mathcal{M}$, then the (EFE) can be (locally) rewritten as
\begin{equation}\label{coor_EFE}
R_{\mu\nu}-\frac{1}{2}Rg_{\mu\nu}+\Lambda g_{\mu\nu}=\frac{8\pi G}{c^4}T_{\mu\nu},
\end{equation}
with $R=g^{\mu\nu}R_{\mu\nu}$ (using Einstein's summation convention), the matrix $(g^{\mu\nu})_{\mu,\nu}$ being the inverse of the Gram matrix $\mathrm{Mat}_{x^\mu}(\mathbf{g})=(g(\partial_{x^\mu},\partial_{x^\nu}))=:(g_{\mu\nu})$. To simplify the notation, we also denote partial derivatives (resp. covariant derivatives) using a comma (resp. a semicolon) low index. \textbf{In the following, we choose the signature $(-,+,+,+)$ for Lorentzian metrics and we use natural (Stoney) units where $G=c=4\pi\epsilon_0=1$.} Notice that this implies that $\mu_0=4\pi$.

Recall that given a metric $\mathbf{g}=(g_{\mu\nu})$, a divergence-free contravariant vector $\mathbf{J}=(J^\mu)$ (i.e. such that ${J^\mu}_{;\mu}:=\nabla_\mu J^\mu=0$) and a totally antisymmetric 2-tensor $\mathbf{F}=(F_{\mu\nu})$, seen as a differentiable 2-form $\mathbf{F}=\tfrac{1}{2}F_{\mu\nu}\mathrm{d}x^\mu\wedge\mathrm{d}x^\nu$, we say that $\mathbf{F}$ satisfies the \emph{covariant Maxwell equations} if
\begin{equation}\label{ME}\tag{ME}
\mathrm{d}\mathbf{F}=0=\mathrm{d}{}^{\ast}\!{\mathbf{F}}+\mu_0{}^{\ast}\!{\mathbf{J}},
\end{equation}
where ${}^{\ast}\!{(-)}$ denotes the Hodge dual. In this case the vector $J^\mu$ is called the \emph{current 1-form} and $\mathbf{F}$ is the \emph{electromagnetic field tensor}. We can translate these equations in coordinates:
\[\left\{\begin{array}{l}
F_{\mu\nu,\lambda}+F_{\nu\lambda,\mu}+F_{\lambda\mu,\nu}=0,\\[.5em]
{F^{\mu\nu}}_{;\mu}=-4\pi J^\nu.\end{array}\right.\]
Moreover, on a contractible open subset of $\mathcal{M}$, the Poincar\'e lemma ensures the existence of a 1-form $\mathbf{A}=A_\mu\mathrm{d}x^\mu$, called the \emph{electromagnetic vector potential}, such that $\mathbf{F}=\mathrm{d}\mathbf{A}$. In coordinates, this reads
\[F_{\mu\nu}=A_{\nu,\mu}-A_{\mu,\nu}=A_{\nu;\mu}-A_{\mu;\nu}.\]
Finally, the \emph{electromagnetic stress-energy tensor} $\mathbf{T}$ associated to the field $\mathbf{F}$ is given in local coordinates by\footnote{to be precise, this expression is valid only once a gauge where $A_\sigma J^\sigma=0$ has been chosen, but we don't need to be that subtle as we are interested only in vacuum solutions.}
\[T_{\mu\nu}=\frac{1}{\mu_0}\left(g^{\alpha\beta}F_{\alpha\mu}F_{\beta\nu}-\frac{1}{4}g_{\mu\nu}F_{\alpha\beta}F^{\alpha\beta}\right).\]
Then, the resulting EFE is called the \emph{Einstein--Maxwell equation (EME)} associated to $(\mathbf{g},\mathbf{J},\mathbf{F})$. In the case where $\mathbf{J}=0$, we call it the \emph{electro-vacuum} Einstein-Maxwell equation.

\subsection{The Kerr--Newman--(anti) de Sitter solution}
We now recall what the Kerr--Newman--de Sitter metric is. For more details, see \cite[\S 1.1]{shanjit}, \cite[\S 5, 6]{red-carter} or \cite[\S II]{gibbons-hawking}.

Consider the manifold $\mathcal{M}:=\R^2\times\Sph^2$, equipped with \emph{Boyer-Lindquist coordinates} $(t,r,\theta,\phi)$, where $(\theta,\phi)\in[0,\pi]\times[0,2\pi[$ describe spherical coordinates on $\Sph^2$. Fix four constants $\Lambda,M,Q,J\in\R\times\R_+^3$ and define $a:=J/M$ if $M\ne 0$ and $a:=J$ otherwise. Let $\lambda:=\Lambda/3$ and $\chi:=1+\lambda a^2$ (we assume $\chi\ne0$) and consider the following globally defined functions
\[\Sigma:=r^2+a^2\cos^2\theta,~~\Delta_r:=(1-\lambda r^2)(r^2+a^2)-2Mr+Q^2,~~\Delta_\theta:=1+\lambda a^2\cos^2\theta.\]
The \emph{Kerr--Newman--(anti)de Sitter (KNdS) metric} is the metric defined on the open subset $\{\Sigma\Delta_r\Delta_\theta\sin\theta\ne0\}$ by the line element
\begin{equation}\label{knds}\tag{KNdS}
\mathrm{d}s^2=-\frac{\Delta_r}{\chi^2\Sigma}(\mathrm{d}t-a\sin^2\theta\mathrm{d}\phi)^2+\frac{\Delta_\theta\sin^2\theta}{\chi^2\Sigma}(a\mathrm{d}t-(r^2+a^2)\mathrm{d}\phi)^2+\Sigma\left(\frac{\mathrm{d}r^2}{\Delta_r}+\frac{\mathrm{d}\theta^2}{\Delta_\theta}\right).
\end{equation}
It may be convenient to have the metric written in terms of matrices. Ordering the coordinates as $(t,r,\theta,\phi)$, we have
\[\mathbf{g}=\begin{pmatrix}\frac{a^2\sin^2\theta\Delta_\theta-\Delta_r}{\chi^2\Sigma} & 0 & 0 & \frac{a\sin^2\theta(\Delta_r-(r^2+a^2)\Delta_\theta)}{\chi^2\Sigma} \\ 0 & \frac{\Sigma}{\Delta_r} & 0 & 0 \\ 0 & 0 & \frac{\Sigma}{\Delta_\theta} & 0 \\ \frac{a\sin^2\theta(\Delta_r-(r^2+a^2)\Delta_\theta)}{\chi^2\Sigma} & 0 & 0 & \frac{\sin^2\theta((r^2+a^2)^2\Delta_\theta-a^2\sin^2\theta\Delta_r)}{\chi^2\Sigma}\end{pmatrix}\]
and
\[\mathbf{g}^{-1}=\begin{pmatrix}\frac{\chi^2(a^2\sin^2\theta\Delta_r-(r^2+a^2)^2\Delta_\theta)}{\Sigma\Delta_r\Delta_\theta} & 0 & 0 & \frac{a\chi^2(\Delta_r-(r^2+a^2)\Delta_\theta)}{\Sigma\Delta_r\Delta_\theta} \\ 0 & \frac{\Delta_r}{\Sigma} & 0 & 0 \\ 0 & 0 & \frac{\Delta_\theta}{\Sigma} & 0 \\ \frac{a\chi^2(\Delta_r-(r^2+a^2)\Delta_\theta)}{\Sigma\Delta_r\Delta_\theta} & 0 & 0 & \frac{\chi^2(\Delta_r-a^2\sin^2\theta\Delta_\theta)}{\Sigma\Delta_r\Delta_\theta\sin^2\theta}\end{pmatrix}.\]

The following result is well-known (see for instance \cite[\S 6]{red-carter} or \cite{BL}) and is recalled here for completeness (for a detailed proof, see Appendices \ref{some_proofs_1} and \ref{proofEFE}):
\begin{theo}\label{EFE_ok}
Assume that $\chi>0$ and consider the electromagnetic vector potential $\mathbf{A}=A_\mu\mathrm{d}x^\mu$ defined on the open submanifold $\mathcal{U}:=\mathcal{M}\setminus\{\Sigma=0\}$ by
\[\mathbf{A}=\frac{Qr}{\chi\Sigma}(\mathrm{d}t-a\sin^2\theta\mathrm{d}\phi).\]
Then the metric (\ref{knds}) maximally extends to a smooth Lorentzian metric on $\mathcal{U}$ and the electromagnetic field $\mathbf{F}:=\mathrm{d}\mathbf{A}$ verifies the associated vacuum Maxwell equations. Moreover, the KNdS metric solves the electro-vacuum Einstein--Maxwell equation on $\mathcal{U}$.
\end{theo}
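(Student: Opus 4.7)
The plan is to address the three assertions in turn: smooth extension of $\mathbf{g}$ to $\mathcal{U}$, the vacuum Maxwell equations for $\mathbf{F}=\mathrm{d}\mathbf{A}$, and the electro-vacuum Einstein--Maxwell equation. For the extension, inspection of (\ref{knds}) shows that $\mathbf{g}$ appears ill-defined where $\Sigma$, $\Delta_r$, $\Delta_\theta$ or $\sin\theta$ vanishes, but only $\Sigma=0$ is a genuine curvature singularity. The hypothesis $\chi>0$ immediately yields $\Delta_\theta\geq\min(1,\chi)>0$ everywhere, so that factor is harmless. The axis $\sin\theta=0$ is the standard coordinate singularity of oblate spheroidal coordinates and can be removed by passing to Kerr--Schild-type Cartesian coordinates, in which $\mathbf{g}$ becomes manifestly analytic at the axis. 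For the horizons $\{\Delta_r=0\}$, I would introduce generalized Eddington--Finkelstein coordinates $(v,r,\theta,\psi)$ via
\[\mathrm{d}v=\mathrm{d}t+\frac{\chi(r^2+a^2)}{\Delta_r}\mathrm{d}r,\qquad \mathrm{d}\psi=\mathrm{d}\phi+\frac{a\chi}{\Delta_r}\mathrm{d}r;\]
the $1/\Delta_r$ poles in the resulting metric components then cancel, and smooth extension across each root of $\Delta_r$ follows.

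For Maxwell, $\mathrm{d}\mathbf{F}=0$ is automatic from $\mathbf{F}=\mathrm{d}\mathbf{A}$. For the second equation (with $\mathbf{J}=0$), I would compute $\mathbf{F}$ explicitly from the given $\mathbf{A}$, form ${}^{\ast}\!\mathbf{F}$ using the volume element $\sqrt{-\det\mathbf{g}}=\Sigma\sin\theta/\chi$ read off the matrix form of $\mathbf{g}$, and verify $\mathrm{d}{}^{\ast}\!\mathbf{F}=0$ by direct differentiation. The calculation is shortened by the observation that ${}^{\ast}\!\mathbf{F}$ has the same structural form as $\mathbf{F}$ under the discrete $r\leftrightarrow ia\cos\theta$ duality characteristic of the Kerr--Newman family.

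For the electro-vacuum EME, the cleanest route is an orthonormal tetrad adapted to the block structure of $\mathbf{g}$,
\[e^0=\tfrac{\sqrt{\Delta_r}}{\chi\sqrt{\Sigma}}(\mathrm{d}t-a\sin^2\theta\,\mathrm{d}\phi),~~e^1=\sqrt{\tfrac{\Sigma}{\Delta_r}}\,\mathrm{d}r,~~e^2=\sqrt{\tfrac{\Sigma}{\Delta_\theta}}\,\mathrm{d}\theta,~~e^3=\tfrac{\sqrt{\Delta_\theta}\sin\theta}{\chi\sqrt{\Sigma}}(a\,\mathrm{d}t-(r^2+a^2)\mathrm{d}\phi),\]
so that $\mathbf{g}=-e^0\otimes e^0+\sum_{i=1}^3 e^i\otimes e^i$. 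Then apply Cartan's structure equations: solve $\mathrm{d}e^a+\omega^a{}_b\wedge e^b=0$ for the connection 1-forms, compute the curvature 2-forms $\Omega^a{}_b=\mathrm{d}\omega^a{}_b+\omega^a{}_c\wedge\omega^c{}_b$, read off the Riemann components via $\Omega^a{}_b=\tfrac12 R^a{}_{bcd}\,e^c\wedge e^d$, contract to get $R_{ab}$ and $R$, and finally compare with $8\pi T_{ab}-\Lambda\eta_{ab}+\tfrac12 R\eta_{ab}$, where $T_{ab}$ is computed in the same tetrad from the components of $\mathbf{F}$.

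The main obstacle is the sheer length and bookkeeping of the tetrad calculation; I would rely on the Petrov type D structure of KNdS to cut it down. In the tetrad above, $\mathbf{F}$ has only $F_{01}$ and $F_{23}$ nonzero, so $T_{ab}$ is automatically diagonal with the electromagnetic pattern $(-T,-T,T,T)$, and most Ricci components must vanish by symmetry. The cosmological contribution should emerge from algebraic identities satisfied by $\Delta_r$, $\Delta_\theta$ and their $r$- and $\theta$-derivatives; tracking these identities carefully is what separates the cosmological and electromagnetic pieces on the right-hand side of (\ref{intrin_EFE}).
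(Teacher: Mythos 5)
Your overall strategy is sound and, for the first two assertions, essentially coincides with the paper's: the paper also passes to ingoing Kerr coordinates $u=t+T(r)$, $\overline{\phi}=\phi+\Phi(r)$ with $T'=\chi(r^2+a^2)/\Delta_r$ and $\Phi'=a\chi/\Delta_r$ (your Eddington--Finkelstein substitution) to cross $\{\Delta_r=0\}$, then to Cartesian Kerr--Schild coordinates to handle the axis, and it verifies the source-free Maxwell equation via $\nabla_\mu F^{\mu\nu}=(\sqrt{-g})^{-1}\partial_\mu(\sqrt{-g}F^{\mu\nu})=0$, which is the coordinate version of your $\mathrm{d}{}^{\ast}\!\mathbf{F}=0$ check. (Minor slip: $\sqrt{-\det\mathbf{g}}=\chi^{-2}\Sigma\sin\theta$, not $\chi^{-1}\Sigma\sin\theta$; harmless since $\chi$ is constant.) Two points deserve attention. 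First, after changing coordinates you must still check non-degeneracy on the axis: smoothness of the components alone does not give a Lorentzian metric there, and the paper does this by computing $\det((g_{\mu\nu})_{\rm KS})=-\chi^{-4}\neq0$. Second --- and this is the one genuine gap --- the word ``maximally'' in the statement is something you assert rather than prove: you declare that ``only $\Sigma=0$ is a genuine curvature singularity'' with no supporting argument. The paper settles this by quoting Kraniotis's closed form for the Kretschmann scalar $K=R_{\alpha\beta\gamma\delta}R^{\alpha\beta\gamma\delta}$, which is singular precisely on $\{\Sigma=0\}$, so that no further extension is possible. Without some such curvature-invariant computation the maximality claim is unsupported.

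For the Einstein--Maxwell equation your route genuinely differs from the paper's. The paper computes all Christoffel symbols and Ricci components in Boyer--Lindquist coordinates and isolates the key algebraic identity $R_{\mu\nu}=Q^2R^{\rm ch}_{\mu\nu}+\Lambda g_{\mu\nu}$ with $R^{\rm ch}$ traceless, whence $R=4\Lambda$, $G_{\mu\nu}+\Lambda g_{\mu\nu}=Q^2R^{\rm ch}_{\mu\nu}$, and finally $T_{\mu\nu}=\tfrac{Q^2}{8\pi}R^{\rm ch}_{\mu\nu}$. Your orthonormal-tetrad/Cartan computation reaches the same endpoint and is a legitimate, arguably better-organized alternative: in your frame $\mathbf{F}$ does reduce to its $e^0\wedge e^1$ and $e^2\wedge e^3$ components, which forces $T_{ab}$ into the diagonal electromagnetic pattern and kills most Ricci components a priori. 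Note only that your tetrad is real only where $\Delta_r>0$, so you should add the (standard, and implicitly used by the paper as well) remark that it suffices to verify the tensorial identity on a dense open subset and conclude by continuity.
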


\section{Several formulations and numerical schemes for the geodesic equation}\label{formulations}

Here, we first recall two of the main formulations of the geodesic equation namely, the Euler--Lagrange and Hamilton equations. Then, we review some of the general elementary symplectic integrators that can be used.

Throughout this section, we consider a geodesic $\gamma=(t,r,\theta,\phi)$ in the KNdS spacetime, corresponding to the trajectory of a test particle with rest mass $\mu\in\{-1,0\}$ and electric charge $e$. Recall that $\gamma$ satisfies the \emph{geodesic equation}
\begin{equation}\label{GE}
\ddot{\gamma}^\mu+{\Gamma^\mu}_{\alpha\beta}\dot{\gamma}^\alpha\dot{\gamma}^\beta=e{F^\mu}_{\alpha}\dot{\gamma}^\alpha,
\end{equation}
where ${\Gamma^\mu}_{\alpha\beta}=g^{\mu\nu}\Gamma_{\nu\alpha\beta}:=\tfrac{1}{2}g^{\mu\nu}(g_{\nu\beta,\alpha}+g_{\nu\alpha,\beta}-g_{\alpha\beta,\nu})$ are the Christoffel symbols and ${F^\mu}_{\alpha}=g^{\mu\nu}F_{\nu\alpha}$ is the electromagnetic tensor (in mixed form). We assume that $\gamma$ is a maximal solution of this equation, defined on an open interval $I\subset\R$, say, with affine parameter $\ell\in I$ (the dot of course represents the derivative with respect to the affine parameter).

\subsection{Lagrangian and Hamiltonian formalisms}
Consider the relativistic \emph{Lagrangian} $\mathcal{L} : T\mathcal{M}\to\R$, defined by
\[\mathcal{L}(\gamma,\dot{\gamma}):=\tfrac{1}{2}g_{\mu\nu}\dot{\gamma}^\mu\dot{\gamma}^\nu+eA_\mu\dot{\gamma}^\mu,\]
as well as the related \emph{action integral}
\[S:=\int \mathcal{L}(\gamma,\dot{\gamma})\mathrm{d}\ell,\]
where we integrate on a compact sub-interval of $I$. Hamilton's principle asserts that $\gamma$ is a stationary point
of the action $S$, and this is equivalent to the \emph{Euler--Lagrange equation}
\begin{equation}\label{EL}
\frac{\mathrm{d}}{\mathrm{d}\ell}\left(\frac{\partial\mathcal{L}}{\partial\dot{\gamma}}\right)=\frac{\partial\mathcal{L}}{\partial\gamma}.
\end{equation}
Developing, we find that for all $\mu\in\{0,1,2,3\}$,
\[g_{\mu\nu}\ddot{\gamma}^\nu+g_{\mu\alpha,\beta}\dot{\gamma}^\alpha\dot{\gamma}^\beta+eA_{\mu,\alpha}\dot{\gamma}^\alpha=\frac{\mathrm{d}}{\mathrm{d}\ell}\left(g_{\mu\alpha}\dot{\gamma}^\alpha+eA_\mu\right)=\frac{1}{2}g_{\alpha\beta,\mu}\dot{\gamma}^\alpha\dot{\gamma}^\beta+eA_{\alpha,\mu}\dot{\gamma}^\alpha\] 
and rearranging this yields
\[g_{\mu\nu}\ddot{\gamma}^\nu+\frac{1}{2}(2g_{\mu\alpha,\beta}-g_{\alpha\beta,\mu})\dot{\gamma}^\alpha\dot{\gamma}^\beta+e(A_{\mu,\alpha}-A_{\alpha,\mu})\dot{\gamma}^\alpha=0,\]
or, equivalently,
\begin{equation}\label{ELe}
\ddot{\gamma}^\mu+{}^{'}{\Gamma^\mu}_{\alpha\beta}\dot{\gamma}^\alpha\dot{\gamma}^\beta-e{F^\mu}_\alpha\dot{\gamma}^\alpha=0,
\end{equation}
where
\[{}^{'}{\Gamma^\mu}_{\alpha\beta}=g^{\mu\nu}\left(g_{\alpha\nu,\beta}-\tfrac{1}{2}g_{\alpha\beta,\nu}\right).\]
This is indeed equivalent to (\ref{GE}) since the difference ${}^{'}{\Gamma^\mu}_{\alpha\beta}-{\Gamma^\mu}_{\alpha\beta}=\tfrac{1}{2}g^{\mu\nu}(g_{\nu\alpha,\beta}-g_{\nu\beta,\alpha})$ is anti-symmetric in the indices $\alpha$ and $\beta$. However, we implement\footnote{For the Euler--Lagrange equation and Hamiltonian methods, we assume that $e=0$ for simplicity.} the geodesic equation in Euler--Lagrange form, as it requires a bit less heavy calculations than the genuine Christoffel symbols. To solve the equations (\ref{ELe}), we simply use the internal solver from Scilab that implements Adams methods (see \cite{hindmarsh}).

Instead of the Lagrangian, one may look at the \emph{Hamiltonian}. First, we introduce the \emph{conjugate momenta}:
\[p_\mu:=g_{\mu\nu}\dot{\gamma}^\nu+eA_\mu.\]
The Hamiltonian $\mathcal{H} : T^*\mathcal{M}\to\R$ is then defined as the Legendre transform of $\mathcal{L}$, namely
\[\mathcal{H}(\gamma,p):=\tfrac{1}{2}g^{\mu\nu}(p_\mu-eA_\mu)(p_\nu-eA_\nu)=p_\mu\dot{\gamma}^\mu-\mathcal{L}(\gamma,\dot{\gamma}).\]
Then, the Euler--Lagrange equation is equivalent to \emph{Hamilton's equations}
\begin{equation}\label{H}
\left\{\begin{array}{l}
\dfrac{\mathrm{d}\gamma}{\mathrm{d}\ell}=\dfrac{\partial\mathcal{H}}{\partial p},\\[1em]
\dfrac{\mathrm{d}p}{\mathrm{d}\ell}=-\dfrac{\partial\mathcal{H}}{\partial\gamma}.\end{array}\right.
\end{equation}
Unravelling this, we obtain the following system of order 1
\begin{equation}\label{He}
\left\{\begin{array}{l}
\dot{\gamma}^\mu=g^{\mu\alpha}(p_\alpha-eA_\alpha),\\[.5em]
\dot{p}_\mu=\frac{e}{2}g^{\alpha\beta}(A_{\alpha,\mu}(p_\alpha-eA_\alpha)+A_{\beta,\mu}(p_\beta-eA_\beta))-\frac{1}{2}{g^{\alpha\beta}}_{,\mu}(p_\alpha-eA_\alpha)(p_\beta-eA_\beta).\end{array}\right.
\end{equation}
In the case of a particle without charge ($e=0$), this reduces to
\[\left\{\begin{array}{l}
\dot{\gamma}^\mu=g^{\mu\alpha}p_\alpha,\\[.5em]
\dot{p}_\mu=-\frac{1}{2}{g^{\alpha\beta}}_{,\mu}p_\alpha p_\beta.\end{array}\right.\]
As we shall see in the comparison section, the equations are a bit faster to integrate (with the Adams solver from \cite{hindmarsh}) than the Euler--Lagrange ones. Moreover, they are more efficient in preserving the Hamiltonian.

\subsection{Symplectic schemes for Hamilton's equations}\label{symp_schemes}
In view of integrating the system (\ref{He}), we may use general algorithms that apply to any Hamiltonian $\mathcal{H} : T^*\mathcal{X}\to\R$, called \emph{symplectic integrators}. A detailed exposition can be found in \cite{feng-qin} and \cite{hairer}. See also \cite{sanz-serna-calvo}.

First, we remind some basics of symplectic geometry (see \cite[\S 3.1]{feng-qin}). If $q=(q^1,\dotsc,q^N)$ are local coordinates on an $N$-manifold $\mathcal{X}$ and $p=(p_1,\dotsc,p_N)$ the associated coordinates on $T^*_q\mathcal{X}$, then $(q,p)$ are local coordinates on $T^*\mathcal{X}$ and we may define a symplectic form on it:
\[\omega:=\mathrm{d}p\wedge\mathrm{d}q=\mathrm{d}p_i\wedge\mathrm{d}q^i.\]
If $\mathcal{H} : T^*\mathcal{X}\to\R$ is a smooth function, then there exists a vector field $X_{\mathcal{H}}\in\Gamma(T(T^*\mathcal{X}))$ on $T^*\mathcal{X}$ such that $\omega(X_{\mathcal{H}},-)=\mathrm{d}\mathcal{H}$. Then, given $(q,p)\in T^*\mathcal{X}$, there is a unique maximal curve $\gamma_{q,p} : ]-\varepsilon,\varepsilon[\to T^*\mathcal{X}$ such that
\[\left\{\begin{array}{l}
\gamma_{q,p}(0)=(q,p),\\[.5em]
\gamma_{q,p}'=X_{\mathcal{H}}\circ\gamma_{q,p}.\end{array}\right.\]
Then, the \emph{Hamiltonian flow} $\Phi_s$ is defined as $\Phi_s(q,p):=\gamma_{q,p}(s)$, when this makes sense. Citing \cite[\S 3.2.1, Theorem 2.4]{feng-qin}, this flow is \emph{symplectic}, meaning that the pull-back $\Phi_s^*\omega=\omega$. In other words, if $\Phi'_s(q,p)$ denotes the Jacobian $(\partial\Phi_s/\partial(q,p))$ of $\Phi_s$, then we have
\[{}^{t}{\Phi'_s(q,p)}\cdot J\cdot \Phi'_s(q,p)=J,~~\text{where}~~J:=\begin{pmatrix}0 & I_n \\ -I_n & 0\end{pmatrix}\]
Roughly, this means that Hamilton's equations (\ref{H}) (or rather the flow of $\mathcal{H}$) preserves the symplectic structure on $T^*\mathcal{X}$. As we would like to solve the system numerically, it would be nice to have schemes that also preserve this geometric structure.

Consider a smooth curve $\xi : s\mapsto \xi(s)=(q(s),p(s))$ satisfying Hamilton's equations
\begin{equation}\label{GH}
\left\{\begin{array}{l}
\dot{q}=\partial_p\mathcal{H}(q,p),\\[.5em]
\dot{p}=-\partial_q\mathcal{H}(q,p).\end{array}\right.
\end{equation}
A one-step numerical scheme with step $h\ne0$ can be represented by its \emph{numerical flow} $\Phi_h : (q_n,p_n)\mapsto(q_{n+1},p_{n+1})$. As for the Hamiltonian, this flow reflects the geometric properties of the scheme.

\begin{definition}\label{scheme_flows}
\begin{enumerate}
Define the involution $\psi : (q,p)\mapsto(q,-p)$ on $T^*\mathcal{X}$ and consider a numerical scheme with flow $\Phi_h : (q_n,p_n)\to(q_{n+1},p_{n+1})$.
\item The Hamiltonian $\mathcal{H}$ is said to be \emph{time-reversible} if its flow $\Phi_s$ satisfies
\[\psi\circ\Phi_s\circ\psi=\Phi_{-s}.\]
In other words, this means that $(\widehat{q},\widehat{p})=\Phi_s(q,p)$ iff $\Phi_s(\widehat{q},-\widehat{p})=(q,-p)$.
\item Similarly, if $\mathcal{H}$ is time-reversible, then the scheme is \emph{reversible} if its flow satisfies
\[\psi\circ\Phi_h\circ\psi=\Phi_{-h}.\]
\item The scheme is \emph{symmetric} if we have $\Phi_h^{-1}=\Phi_{-h}$.
\item Finally, the scheme is \emph{symplectic} if its flow is, i.e. if
\[{}^{t}{\Phi_h'(q,p)}\cdot J\cdot \Phi_h'(q,p)=J.\]
\end{enumerate}
\end{definition}

\begin{rem}
To say that $\mathcal{H}$ is reversible is equivalent to the following conditions
\[\partial_p\mathcal{H}(q,-p)=-\partial_p\mathcal{H}(q,p)~~\text{and}~~\partial_q\mathcal{H}(q,-p)=\partial_q\mathcal{H}(q,p).\]
From this we see that, for instance, the Hamiltonian of an uncharged particle in the KNdS spacetime is reversible.
\end{rem}

We now give the symplectic schemes we have implemented. As is well-known, explicit schemes are unstable and the approximations they produce may blow-up, especially with problems like our geodesic one, where some (coordinate) singularities appear in the metric. However, the (velocity-)Verlet is a relatively good explicit alternative for our setting. With that being said, it turns out that all the schemes we present here do blow-up near the axis of rotation $\{\sin\theta=0\}\subset\mathcal{M}$.

The simplest methods are the \emph{semi-implicit Euler schemes}. These are given as follows:
\begin{algorithm}
\caption{$q$-implicit Euler scheme}\label{SEq}
\begin{algorithmic}[1]
\Require $h>0$, $(q_0,p_0)$
\For{$n=0,\dotsc,$}
	\State $q_{n+1}=q_n+h\partial_p\mathcal{H}(q_{n+1},p_n)$
	\vspace{2mm}
	\State $p_{n+1}=p_n-h\partial_q\mathcal{H}(q_{n+1},p_n)$
\EndFor
\end{algorithmic}
\end{algorithm}
\begin{algorithm}
\caption{$p$-implicit Euler scheme}\label{SEp}
\begin{algorithmic}[1]
\Require $h>0$, $(q_0,p_0)$
\For{$n=0,\dotsc,$}
	\State $p_{n+1}=p_n-h\partial_q\mathcal{H}(q_n,p_{n+1})$
	\vspace{2mm}
	\State $q_{n+1}=q_n+h\partial_p\mathcal{H}(q_n,p_{n+1})$
\EndFor
\end{algorithmic}
\end{algorithm}
As we shall see later, the $p$-implicit method is roughly twice as fast as the $q$-implicit one in our setting. This comes from the fact that our (uncharged) Hamiltonian $\mathcal{H}=\tfrac{1}{2}g^{\mu\nu}(q)p_\mu p_\nu$ is way easier to differentiate with respect to $p$ (it is quadratic in $p$) than with respect to $q$ and thus the equation $p_{n+1}=p_n-h\partial_q\mathcal{H}(q_n,p_{n+1})$ is more easily solved than the equation $q_{n+1}=q_n+h\partial_p\mathcal{H}(q_{n+1},p_n)$.

A relatively strong explicit method is the \emph{velocity Verlet} (or Verlet--leapfrog) scheme. As in \cite[\S 3.3]{bou-rabee-sanz-serna}, the scheme with step size $h$ is written in Algorithm \ref{VV}.
\begin{algorithm}
\caption{Velocity Verlet scheme}\label{VV}
\begin{algorithmic}[1]
\Require $h>0$, $(q_0,p_0)$
\For{$n=0,\dotsc,$}
	\State $p_{n+\frac{1}{2}}=p_n-\frac{h}{2}\partial_q\mathcal{H}(q_n,p_n)$
	\vspace{2mm}
	\State $q_{n+1}=q_n+h\partial_p\mathcal{H}\left(q_n,p_{n+\frac{1}{2}}\right)$
	\vspace{2mm}
	\State $p_{n+1}=p_{n+\frac{1}{2}}-\frac{h}{2}\partial_q\mathcal{H}(q_{n+1},p_n)$
\EndFor
\end{algorithmic}
\end{algorithm}

Following \cite[\S 1.8, (1.25)]{hairer}, a more stable method is the \emph{St\"ormer--Verlet scheme}, detailed in Algorithm \ref{SV} (there's a dual version of it, roughly by exchanging $q$ and $p$ and the signs accordingly).
\begin{algorithm}[h!]
\caption{St\"ormer--Verlet scheme}\label{SV}
\begin{algorithmic}[1]
\Require $h>0$, $(q_0,p_0)$
\For{$n=0,\dotsc,$}
	\State $q_{n+\frac{1}{2}}=q_n+\frac{h}{2}\partial_p\mathcal{H}\left(q_{n+\frac{1}{2}},p_n\right)$
	\vspace{2mm}
	\State $p_{n+1}=p_n-\frac{h}{2}\left(\partial_q\mathcal{H}\left(q_{n+\frac{1}{2}},p_n\right)+\partial_q\mathcal{H}\left(q_{n+\frac{1}{2}},p_{n+1}\right)\right)$
	\vspace{2mm}
	\State $q_{n+1}=q_{n+\frac{1}{2}}+\frac{h}{2}\partial_p\mathcal{H}\left(q_{n+\frac{1}{2}},p_{n+1}\right)$
\EndFor
\end{algorithmic}
\end{algorithm}

Because of its stability, this is the most efficient method, but it requires much more time to numerically solve the implicit equation for $q_{n+1/2}$.

\normalsize{We} may summarize the properties of the above schemes in the following result:
\begin{theo}[\cite{hairer}, \cite{grmonty}]\label{schemes_props}
The Euler schemes are of order 1 and symplectic but not symmetric (inverting the flow exchanges the two schemes) and not reversible (time-reversion takes each one to its explicit analogue).

The Verlet scheme is symplectic, reversible, symmetric and of order 2. 

Finally, the St\"ormer--Verlet scheme is symplectic, reversible, symmetric and of order 2 as well, but it is also stable.
\end{theo}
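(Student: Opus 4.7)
The plan is to verify each property scheme-by-scheme, relying on a few general principles: symplecticity is tested either via the Jacobian identity $^t\Phi_h' J \Phi_h' = J$ or by realizing the scheme as a composition of elementary symplectic maps, while the order, symmetry and reversibility follow from direct algebraic manipulations of the defining equations and Taylor expansion against the exact Hamiltonian flow.

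I would first handle the two semi-implicit Euler schemes. Order 1 follows by Taylor-expanding the implicit updates of Algorithm \ref{SEq} (resp. Algorithm \ref{SEp}) around $(q_n, p_n)$ and matching the result with the exact solution of (\ref{GH}) through terms of order $h$. For symplecticity of Algorithm \ref{SEq}, I would differentiate the defining relations with respect to $(q_n, p_n)$, obtain a block linear system for the Jacobian $\Phi_h'$, and check $^t\Phi_h' J \Phi_h' = J$ using only the symmetry of the Hessian of $\mathcal{H}$; alternatively, one recognizes $q_n p_{n+1} + h \mathcal{H}(q_{n+1}, p_n)$ as a mixed generating function and concludes a priori. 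Non-symmetry is immediate: solving the two update relations for $(q_n, p_n)$ in terms of $(q_{n+1}, p_{n+1})$ yields precisely Algorithm \ref{SEp} with step $-h$, so $\Phi_h^{-1} \ne \Phi_{-h}$. Non-reversibility follows from computing $\psi \circ \Phi_h \circ \psi$ and observing that, once the signs of momenta are flipped, the $q$-implicit scheme becomes an explicit Euler step in $q$ followed by an explicit step in $p$, which is not $\Phi_{-h}$.

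For the velocity Verlet, I would write the update as a composition $\Phi_h = \Phi^{(p)}_{h/2} \circ \Phi^{(q)}_h \circ \Phi^{(p)}_{h/2}$, where each factor is a shear of $T^*\mathcal{X}$ (and therefore symplectic, either by direct Jacobian check or by realizing each as a time-$h$ flow of a separable Hamiltonian). Symplecticity of $\Phi_h$ then follows from the fact that the group of symplectomorphisms is closed under composition. Symmetry is read off the palindromic structure: reversing the order of the factors and sending $h \mapsto -h$ gives the same map back, so $\Phi_h^{-1} = \Phi_{-h}$. Reversibility reduces, via the palindromic structure, to $\psi \circ \Phi^{(p)}_{h/2} \circ \psi = \Phi^{(p)}_{-h/2}$ and $\psi \circ \Phi^{(q)}_{h} \circ \psi = \Phi^{(q)}_{-h}$, both of which follow from the oddness of $\partial_p\mathcal{H}$ and the evenness of $\partial_q\mathcal{H}$ in $p$ noted in the Remark preceding the theorem. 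Order 2 then comes from a Baker--Campbell--Hausdorff (or brute Taylor) computation: the $h^2$ error terms in a symmetric splitting cancel automatically. The St\"ormer--Verlet scheme is treated identically, the only change being that the composition is $\Phi^{(q)}_{h/2} \circ \Phi^{(p)}_{h} \circ \Phi^{(q)}_{h/2}$ (the ``opposite'' palindrome), so the same arguments give symplecticity, symmetry, reversibility and order 2. The extra stability claim I would establish by testing the scheme on a quadratic Hamiltonian $\mathcal{H} = \tfrac{1}{2}(p^2 + \omega^2 q^2)$: the resulting amplification matrix has eigenvalues on the unit circle for a wider range of $\omega h$ than the explicit velocity Verlet, which is the standard linear-stability criterion.

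The main obstacle is the bookkeeping in the Taylor expansions establishing order 2 and, for the implicit steps of St\"ormer--Verlet, the local solvability of the fixed-point equation for $q_{n+1/2}$ (needed to make sense of $\Phi_h$ as a map). I would dispatch the first by invoking the general symmetry-implies-even-order result for palindromic compositions (cf.\ \cite{hairer}) rather than grinding out coefficients, and the second by an application of the implicit function theorem around $h = 0$. All the remaining calculations are routine and I would delegate them to \cite{hairer} and \cite{feng-qin}.
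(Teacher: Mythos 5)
The paper offers no proof of Theorem \ref{schemes_props} at all --- it is quoted from \cite{hairer} and \cite{grmonty} --- so your sketch is being measured against the literature rather than an internal argument. Most of it is the standard route and is fine (mixed generating functions or the Jacobian identity for the Euler schemes, the adjoint computation for non-symmetry, ``symmetric implies even order'', the linear test equation for stability). But the symplecticity argument for the two Verlet schemes has a genuine gap. You decompose them into kicks and drifts and assert that each factor is a symplectic shear; that is true only for \emph{separable} Hamiltonians. The Hamiltonian relevant here, $\mathcal{H}=\tfrac{1}{2}g^{\mu\nu}(q)p_\mu p_\nu$, is not separable: $\partial_q\mathcal{H}$ depends on $p$, so the kick $(q,p)\mapsto\bigl(q,\,p-\tfrac{h}{2}\partial_q\mathcal{H}(q,p)\bigr)$ has Jacobian with lower-right block $I-\tfrac{h}{2}\,\partial^2\mathcal{H}/\partial p\,\partial q\neq I$ and is not symplectic. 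The correct general argument for Algorithm \ref{SV} is different: it is the composition of the $q$-implicit Euler half-step (Algorithm \ref{SEq} with step $h/2$) followed by the $p$-implicit Euler half-step (Algorithm \ref{SEp} with step $h/2$); each factor is symplectic via its generating function, and the two are mutually adjoint, which yields symplecticity, symmetry and order $2$ in one stroke. For the fully explicit velocity Verlet of Algorithm \ref{VV} no such repair exists --- it is genuinely not symplectic for non-separable $\mathcal{H}$ --- so your argument cannot be completed there; the claim is borrowed from the separable setting.

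The non-reversibility argument for the Euler schemes also fails as written: conjugating by $\psi:(q,p)\mapsto(q,-p)$ does not make the implicit step explicit, since implicitness in $q_{n+1}$ is unaffected by flipping the sign of $p$. In fact, using $\partial_p\mathcal{H}(q,-p)=-\partial_p\mathcal{H}(q,p)$ and $\partial_q\mathcal{H}(q,-p)=\partial_q\mathcal{H}(q,p)$, a direct computation gives $\psi\circ\Phi_h\circ\psi=\Phi_{-h}$ for both semi-implicit Euler schemes, i.e.\ they satisfy the compatibility condition of Definition \ref{scheme_flows} verbatim. What genuinely fails is the stronger property $\psi\circ\Phi_h\circ\psi=\Phi_h^{-1}$, and that failure is an immediate corollary of your (correct) non-symmetry computation $\Phi_h^{-1}=\widetilde{\Phi}_{-h}$, where $\widetilde{\Phi}$ denotes the other Euler scheme. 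You need to state which notion of reversibility you are disproving and route the argument through non-symmetry, not through a spurious loss of implicitness.
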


\section{Motion constants and Carter's equations}\label{mot_con}

In this section, we take advantage of the form of the metric (in Boyer--Lindquist coordinates) and apply Carter's method \cite{carter} to derive the motion equations in the KNdS spacetime. More precisely, the Hamilton--Jacobi equation is separable and yields four constants of motion, making the geodesic equations separable. Then, we explain how to find the four constants from genuine initial conditions.

\subsection{Motion equations}
Consider the trajectory of charged particle, with electric charge $e\in\R$, and let $\gamma$ be the corresponding (time-like or light-like) geodesic, defined on an open interval $0\in I\subset\R$ with affine parameter $\ell\in I$ and assume $\gamma$ has values in $\{\Sigma\Delta_r\sin\theta\ne0\}$. Recall the Hamiltonian
\[\frac{\mu}{2}:=\mathcal{H}(\gamma,p)=\frac{1}{2}g^{\mu\nu}(p_\mu-eA_\mu)(p_\nu-eA_\nu)\]
which is constant along $\gamma$ and equals $-\tfrac{1}{2}m^2$, where $m$ is the rest mass of the particle\footnote{$m=0$ for a photon}. Also, as $\partial_t$ and $\partial_\phi$ are Killing vectors, the \emph{total energy} $E:=-p_t$ and the \emph{total (azimuthal) angular momentum} $L:=p_\phi$ are constant along $\gamma$ too. It turns out that there is a fourth constant $\kappa$, called the \emph{Carter constant}, which allows to write the geodesic equations in a separable form. This is the point of the following well-known result, the formulation and proof (Appendix \ref{some_proofs_2}) of which are inspired by \cite{balek}, \cite{heisnam} and \cite{shanjit}:

\begin{theo}\label{carter_equations}
Given a geodesic $\gamma$ as above, define the following functions on $I$:
\[W_r:=\chi(E(r^2+a^2)-aL)+eQr~~\text{and}~~W_\theta:=\chi(aE\sin\theta-L/\sin\theta).\]
Then, the quantity
\[\kappa:=\Delta_\theta p_\theta^2+\frac{W_\theta^2}{\Delta_\theta}-\mu a^2\cos^2\theta=-\Delta_rp_r^2+\frac{W_r^2}{\Delta_r}+\mu r^2\]
is constant along $\gamma$ and moreover, $\gamma=(t,r,\theta,\phi)$ satisfies the following differential system on $I$:
\begin{equation}\label{motion}
\left\{\begin{array}{l}
\dfrac{\Sigma}{\chi}\dot{t}=\dfrac{W_r(r^2+a^2)}{\Delta_r}-\dfrac{aW_\theta\sin\theta}{\Delta_\theta},\\[1.5em]
\Sigma^2\dot{r}^2=W_r^2-\Delta_r(\kappa-\mu r^2),\\[1em]
\Sigma^2\dot{\theta}^2=-W_\theta^2+\Delta_\theta(\kappa+\mu a^2\cos^2\theta),\\[1em]
\dfrac{\Sigma}{\chi}\dot{\phi}=\dfrac{aW_r}{\Delta_r}-\dfrac{W_\theta}{\Delta_\theta\sin\theta}.\end{array}\right.
\end{equation}
\end{theo}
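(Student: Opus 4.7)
The plan is to apply Carter's original method: solve the Hamilton--Jacobi (HJ) equation $\mathcal{H}(q,\partial S/\partial q)=\mu/2$ by additive separation of variables, read off the separation constant as $\kappa$, and then obtain the first-order system from Hamilton's equations $\dot{q}^\mu=g^{\mu\nu}(p_\nu-eA_\nu)$.

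\emph{Step 1 (HJ separation).} Exploiting the Killing vectors $\partial_t,\partial_\phi$, I look for a complete integral of the HJ equation in the form $S=-Et+L\phi+S_r(r)+S_\theta(\theta)$, so that $p_r=S_r'(r)$, $p_\theta=S_\theta'(\theta)$, together with $X:=p_t-eA_t=-E-eQr/(\chi\Sigma)$ and $Y:=p_\phi-eA_\phi=L+eQra\sin^2\theta/(\chi\Sigma)$. Expanding the $(t,\phi)$-block of $g^{\mu\nu}(p_\mu-eA_\mu)(p_\nu-eA_\nu)$ using the inverse metric, the decisive observation is that, thanks to the identity $(r^2+a^2)-a^2\sin^2\theta=\Sigma$, the $\Sigma$-denominators inside $A_t,A_\phi$ telescope and one obtains the purely single-variable combinations
\[
(r^2+a^2)X+aY=-\frac{W_r}{\chi},\qquad a\sin\theta\,X+\frac{Y}{\sin\theta}=-\frac{W_\theta}{\chi}.
\]
Substituting these back into $2\Sigma\mathcal{H}=\mu\Sigma$ and factoring the $\Delta_r$- and $\Delta_\theta$-contributions yields
\[
\Delta_r p_r^2-\frac{W_r^2}{\Delta_r}+\Delta_\theta p_\theta^2+\frac{W_\theta^2}{\Delta_\theta}=\mu r^2+\mu a^2\cos^2\theta,
\]
whose left-hand side splits into an $r$-part and a $\theta$-part, proving the two stated expressions for $\kappa$ agree and defining it as a separation constant.

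\emph{Step 2 (Conservation of $\kappa$).} Since $S(q;E,L,\kappa)$ is a complete integral of HJ depending smoothly on the parameter $\kappa$, Jacobi's theorem yields a symplectic transformation under which $\kappa$ is a canonical coordinate, hence conserved along the induced Hamiltonian flow. Equivalently, $\kappa$ is the contraction $K_{\mu\nu}p^\mu p^\nu$ (up to $eQ$-corrections) with an irreducible Killing tensor $K$, and conservation follows from $\nabla_{(\sigma}K_{\mu\nu)}=0$; either viewpoint gives the constancy of $\kappa$ along $\gamma$.

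\emph{Step 3 (The four motion equations).} From Hamilton's equations and $g^{rr}=\Delta_r/\Sigma$, $g^{\theta\theta}=\Delta_\theta/\Sigma$ one gets $\Sigma\dot{r}=\Delta_r p_r$ and $\Sigma\dot{\theta}=\Delta_\theta p_\theta$; squaring and eliminating $p_r^2,p_\theta^2$ through the two formulas for $\kappa$ of Step~1 immediately produces the stated $\Sigma^2\dot{r}^2$ and $\Sigma^2\dot{\theta}^2$ equations. For $\dot{t}$ and $\dot{\phi}$ I compute $\dot{t}=g^{tt}X+g^{t\phi}Y$ and $\dot{\phi}=g^{t\phi}X+g^{\phi\phi}Y$ using the inverse metric, then group the terms by their $\Delta_r$- and $\Delta_\theta$-coefficients. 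Re-using the same identities $(r^2+a^2)X+aY=-W_r/\chi$ and $a\sin\theta X+Y/\sin\theta=-W_\theta/\chi$ (and, for $\dot\phi$, its rescaling by $1/\sin\theta$) triggers massive cancellations and collapses everything to $\tfrac{\Sigma}{\chi}\dot{t}=W_r(r^2+a^2)/\Delta_r-aW_\theta\sin\theta/\Delta_\theta$ and $\tfrac{\Sigma}{\chi}\dot{\phi}=aW_r/\Delta_r-W_\theta/(\Delta_\theta\sin\theta)$.

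The only genuine obstacle is bookkeeping: the $(t,\phi)$-block of $g^{-1}$ is a dense $2\times 2$ with many $\chi^2/(\Sigma\Delta_r\Delta_\theta)$ factors, and the HJ calculation appears hopelessly coupled until one notices that the $eQr/(\chi\Sigma)$ corrections in $X,Y$ conspire with $(r^2+a^2)-a^2\sin^2\theta=\Sigma$ to cancel the $1/\Sigma$'s and produce $W_r/\chi$ and $W_\theta/\chi$. Once these two identities are in hand, both the separation of HJ and the $\dot{t},\dot{\phi}$ reductions are immediate, and all four components of (\ref{motion}) follow uniformly.
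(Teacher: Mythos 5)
Your proposal is correct and follows the same route as the paper: additive Hamilton--Jacobi separation to produce $\kappa$ as the separation constant, followed by Hamilton's equations $\dot{\gamma}^\mu=g^{\mu\alpha}(p_\alpha-eA_\alpha)$ for the four first-order equations. Your only real departure is organizational but a welcome one: where the paper verifies the separated form of the Hamiltonian (its equation (\ref{hamil})) by a page of brute-force expansion and re-factorization, you isolate the identities $(r^2+a^2)X+aY=-W_r/\chi$ and $a\sin\theta\,X+Y/\sin\theta=-W_\theta/\chi$ up front and observe that the $\Delta_r$- and $\Delta_\theta$-coefficients of the $(t,\phi)$-block of $g^{-1}$ are perfect squares in exactly these combinations, which makes both the separation and the $\dot{t},\dot{\phi}$ reductions immediate.
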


The set of equations (\ref{motion}) is unusable in numerical computations due to the squares in the equations for $\dot{r}$ and $\dot{\theta}$. Indeed, at \emph{turning points} (points where the sign of $\dot{r}$ or $\dot{\theta}$ changes), we cannot choose what sign to put in front of the square root when these get smaller and smaller. We get rid of this difficulty using the method of \cite{fuerst-wu} (see also \cite{PYYY}) and derivate the equations for $\dot{r}^2$ and $\dot{\theta}^2$ again. It turns out the formulation is more elegant when dealing with the derivate conjugate momenta $\dot{p_r}$ and $\dot{p_\theta}$ rather that with $\ddot{r}$ and $\ddot{\theta}$.

\begin{cor}\label{with_conj_mom}
With the same notation as in Theorem \ref{carter_equations}, the geodesic $\gamma$ with motion constants $(\mu,E,L,\kappa)$ satisfies the following first order autonomous differential system with variables $(t,r,p_r,\theta,p_\theta,\phi)$:
\begin{equation}\label{motion_with_momenta}
\left\{\begin{array}{l}
\dfrac{\Sigma}{\chi}\dot{t}=\dfrac{W_r(r^2+a^2)}{\Delta_r}-\dfrac{aW_\theta\sin\theta}{\Delta_\theta},\\[1.5em]
\Sigma\dot{r}=\Delta_r p_r,\\[1em]
\Sigma\dot{p_r}=\dfrac{\frac{\partial W_r^2}{\partial r}-\Delta_r'(\kappa-\mu r^2)}{2\Delta_r}+\mu r-\Delta_r'p_r^2,\\[1.5em]
\Sigma\dot{\theta}=\Delta_\theta p_\theta,\\[1em]
\Sigma\dot{p_\theta}=\dfrac{-\frac{\partial W_\theta^2}{\partial \theta}+\Delta_\theta'(\kappa+\mu a^2\cos^2\theta)}{2\Delta_\theta}-\mu a^2\cos\theta\sin\theta-\Delta_\theta'p_\theta^2,\\[1.5em]
\dfrac{\Sigma}{\chi}\dot{\phi}=\dfrac{aW_r}{\Delta_r}-\dfrac{W_\theta}{\Delta_\theta\sin\theta},\end{array}\right.
\end{equation}
where, of course, for $\nu=r,\theta$, the symbol $\Delta_\nu'$ means ${\partial\Delta_\nu}/{\partial\nu}$.
\end{cor}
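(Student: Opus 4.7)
The plan is to start from the Carter system (\ref{motion}) together with the Hamiltonian definition of the conjugate momenta $p_\mu = g_{\mu\nu}\dot\gamma^\nu + eA_\mu$. Since the potential $\mathbf{A}$ given in Theorem \ref{EFE_ok} has no $\mathrm{d}r$ or $\mathrm{d}\theta$ component, one has $A_r=A_\theta=0$, so the inverse-metric formula (recall $g^{rr}=\Delta_r/\Sigma$ and $g^{\theta\theta}=\Delta_\theta/\Sigma$) immediately gives
\[\Sigma\dot r=\Delta_r p_r,\qquad \Sigma\dot\theta=\Delta_\theta p_\theta.\]
This replaces the problematic squared equations of (\ref{motion}) by linear first-order relations. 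The $\dot t$ and $\dot\phi$ equations of Corollary \ref{with_conj_mom} are just reproduced verbatim from (\ref{motion}), so nothing is needed for those.

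For the two remaining equations, I would differentiate the Carter identity $\kappa=-\Delta_r p_r^2+W_r^2/\Delta_r+\mu r^2$ (or equivalently its cleared form $\Delta_r^2 p_r^2=W_r^2-\Delta_r(\kappa-\mu r^2)$) with respect to the affine parameter $\ell$. Since $E$, $L$, $\kappa$, $\mu$, $e$, $Q$, $\chi$, $a$ are all constants along $\gamma$, and $\Delta_r$ and $W_r$ depend on $\ell$ only through $r(\ell)$, the chain rule produces
\[2\Delta_r\Delta_r'\dot r\,p_r^2+2\Delta_r^2 p_r\dot p_r=\bigl(\partial_r W_r^2\bigr)\dot r-\Delta_r'\dot r(\kappa-\mu r^2)+2\mu r\Delta_r\dot r.\]
Substituting $\dot r=\Delta_r p_r/\Sigma$ and dividing by $2\Delta_r^2 p_r$ isolates $\dot p_r$ and yields exactly the third line of (\ref{motion_with_momenta}). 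The same procedure, carried out on $\Delta_\theta^2 p_\theta^2=-W_\theta^2+\Delta_\theta(\kappa+\mu a^2\cos^2\theta)$, gives the equation for $\dot p_\theta$, with the sign difference between the two cases being absorbed automatically by the sign pattern of this identity.

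The hard part is really just the bookkeeping in the derivation of $\dot p_r$ and $\dot p_\theta$: one needs to track the $\Delta_r'p_r^2$ contribution coming from differentiating $\Delta_r^2 p_r^2$ on the left-hand side, and then recognize that together with a $W_r^2\Delta_r'/\Delta_r^2$ term on the right it rearranges, via the Carter identity of Theorem \ref{carter_equations}, into the clean combination $\Delta_r'(\kappa-\mu r^2)/\Delta_r$ that appears in the statement. A small caveat is that the division by $\dot r$ (resp.\ $\dot\theta$) above is only legitimate away from turning points; however, both sides of the resulting equations are smooth functions of $(r,p_r,\theta,p_\theta)$, so the identity extends by continuity (and is in any case equivalent to the Hamiltonian equation $\dot p_r=-\partial_r\mathcal H$ on the whole domain, which furnishes an alternative, coordinate-free verification one could invoke as a sanity check).
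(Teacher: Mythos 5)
Your proposal is correct and follows essentially the same route as the paper: both derive $\Sigma\dot r=\Delta_r p_r$ and $\Sigma\dot\theta=\Delta_\theta p_\theta$ from the conjugate momenta, then differentiate the Carter radial (resp.\ polar) identity along $\gamma$, substitute, and cancel a factor of $p_r$ (the paper works with the divided form $p_r^2=W_r^2/\Delta_r^2-(\kappa-\mu r^2)/\Delta_r$ and re-invokes the Carter identity to clean up, while your cleared form $\Delta_r^2p_r^2=W_r^2-\Delta_r(\kappa-\mu r^2)$ yields the $-\Delta_r'p_r^2$ term directly, which is a trivial variation). Your closing remark about the division being illegitimate at turning points, repaired by continuity or by checking against $\dot p_r=-\partial_r\mathcal H$, addresses a point the paper silently glosses over.
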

\begin{proof}
We only carry the calculations out for $\dot{p_r}$, the case of $\dot{p_\theta}$ being similar. Define $f(r):=\tfrac{W_r^2}{\Delta_r^2}-\tfrac{\kappa-\mu r^2}{\Delta_r}$ so that the second equation from (\ref{motion}) reads $p_r^2=f(r)$ and differentiating this equation with respect to $\ell$ gives
\begin{align*}
2p_r\dot{p_r}=\frac{\partial f}{\partial r}\frac{\mathrm{d}r}{\mathrm{d}\ell}&~\Longleftrightarrow~\frac{2\Sigma\dot{p_r}}{\Delta_r}=\frac{\partial f}{\partial r}=2\frac{W_r(W_r'\Delta_r-W_r\Delta_r')}{\Delta_r^3}-\frac{-2\mu r\Delta_r-(\kappa-\mu r^2)\Delta_r'}{\Delta_r^2}\\
&~\Longleftrightarrow~\Sigma\dot{p_r}=\frac{W_r(W_r'\Delta_r-W_r\Delta_r')}{\Delta_r^2}+\mu r+(\kappa-\mu r^2)\frac{\Delta_r'}{2\Delta_r} \\
&~\Longleftrightarrow~\Sigma\dot{p_r}=\frac{2W_r'W_r-\Delta_r'(\kappa-\mu r^2)}{2\Delta_r}+\mu r+\frac{\Delta_r'}{\Delta_r}\left(\kappa-\mu r^2-\frac{W_r^2}{\Delta_r}\right)\\
&~\Longleftrightarrow~\Sigma\dot{p_r}=\frac{\partial_r(W_r^2)-\Delta_r'(\kappa-\mu r^2)}{2\Delta_r}+\mu r-\Delta_r'p_r^2.
\end{align*}
\end{proof}

\subsection{Expressions for the motion constants}
In order to implement the set of equations (\ref{motion_with_momenta}), we need to find the constants $(\mu,E,L,\kappa)$ from initial values for the geodesic $\gamma$. We have the following result:

\begin{prop}\label{find_constants}
Given a geodesic $\gamma=(t,r,\theta,\phi)$ as in Theorem \ref{carter_equations}, the energy, angular momentum and Carter's constant are given as follows:
\[\left\{\begin{array}{l}
E=-\dfrac{eQr}{\chi\Sigma}+\dfrac{1}{\chi}\sqrt{(a^2\sin^2\theta\Delta_\theta-\Delta_r)\left(\dfrac{\mu}{\Sigma}-\dfrac{\dot{r}^2}{\Delta_r}-\dfrac{\dot{\theta}^2}{\Delta_\theta}\right)+\dfrac{\dot{\phi}^2\sin^2\theta\Delta_r\Delta_\theta}{\chi^2}},\\[1.5em]
L=\dfrac{\sin^2\theta}{\chi^2(\Delta_r-a^2\sin^2\theta\Delta_\theta)}\left[aE\chi^2\Delta_r+\Delta_\theta\left(\Sigma\Delta_r\dot{\phi}-a\chi(\chi E(r^2+a^2)+eQr)\right)\right],\\[1.5em]
\kappa=\dfrac{W_\theta^2+\Sigma^2\dot{\theta}^2}{\Delta_\theta}-\mu a^2\cos^2\theta=\dfrac{W_r^2-\Sigma^2\dot{r}^2}{\Delta_r}+\mu r^2,\end{array}\right.\]
where $\mu=-1$ for a massive test particle and $\mu=0$ for a photon.
\end{prop}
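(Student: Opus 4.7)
The plan is to derive the three constants separately, using the three ``natural'' sources available: the expression of $\kappa$ given by Theorem~\ref{carter_equations}, the fourth equation of~(\ref{motion}) for $L$, and the Hamiltonian (mass-shell) constraint $\mu=g_{\alpha\beta}\dot{\gamma}^\alpha\dot{\gamma}^\beta$ for $E$.

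First, $\kappa$ is immediate. Starting from the definition
\[\kappa=\Delta_\theta p_\theta^2+\tfrac{W_\theta^2}{\Delta_\theta}-\mu a^2\cos^2\theta=-\Delta_rp_r^2+\tfrac{W_r^2}{\Delta_r}+\mu r^2\]
given by Theorem~\ref{carter_equations}, I would substitute $p_r=g_{rr}\dot{r}=\Sigma\dot{r}/\Delta_r$ and $p_\theta=g_{\theta\theta}\dot{\theta}=\Sigma\dot{\theta}/\Delta_\theta$. This gives $\Delta_\theta p_\theta^2=\Sigma^2\dot{\theta}^2/\Delta_\theta$ and $\Delta_rp_r^2=\Sigma^2\dot{r}^2/\Delta_r$, yielding the stated formula for $\kappa$.

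Next, I would derive $L$ from the $\dot{\phi}$-equation of~(\ref{motion}). Substituting the explicit expressions $W_r=\chi(E(r^2+a^2)-aL)+eQr$ and $W_\theta/\sin\theta=\chi(aE-L/\sin^2\theta)$ into the identity $\Sigma\dot{\phi}/\chi=aW_r/\Delta_r-W_\theta/(\Delta_\theta\sin\theta)$, multiplying through by $\Delta_r\Delta_\theta$ to clear denominators, and collecting the coefficient of $L$ gives a factor proportional to $(\Delta_r-a^2\sin^2\theta\Delta_\theta)/\sin^2\theta$; solving the resulting linear equation in $L$ produces the announced formula.

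The main work lies in $E$. Here I would view the mass-shell relation
\[\mu=g_{tt}\dot{t}^2+2g_{t\phi}\dot{t}\dot{\phi}+g_{\phi\phi}\dot{\phi}^2+g_{rr}\dot{r}^2+g_{\theta\theta}\dot{\theta}^2\]
as a quadratic in $\dot{t}$ and extract $\pi_t:=g_{tt}\dot{t}+g_{t\phi}\dot{\phi}$, which by the definition of the conjugate momentum coincides with $-E-eQr/(\chi\Sigma)$. Completing the square gives
\[\pi_t^2=g_{t\phi}^2\dot{\phi}^2-g_{tt}\bigl(g_{\phi\phi}\dot{\phi}^2+g_{rr}\dot{r}^2+g_{\theta\theta}\dot{\theta}^2-\mu\bigr),\]
and taking the physical branch (positive energy at infinity) fixes the sign. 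The technical step is to simplify the $t\phi$-block determinant $g_{tt}g_{\phi\phi}-g_{t\phi}^2$; rather than expanding the six terms by brute force, I would use the identity $\det(\mathbf{g})=g_{rr}g_{\theta\theta}(g_{tt}g_{\phi\phi}-g_{t\phi}^2)$ together with the well-known value $\det(\mathbf{g})=-\Sigma^2\sin^2\theta/\chi^4$ for KNdS and $g_{rr}g_{\theta\theta}=\Sigma^2/(\Delta_r\Delta_\theta)$, which yields $g_{tt}g_{\phi\phi}-g_{t\phi}^2=-\sin^2\theta\Delta_r\Delta_\theta/\chi^4$. Injecting this together with $g_{tt}=(a^2\sin^2\theta\Delta_\theta-\Delta_r)/(\chi^2\Sigma)$ and $g_{rr}\dot{r}^2+g_{\theta\theta}\dot{\theta}^2=\Sigma(\dot{r}^2/\Delta_r+\dot{\theta}^2/\Delta_\theta)$ into the radical, and factoring $1/\chi$ outside, produces exactly the stated expression for $E$. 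The sole remaining subtlety is to verify separately the value of $\det(\mathbf{g})$; this is the technical obstacle, but it is a purely computational check that can also be read off from the block-diagonal structure of the metric in $(t,\phi)$ versus $(r,\theta)$.
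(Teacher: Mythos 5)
Your proposal is correct, and for $\kappa$ and $L$ it coincides with the paper's proof: $\kappa$ is read off from Theorem \ref{carter_equations} via $p_r=\Sigma\dot r/\Delta_r$, $p_\theta=\Sigma\dot\theta/\Delta_\theta$, and $L$ is obtained by inverting the azimuthal equation of (\ref{motion}) after clearing denominators by $\sin^2\theta\,\Delta_r\Delta_\theta$, exactly as you describe. For $E$ the two arguments start from the same constraint ($2\mathcal{H}=\mu$, equivalently the mass-shell relation $g_{\alpha\beta}\dot\gamma^\alpha\dot\gamma^\beta=\mu$) but proceed differently. The paper stays in the separated momentum form (\ref{hamil}), substitutes its freshly derived expression of $L$ in terms of $E$ and $\dot\phi$, and expands everything into a quadratic $\alpha_2E^2+\alpha_1E+\alpha_0=0$ whose positive root is the stated formula. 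You instead complete the square in $\dot t$ in the velocity form, identify $\pi_t=g_{tt}\dot t+g_{t\phi}\dot\phi=p_t-eA_t=-E-eQr/(\chi\Sigma)$, and evaluate the $t\phi$-minor via the block factorization $\det(\mathbf{g})=g_{rr}g_{\theta\theta}(g_{tt}g_{\phi\phi}-g_{t\phi}^2)$, giving $g_{tt}g_{\phi\phi}-g_{t\phi}^2=-\sin^2\theta\,\Delta_r\Delta_\theta/\chi^4$ (which indeed checks out directly, since $(r^2+a^2)-a^2\sin^2\theta=\Sigma$, and is consistent with the determinant value recorded in Appendix \ref{proofEFE}). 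Your route is arguably cleaner: it decouples the computation of $E$ from that of $L$ and replaces the brute-force expansion by a single structural identity; the cost is that you must justify the value of $\det(\mathbf{g})$, which the paper only states in the appendix. The sign choices agree: your ``positive branch'' for $-\pi_t$ is the paper's choice of the positive root, i.e.\ $E+eQr/(\chi\Sigma)\ge0$. No gap.
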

\begin{proof}
The expressions for $\kappa$ are straightforwardly obtained from those in Theorem \ref{carter_equations}. To compute $L$, we simply invert the azimuthal equation from the system (\ref{motion}). We write
\[\frac{\Sigma}{\chi}\dot{\phi}=\frac{aW_r}{\Delta_r}-\frac{W_\theta}{\Delta_\theta\sin\theta}=\chi L\left(\frac{1}{\Delta_\theta\sin^2\theta}-\frac{a^2}{\Delta_r}\right)-\frac{a\chi E}{\Delta_\theta}+\frac{a(\chi E(r^2+a^2)+eQr)}{\Delta_r}\]
so that, multiplying both sides by $\sin^2\theta\Delta_r\Delta_\theta$ yields
\[\chi L(\Delta_r-a^2\sin^2\theta\Delta_\theta)=\frac{\sin^2\theta}{\chi}\left[aE\chi^2\Delta_r+\Delta_\theta\left(\Sigma\Delta_r\dot{\phi}-a\chi(\chi E(r^2+a^2)+eQr\right)\right],\]
as claimed. Now for the energy, it is determined by $E\ge0$ and the fact that $2\mathcal{H}(\gamma,p)\equiv \mu$. Recalling the equation (\ref{hamil}) and using the above expression for $L$, we compute
\[2\mathcal{H}=\mu~\Longleftrightarrow~\Sigma(\Delta_r-a^2\sin^2\theta\Delta_\theta)(\mu-2\mathcal{H})=0~\Longleftrightarrow~\alpha_2 E^2+\alpha_1 E+\alpha_0=0,\]
where
\[\alpha_2=\chi^2\Sigma^2,~\alpha_1=2\chi eQr\Sigma,~\alpha_0=(\Delta_r-a^2\sin^2\theta\Delta_\theta)\left(\mu\Sigma-\frac{\Sigma^2\dot{r}^2}{\Delta_r}-\frac{\Sigma^2\dot{\theta}^2}{\Delta_\theta}\right)-\frac{\Sigma^2\sin^2\theta\Delta_r\Delta_\theta\dot{\phi}^2}{\chi^2}+e^2Q^2r^2.\]
Therefore, the positive solution $E$ of $\alpha_i E^i=0$ reads
\[E=-\frac{\alpha_1}{2\alpha_2}+\sqrt{\frac{\alpha_1^2}{4\alpha_2^2}-\frac{\alpha_0}{\alpha_2}}=-\frac{eQr}{\chi\Sigma}+\sqrt{\frac{a^2\sin^2\theta\Delta_\theta-\Delta_r}{\chi^2\Sigma}\left(\mu-\frac{\Sigma\dot{r}^2}{\Delta_r}-\frac{\Sigma\dot{\theta}^2}{\Delta_\theta}\right)+\frac{\sin^2\theta\Delta_r\Delta_\theta\dot{\phi}^2}{\chi^4}},\]
and this is exactly the stated formula.
\end{proof}

\begin{rem}
From the set of equations (\ref{motion}), we see that trajectories $\gamma=(t,r,\theta,\phi)$ for which $\theta(\ell_0)=\pi/2$ and $\dot{\theta}(\ell_0)=0$ for some $\ell_0\in I$ are confined in the equatorial plane $\theta=\pi/2$. In this case, Carter's constant reduces to $\kappa=\chi^2(aE-L)^2$. Therefore, Carter's constant sometimes refers rather to the constant $C:=\kappa-\chi^2(aE-L)^2$ so that $C=0$ for orbits in the plane $\theta=\pi/2$. More explicitly, the constant $C$ can be written as
\[C=\Delta_\theta p_\theta^2+\frac{\chi^2\cos^2\theta}{\Delta_\theta}\left[\frac{L^2}{\sin^2\theta}-a^2\left(E^2+\frac{\mu\Delta_\theta}{\chi^2}+{3\lambda^2}(aE-L)^2\right)\right].\]
Notice that this expression agrees with the one from \cite[\S 2.1]{PYYY} when $\lambda\to0$.
\end{rem}

\section{Polar formulation for RNdS trajectories and the Weierstrass elliptic function}\label{wp}

In this entire section, we assume that $a=0$, that is, we work with the \emph{Reissner--Nordstr\"{o}m-(anti) de Sitter (RNdS) metric} which is given, in Boyer--Lindquist (spherical) coordinates by
\begin{equation}\label{RNdS}\tag{RNdS}
\mathrm{d}s^2=-\underline{\Delta}\mathrm{d}t^2+\frac{\mathrm{d}r^2}{\underline{\Delta}}+r^2(\mathrm{d}\theta^2+\sin^2\theta\mathrm{d}\phi^2)
\end{equation}
where we let $\underline{\Delta}:=\Delta_r/r^2=1-\lambda r^2-2M/r+Q^2/r^2$ to lighten the notation. Since this metric is spherically symmetric, the geodesics are planar. Therefore, in order to study geodesics (and to implement them afterwards), we only need to focus on the equatorial ones. More precisely, if we have any geodesic, we may apply a linear rotation (i.e. an element of $1\times SO(3)\subset \mathrm{Isom}(\mathcal{M}\setminus\{r=0\},g_{\rm RNdS})$) to force its velocity vector to lie on the equatorial plane, solve the equations and then go back with the inverse rotation.

\subsection{Polar geodesic equation}
Consider then an equatorial geodesic $\gamma=(t,r,\pi/2,\phi)$ with Hamiltonian $\mu$, energy $E$ and angular momentum $L$.
The set of equations (\ref{motion}) becomes
\begin{equation}\label{motion_RNdS}
\left\{\begin{array}{l}
\underline{\Delta}\dot{t}=E,\\[.5em]
r^4\dot{r}^2=(Er^2+eQr)^2-\Delta_r(L^2-\mu r^2),\\[.5em]
r^2\dot{\phi}=L.\end{array}\right.
\end{equation}
From this we see that if $\dot{\phi}$ evaluates to zero somewhere, then $L=0$ and $\dot{\phi}\equiv0$ and the motion is then radial. Suppose it is not the case, then ${\phi}$ is a diffeomorphism onto its image and we may express $r=r(\phi)$ as a function of $\phi$. We write
\[\left(\frac{\mathrm{d}r}{\mathrm{d}\phi}\right)^2=\left(\frac{\dot{r}}{\dot{\phi}}\right)^2=\left(\frac{r^2\dot{r}^2}{r^2\dot{\phi}}\right)^2=\frac{r^4\dot{r}^2}{L^2}=\frac{(Er+eQ)^2}{L^2}r^2-\Delta_r\left(1-\frac{\mu}{L^2}r^2\right)\]
and after calculations,
\begin{equation}\label{pre-pre-weierstrass}
\left(\frac{\mathrm{d}r}{\mathrm{d}\phi}\right)^2=-\frac{\lambda\mu}{L^2}r^6+\left(\lambda+\frac{E^2+\mu}{L^2}\right)r^4+\frac{2}{L^2}(EeQ-M\mu)r^3+\left(\frac{Q^2}{L^2}(e^2+\mu)-1\right)r^2+2Mr-Q^2.
\end{equation}
Now, considering the Binet variable $u:=1/r$, we obtain the equation (from now on, the dot means differentiation with respect to $\phi$)
\begin{equation}\label{polar_binet}
\dot{u}^2=\frac{\dot{r}^2}{r^4}=-\frac{\lambda\mu}{L^2u^2}+\left(\lambda+\frac{E^2+\mu}{L^2}\right)+\frac{2}{L^2}(EeQ-M\mu)u+\left(\frac{Q^2}{L^2}(e^2+\mu)-1\right)u^2+2Mu^3-Q^2u^4.
\end{equation}
Finally, we can get rid of the square by differentiating again. We find
\begin{equation}\label{polar_binet_2}
\ddot{u}=\frac{\lambda\mu}{L^2u^3}+\frac{EeQ-M\mu}{L^2}+\left(\frac{Q^2}{L^2}(e^2+\mu)-1\right)u+3Mu^2-2Q^2u^3
\end{equation}
and this equation is much easier to (numerically) solve than the system (\ref{motion_with_momenta}).

\subsection{Use of Weierstrass' function $\wp$ for photon orbits}
The striking observation that the Weierstrass elliptic function $\wp$ solves the polar equatorial motion equation was first made by Hagihara in \cite{hagihara}. Here, inspired by the method from \cite[\S 3.1]{gibbons-vyska}, we show that we can still use the function $\wp$ to describe null geodesics in the RNdS metric.

In the case of a photon (whose world-line is a null geodesic with $\mu=0$ and $e=0$), the equation (\ref{pre-pre-weierstrass}) reduces to
\begin{equation}\label{pre-weiertrass}
\dot{r}^2=\left(\lambda+\frac{E^2}{L^2}\right)r^4-r^2+2Mr-Q^2
\end{equation}
This equation can be further reduced to the Weierstrass equation $\dot{y}^2=4y^3-g_2y-g_3$ as follows: suppose that $\lambda L^2+E^2\ge0$, then the depressed quartic $(\lambda+E^2/L^2)x^4-x^2+2Mx-Q^2$ has a real root\footnote{In practice, we choose $\overline{r}$ with minimal norm so that $\overline{r}=0$ when $Q=0$.} $\overline{r}\in \R$ and let $\widetilde{r}:=r-\overline{r}$. We have
\begin{align*}
\dot{\widetilde{r}}^2&=\dot{r}^2=\left(\lambda+\frac{E^2}{L^2}\right)(\widetilde{r}+\overline{r})^4-(\widetilde{r}+\overline{r})^2+2M(\widetilde{r}+\overline{r})-Q^2\\
&=\widetilde{r}\left[\left(\lambda+\frac{E^2}{L^2}\right)\widetilde{r}^3+4\overline{r}\left(\lambda+\frac{E^2}{L^2}\right)\widetilde{r}^2+\left(6\overline{r}^2\left(\lambda+\frac{E^2}{L^2}\right)-1\right)\widetilde{r}+\left(4\overline{r}^3\left(\lambda+\frac{E^2}{L^2}\right)-2\overline{r}+2M\right)\right]
\end{align*}
and considering the new Binet variable $u:=1/\widetilde{r}=(r-\overline{r})^{-1}$, we get
\[\dot{u}^2=\left(\lambda+\frac{E^2}{L^2}\right)+4\overline{r}\left(\lambda+\frac{E^2}{L^2}\right)u+\left(6\overline{r}^2\left(\lambda+\frac{E^2}{L^2}\right)-1\right)u^2+\left(4\overline{r}^3\left(\lambda+\frac{E^2}{L^2}\right)-2\overline{r}+2M\right)u^3\]
and it is now straightforward to put this cubic in depressed form and then rewrite it in Weierstrass' form. We summarize the discussion in the following result:
\begin{prop}\label{into_weierstrass}
Let $\gamma=(t,r,\pi/2,\phi)$ be a non-circular, non-radial equatorial null geodesic in the RNdS metric, with energy $E$ and angular momentum $L$. The map $\ell\mapsto\phi(\ell)$ is a diffeomorphism onto its image so that we may re-parametrize $\gamma$ using $\phi$ and we abusively denote by $r$ the re-parametrized coordinate $\phi\mapsto r(\phi)$.

If $\lambda\ge-E^2/L^2$, then we may choose a root $\overline{r}\in\R$ of the quartic 
\[(\lambda+E^2/L^2)x^4-x^2+2Mx-Q^2\]
and if we let
\[\left\{\begin{array}{l}
\delta=\lambda+{E^2}/{L^2},\\[.5em]
\gamma=4\overline{r}\delta,\\[.5em]
\beta=6\overline{r}^2\delta-1,\\[.5em]
\alpha=4\overline{r}^3\delta-2\overline{r}+2M.
\end{array}\right.~~~~\text{as well as}~~~~\left\{\begin{array}{l}
g_2:=\frac{1}{4}\left(\frac{\beta^2}{3}-\alpha\gamma\right),\\[.5em]
g_3:=\frac{1}{8}\left(\frac{\alpha\beta\gamma}{6}-\frac{\alpha^2\delta}{2}-\frac{\beta^3}{27}\right),\\[.5em]
P:=\frac{\alpha}{4(r-\overline{r})}+\frac{\beta}{12},\end{array}\right.\]
then the function $P$ satisfies the Weierstrass equation
\[\dot{P}^2=4P^3-g_2P-g_3.\]

In other words, if the \emph{discriminant} $g_2^3-27g_3^2\ne0$, then the polar radial motion is given by
\[r(\phi)=\overline{r}+\frac{\alpha}{4\wp(\phi)-\beta/3},\]
where $\wp=\wp_{g_2,g_3}$ is the Weierstrass function associated to $(g_2,g_3)\in\R^2$.
\end{prop}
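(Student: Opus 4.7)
The plan is to continue the chain of substitutions that was started just before the statement, pushing it through the standard depressed-cubic-to-Weierstrass-normal-form reduction. First I would handle the easy regularity claims: non-radiality means $L\ne0$, and then the third equation of \eqref{motion_RNdS} gives $\dot\phi=L/r^2\ne 0$ on the domain of the geodesic, so $\phi$ is strictly monotone and hence a diffeomorphism onto its image; the non-circularity assumption ensures $r$ is genuinely a function of $\phi$ (not a constant). This licenses the re-parametrization $r=r(\phi)$ used throughout.

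Next I would specialise \eqref{pre-pre-weierstrass} to $\mu=0=e$ to obtain \eqref{pre-weiertrass}, then verify that the quartic $Q(x):=(\lambda+E^2/L^2)x^4-x^2+2Mx-Q^2$ admits a real root under the hypothesis $\lambda\ge-E^2/L^2$: when $\delta=\lambda+E^2/L^2>0$ one has $Q(0)=-Q^2\le 0$ while $Q(x)\to+\infty$ as $x\to\pm\infty$ (so a root exists by the intermediate value theorem), and when $\delta=0$ the quartic collapses to the quadratic $-x^2+2Mx-Q^2$ whose roots are $M\pm\sqrt{M^2-Q^2}$, which are real in the physically reasonable regime. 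Given such a root $\overline{r}$, the shift $\widetilde r=r-\overline{r}$ factors $\dot{\widetilde r}^2$ as $\widetilde r\cdot(\delta\widetilde r^3+\gamma\widetilde r^2+\beta\widetilde r+\alpha)$ (the display just before the statement), and dividing by $\widetilde r^4$ after the Binet substitution $u=1/\widetilde r$ gives exactly $\dot u^2=\alpha u^3+\beta u^2+\gamma u+\delta$.

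The only remaining work is the algebraic reduction of this cubic ODE to Weierstrass normal form. I would introduce $P=\tfrac{\alpha}{4}u+\tfrac{\beta}{12}$, so that $u=\tfrac{4P}{\alpha}-\tfrac{\beta}{3\alpha}$ and $\dot u^2=\tfrac{16}{\alpha^2}\dot P^2$, then substitute and collect powers of $P$. The $P^2$-coefficient vanishes (that was the whole point of the affine shift $+\beta/12$), the $P^3$-coefficient becomes $64/\alpha^2$, and cleaning up produces
\[
\dot P^2=4P^3-\frac{\beta^2-3\alpha\gamma}{12}\,P+\frac{2\beta^3-9\alpha\beta\gamma+27\alpha^2\delta}{432},
\]
which is the Weierstrass equation with precisely the advertised $g_2=\tfrac14(\tfrac{\beta^2}{3}-\alpha\gamma)$ and $g_3=\tfrac18(\tfrac{\alpha\beta\gamma}{6}-\tfrac{\alpha^2\delta}{2}-\tfrac{\beta^3}{27})$. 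Finally, when $g_2^3-27g_3^2\ne 0$ the standard theory of Weierstrass elliptic functions says that any solution of $\dot P^2=4P^3-g_2P-g_3$ is $P(\phi)=\wp_{g_2,g_3}(\phi-\phi_0)$ for some phase $\phi_0$ fixed by the initial data (which we absorb into the origin of $\phi$); inverting the definition $P=\tfrac{\alpha}{4(r-\overline r)}+\tfrac{\beta}{12}$ gives $r(\phi)=\overline r+\alpha/(4\wp(\phi)-\beta/3)$.

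The main obstacle is purely bookkeeping: carefully tracking signs and denominators through the cubic-to-Weierstrass substitution so that the $P^2$-coefficient really cancels and $g_2$, $g_3$ come out with the exact normalisations stated. Everything else (the existence of $\overline r$, the diffeomorphism property of $\phi$, the inversion step) is elementary once one has the algebraic identity in hand.
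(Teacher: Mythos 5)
Your proposal is correct and follows essentially the same route as the paper, which presents the proof as the discussion immediately preceding the proposition (specialising \eqref{pre-pre-weierstrass} to $\mu=e=0$, shifting by a real root $\overline{r}$ of the quartic, applying the Binet substitution $u=1/(r-\overline{r})$, and declaring the final depressed-cubic-to-Weierstrass reduction ``straightforward''). You merely carry out that last affine substitution $P=\tfrac{\alpha}{4}u+\tfrac{\beta}{12}$ explicitly, and your resulting coefficients agree with the stated $g_2$ and $g_3$.
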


\begin{rem}
Differentiating the radial equation from (\ref{motion_RNdS})
we obtain
\[\ddot{r}=\frac{2Q^2L^2}{r^5}-\frac{3ML^2}{r^4}+\frac{L^2-Q^2(e^2+\mu)}{r^3}+\frac{M\mu-EeQ}{r^2}-\lambda\mu r.\]
Fixing an initial value for $r$ and $\dot{r}$, we obtain a second order Cauchy problem. Hence, if $r : I\to \R$ is a maximal solution of this problem, then we either have $|r|\to+\infty$ or $r\to0$ on $\partial I$. This says that ultimately, every geodesic is either always defined (stable orbit), or goes to $\infty$ (escape path) or dies at the singularity.

Qualitatively, the previous result says that the phase portrait, in Binet variable, of a generic null RNdS orbit describes (a connected component of) an elliptic curve.
\end{rem}

In practice, given a (polar) initial condition $(r_0,\dot{r}_0):=(r(\phi_0),\dot{r}(\phi_0))$, we have to find $z_0\in\C$ such that $\wp(z_0)=\tfrac{\alpha}{4(r_0-\overline{r})}+\tfrac{\beta}{12}$ and this can be done using the \emph{Carlson integrals} (see \cite{carlson})
\[R_F(x,y,z):=\frac{1}{2}\int_0^\infty \frac{\mathrm{d}\zeta}{\sqrt{(\zeta+x)(\zeta+y)(\zeta+z)}}.\]
More precisely, we have the following result:
\begin{cor}\label{ini_conds_weierstrass}
Fix $(L,E,r_0,\dot{r}_0)\in\R^*\times(\R^*_+)^2\times\R$ such that $\lambda L^2+E^2\ge0$ and let $\gamma$ be the unique maximal non-circular, non-radial equatorial null RNdS geodesic with energy $E$, angular momentum $L$ and such that $r(0)=r_0$ and $\dot{r}(0)=\dot{r}_0$ in polar parametrization $r=r(\phi)$. Recall also the constants $\overline{r},\alpha,\beta,\gamma,\delta,g_2,g_3$ from Proposition \ref{into_weierstrass}.

If $g_2^3-27g_3^2\ne0$, then the function $r$ is given (on its definition domain) by
\[r(\phi)=\overline{r}+\frac{\alpha}{4\wp_{g_2,g_3}(z_0+\phi)-\beta/3},~~\text{where}~~z_0:=R_F(\wp_0-z_1,\wp_0-z_2,\wp_0-z_3)\in\C,\]
with $z_{1,2,3}\in\C$ the roots of the Weierstrass cubic $4z^3-g_2z-g_3$ and $\wp_0:=\tfrac{\alpha}{4(r_0-\overline{r})}+\tfrac{\beta}{12}$.
\end{cor}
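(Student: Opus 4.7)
The plan is to bootstrap on Proposition \ref{into_weierstrass}, which already gives the shape of the answer, and then use a classical identification of the inverse of $\wp$ with a Carlson integral. Set
\[
P(\phi) := \frac{\alpha}{4(r(\phi)-\overline{r})} + \frac{\beta}{12}.
\]
By Proposition \ref{into_weierstrass}, $P$ satisfies the Weierstrass ODE $\dot{P}^2 = 4P^3 - g_2 P - g_3$. Since the discriminant $g_2^3 - 27g_3^2$ is nonzero, this autonomous first-order ODE is satisfied by the non-constant solution $\wp = \wp_{g_2,g_3}$; every other maximal solution is obtained by time-translation (and, since $\wp$ is even, the sign ambiguity in $\dot{P}$ is absorbed by the shift). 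Therefore there is $z_0 \in \C$ with $P(\phi) = \wp(z_0 + \phi)$, and solving the definition of $P$ for $r$ yields the advertised formula $r(\phi) = \overline{r} + \alpha/(4\wp(z_0+\phi) - \beta/3)$. Uniqueness of $\gamma$ as a geodesic corresponds to the Cauchy uniqueness of the second-order ODE for $r(\phi)$.

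The second step is to identify $z_0$ from $(r_0,\dot{r}_0)$. Evaluating at $\phi=0$ gives $\wp(z_0) = \wp_0 := \tfrac{\alpha}{4(r_0-\overline{r})} + \tfrac{\beta}{12}$, so $z_0$ is a value of $\wp^{-1}(\wp_0)$. I would then invoke the classical integral representation
\[
\wp^{-1}(w) = \int_{w}^{\infty} \frac{\mathrm{d}t}{\sqrt{4t^3 - g_2 t - g_3}} \pmod{\text{period lattice}},
\]
factor $4t^3 - g_2 t - g_3 = 4(t-z_1)(t-z_2)(t-z_3)$, and perform the translation $\zeta := t - w$. This gives
\[
\wp^{-1}(w) = \tfrac{1}{2}\int_0^\infty \frac{\mathrm{d}\zeta}{\sqrt{(\zeta+w-z_1)(\zeta+w-z_2)(\zeta+w-z_3)}} = R_F(w-z_1,w-z_2,w-z_3),
\]
exactly the definition of $R_F$ used in the statement. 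Specializing $w=\wp_0$ gives the formula for $z_0$.

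The one subtle point — not really an obstacle, more a bookkeeping item — is that $\wp$ is even, so both $\pm z_0$ (and in fact all translates by the period lattice) solve $\wp(z_0)=\wp_0$, while only one of them reproduces the correct sign of $\dot{r}_0$. Concretely, differentiating $r(\phi) = \overline{r} + \alpha/(4\wp(z_0+\phi) - \beta/3)$ at $\phi=0$ gives $\dot{r}_0 = -\alpha\,\dot{\wp}(z_0)/(4\wp_0 - \beta/3)^2$, and the sign of $\dot{\wp}(z_0) = \pm\sqrt{4\wp_0^3 - g_2\wp_0 - g_3}$ selects between $z_0$ and $-z_0$. This is a final branch choice that does not affect the integral identity and is trivial to implement numerically; the formula of the corollary stands up to this choice.
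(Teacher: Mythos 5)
Your proposal is correct and follows essentially the same route as the paper, which states the corollary without a formal proof but whose preceding discussion (reduce to $\wp(z_0)=\wp_0$ via Proposition \ref{into_weierstrass}, then invert $\wp$ with the Carlson integral $R_F$) is exactly your argument. Your explicit remark on the $\pm z_0$ branch choice forced by the sign of $\dot{r}_0$ is a worthwhile bookkeeping point that the paper leaves implicit.
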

Numerically, we approach $\wp$ with the Coquereaux--Grossmann--Lautrup algorithm\footnote{based on the duplication formula and the Laurent expansion of $\wp$ at $0$} from \cite[\S 3]{coquereaux} and $R_F$ is approximated using the Carlson algorithm from \cite[\S 2]{carlson}.

\section{Model for the accretion disk}\label{mod_acc}

We now detail how we modelled the (thin steady nearly Keplerian opaque) accretion disk, radiating as a blackbody. For detailed treatments of accretion disks, see \cite{pringle,spruit}.

\subsection{Angular velocity of circular massive orbits}
First, we have to find the angular velocity of a circular equatorial orbit. This is done in the following result:
\begin{prop}\label{ang_vel}
Let $\gamma=(t,r,\theta,\phi) : I\to \mathcal{M}$ be a geodesic such that $\theta\equiv\pi/2$ and $\dot{r}=0$. Then, the angular velocity $\omega:=\dot{\phi}/\dot{t}$ is given by
\[\omega=\frac{1}{a+{r^2}/{\rho}},\]
where $\rho:=\sqrt{-\lambda r^4+Mr-Q^2}$.
\end{prop}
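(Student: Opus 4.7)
The plan is to apply the radial Euler--Lagrange equation and exploit the fact that every $r$-dependent quantity is constant along the orbit. Since $\theta\equiv\pi/2$ and $\dot r\equiv 0$, necessarily $\ddot r=0$ too, so the EL equation (\ref{EL}) for the coordinate $r$ collapses to
\[
0 \;=\; \frac{\partial\mathcal{L}}{\partial r} \;=\; \tfrac{1}{2}\,g_{\alpha\beta,r}\,\dot\gamma^\alpha\dot\gamma^\beta .
\]
On the equatorial plane only $t$ and $\phi$ derivatives survive, and dividing by $\dot t^2$ gives the quadratic
\[
g_{\phi\phi,r}\,\omega^{2}+2\,g_{t\phi,r}\,\omega+g_{tt,r}=0
\]
in $\omega=\dot\phi/\dot t$. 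So everything reduces to computing these three partial derivatives on the equator and solving the quadratic.

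Next I would specialize the metric matrix of \S\ref{reminders} to $\theta=\pi/2$, where $\Sigma=r^{2}$, $\Delta_\theta=1$, $\sin\theta=1$, giving
\[
g_{tt}=\tfrac{a^{2}-\Delta_r}{\chi^{2}r^{2}},\qquad
g_{t\phi}=\tfrac{a\!\left(\Delta_r-(r^{2}+a^{2})\right)}{\chi^{2}r^{2}},\qquad
g_{\phi\phi}=\tfrac{(r^{2}+a^{2})^{2}-a^{2}\Delta_r}{\chi^{2}r^{2}}.
\]
Differentiating each of these in $r$ is routine; the key algebraic observation that makes things collapse is the identity
\[
2\Delta_r-r\,\Delta_r'=2\bigl(a^{2}+\lambda r^{4}-Mr+Q^{2}\bigr)=2(a^{2}-\rho^{2}),
\]
which is a direct consequence of the definition of $\Delta_r$ and of $\rho$. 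Feeding it back yields the compact expressions
\[
g_{tt,r}=-\tfrac{2\rho^{2}}{\chi^{2}r^{3}},\qquad
g_{t\phi,r}=\tfrac{2a\rho^{2}}{\chi^{2}r^{3}},\qquad
g_{\phi\phi,r}=\tfrac{2(r^{4}-a^{2}\rho^{2})}{\chi^{2}r^{3}}.
\]

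Plugging these into the quadratic and clearing the common factor $2/(\chi^{2}r^{3})$ gives
\[
(r^{4}-a^{2}\rho^{2})\,\omega^{2}+2a\rho^{2}\,\omega-\rho^{2}=0,
\]
and thanks to $r^{4}-a^{2}\rho^{2}=(r^{2}-a\rho)(r^{2}+a\rho)$ this factors cleanly as
\[
\bigl(\omega(r^{2}+a\rho)-\rho\bigr)\bigl(\omega(r^{2}-a\rho)+\rho\bigr)=0 .
\]
The two roots are the prograde and retrograde angular velocities; retaining the prograde root
\[
\omega=\tfrac{\rho}{r^{2}+a\rho}=\tfrac{1}{a+r^{2}/\rho}
\]
gives the announced formula.

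The only nontrivial step is the algebraic one: computing $g_{tt,r}$, $g_{t\phi,r}$, $g_{\phi\phi,r}$ in a form where the cancellation $2\Delta_r-r\Delta_r'=2(a^{2}-\rho^{2})$ can be used to produce a factorable quadratic. Everything else (the vanishing of the EL equation, the specialization to the equator, and the final root extraction) is essentially bookkeeping.
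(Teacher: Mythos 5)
Your proof is correct and follows essentially the same route as the paper: the radial Euler--Lagrange equation with $\dot r=\ddot r=0$ reduces to the quadratic $g_{\phi\phi,r}\omega^2+2g_{t\phi,r}\omega+g_{tt,r}=0$ on the equator, which is then solved for $\omega$. The only cosmetic differences are that you factor the quadratic directly via the identity $2\Delta_r-r\Delta_r'=2(a^2-\rho^2)$ instead of invoking the quadratic formula, and you select the root by declaring it prograde where the paper fixes the sign via the Newtonian limit $\lambda=a=Q=0$, $\omega=+\sqrt{M/r^3}$.
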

\begin{proof}
Consider the Lagrangian
\[\mathcal{L}=\tfrac{1}{2}g_{\mu\nu}\dot{\gamma}^\mu\dot{\gamma}^\nu=p_\mu\dot{\gamma}^\mu-\mathcal{H}(\gamma,p).\]
Since $\dot{r}=0$, we have $\tfrac{\partial\mathcal{L}}{\partial\dot{r}}=0$ and the radial Euler--Lagrange equation is
\[0=2\frac{\mathrm{d}}{\mathrm{d}\ell}\left(\frac{\partial\mathcal{L}}{\partial\dot{r}}\right)=2\frac{\partial\mathcal{L}}{\partial r}=\frac{\partial g_{tt}}{\partial r}\dot{t}^2+2\frac{\partial g_{t\phi}}{\partial r}\dot{t}\dot{\phi}+\frac{\partial g_{\phi\phi}}{\partial r}\dot{\phi}^2~\Longleftrightarrow~g_{tt,r}+2g_{t\phi,r}\omega+g_{\phi\phi,r}\omega^2=0,\]
where $\omega:=\dot{\phi}/\dot{t}$. Computing the derivatives, we obtain
\begin{align*}
&~g_{tt,r}+2g_{t\phi,r}\omega+g_{\phi\phi,r}\omega^2=0\\
\Longleftrightarrow &~\frac{2(r^4+a^2\Delta_r-a^4)-a^2r\Delta_r'}{\chi^2r^3}\omega^2+\frac{2a(r\Delta_r'+2(a^2-\Delta_r))}{\chi^2r^3}\omega+\frac{2(\Delta_r-a^2)-r\Delta_r'}{\chi^2r^3}=0\\
\Longleftrightarrow &~(\chi r^4-a^2(Mr+Q^2))\omega^2-2a(\lambda r^4-Mr+Q^2)\omega+\lambda r^4-Mr+Q^2=0\\
\Longleftrightarrow &~\omega=\frac{a\rho^2\pm r^2\rho}{a^2\rho^2-r^4}=\rho\frac{a\rho\pm r^2}{(a\rho-r^2)(a\rho+r^2)}=\frac{\rho}{a\rho\mp r^2}.
\end{align*}
But when $\lambda=a=Q=0$, we must find $\omega=+\sqrt{{M}/{r^3}}$ and thus the above sign is a plus.
\end{proof}

\subsection{Blackbody radiation temperature and brightness}
As mentioned above, we assume that the matter in the accretion disk radiates as a blackbody. To compute its surface temperature $T_s=T_s(r)$, we use the \emph{Shakura--Sunyaev formula} (see \cite[\S 2a]{shakura-sunyaev} or \cite[formula (26)]{spruit}). In SI units, it reads
\begin{equation}\label{SS}
\sigma_BT_s(r)^4=\frac{3GM\dot{M}}{8\pi r^3}\left(1-\sqrt{\frac{r_{\rm int}}{r}}\right),
\end{equation}
where $r_{\rm int}$ is the interior radius of the disk, $\dot{M}$ is the \emph{accretion rate} of matter into the disk and $\sigma_B$ is the Stefan-Boltzmann constant.

Now, for the brightness, we use Planck's law
\[B_\lambda(T)=\frac{2hc^2}{\lambda^5}\frac{1}{e^{\frac{hc}{k_B\lambda T}}-1},\]
$h$ is Planck's constant and $k_B$ is Boltzmann's constant; coupled with the Wien law $\lambda=b/T$, where $b$ is Wien's displacement constant. This yields, after evaluating the constants,
\[B(r):=B_{b/T}(T_s(r))=\frac{2hc^2}{b^5}\frac{T^5}{e^{\frac{hc}{k_Bb}}-1}\approx 4.086\cdot10^{-6}\times T^5.\]
This is the value by which we shall multiply the pixel's RGB triple corresponding to the temperature $T_s$, according to the conversion table by M. Charity\footnote{\url{http://www.vendian.org/mncharity/dir3/blackbody/}}. However, it turns out that implementing these values gives an over-bright disk, hence we found useful to rescale the brightness by $10^{-15}$ so that $B(r)\approx 4.086\cdot 10^{-21}\times T^5$. The user is then invited to give a value $B_0\ge0$, typically $B_0\le 10^4$, so that the disk becomes visible, as changing the inner (outer) radius or the accretion rate dramatically affects the brightness. The rescaled brightness is then $\widetilde{B}(r)=B_0T^5\times4.086\cdot10^{-21}$. If $B_0=0$ is chosen, the formula for $B(r)$ is ignored and a linear scaling of brightness is taken, from the outer to the inner radius.

\subsection{Gravitational redshift and Doppler effect}
Last, we have to take the Doppler effect and gravitational redshift into account for the temperature and the brightness, as we deal with relativistic speeds and strong gravitational fields. More precisely, we will rescale the temperature and brightness by factors $\alpha_{\rm Grav}^{-1}=(1+z_{\rm Grav})^{-1}$ and $\alpha_{\rm Dop}^{-1}=(1+z_{\rm Dop})^{-1}$ corresponding the the gravitational and Doppler shifts, respectively.

The gravitational redshift is easily computed from the matrix $(g_{\mu\nu})$. Indeed, for a stationary observer (a test particle with $\dot{r}=\dot{\theta}=\dot{\phi}=0$), the KNdS metric reduces to $\mathrm{d}s^2=-c^2\mathrm{d}\tau^2=g_{tt}c^2\mathrm{d}t^2$, where $\tau$ is the proper time of the observer. Therefore, the gravitational redshift for such an observer is simply given by
\[\alpha_{\rm Grav}=\frac{\mathrm{d}t}{\mathrm{d}\tau}=\frac{1}{\sqrt{-g_{tt}}}=\chi\sqrt{\frac{\Sigma}{\Delta_r-a^2\sin^2\theta\Delta_\theta}}.\]

For the Doppler shift, take a circular massive orbit with constant radius $r$, angle $\phi=\phi(\ell)$ and four-velocity $l^\mu=(\dot{t},0,0,\dot{\phi})$, and a photon path leaving the point $(t,r,\pi/2,\phi)$ with angle $\vartheta$ with respect to $l^\mu$. Following \cite[\S 48]{landau-lifshitz}, in natural units, the Doppler shift reads
\[\alpha_{\rm Dop}=\frac{1-v\cos\vartheta}{\sqrt{1-v^2}},\]
where $v=\omega\sqrt{r^2+a^2}$ is the velocity of the orbit which, using Proposition \ref{ang_vel}, reads
\[v=\frac{\sqrt{r^2+a^2}}{a+\frac{r^2}{\sqrt{-\lambda r^4+Mr-Q^2}}}.\]

\section{Implementation and comparison of the methods}\label{rem_implem}

In this section, we give some details on the Scilab functions we created to solve the geodesic equations and to draw the shadow of a KNdS black hole, with an accretion disk. The functions are designed to allow the user to tune parameters (cosmological constant mass, charge, angular momentum, accretion rate, brightness...) as desired and to draw a shadow accordingly. The full scripts and documentation may be found at \url{https://github.com/arthur-garnier/knds_orbits_and_shadows.git}.

In all our programs, we systematically rescale the initial data so that $G=c=M=4\pi\epsilon_0=1$ and go back to SI units after computations.

The programs \texttt{auxi.sci} and \texttt{orbit.sci} are intended to solve the geodesic equations. The first one is simply a library of useful functions, such as the conversion between Cartesian and Boyer--Lindquist coordinates, the (inverse and derivatives of the) metric matrices, Christoffel symbols, etc. The second one is the solver itself: it takes as input the cosmological constant, the three parameters of the black hole, the mass of the particle ($0$ or $1$), the discretized affine parameter (maximal value and step-size) and the initial conditions of the geodesic, in Boyer--Lindquist coordinates. It also lets the user choose between the different integration methods we discussed above, as well as the method to use in the \texttt{ode} routine\footnote{such as RK4, RK45, BDF, Adams... see \url{https://help.scilab.org/docs/6.1.1/en_US/ode.html}}. As output, it returns the trajectory in Boyer--Lindquist coordinates and the Hamiltonian along the trajectory, that is, the values of $(r,\theta,\phi)$ and of $\mathcal{H}$ at each node.

\subsection{Shadowing and the backward ray tracing method}\label{brt}
The method we use to create the shadow of the black hole is quite standard: the \emph{backward ray tracing method}. For a detailed and illustrated explanation of this method, we refer to \cite{osiris}. The function that ray-traces the black hole is \texttt{shadow.sci}; it takes as input the parameters of the black hole and the cosmological constant, the image to use for the shadowing and the accretion data\footnote{inner and outer radii, accretion rate, angle of view (from the equatorial plane) and brightness. But it also allows the user to force the temperature at extremal radii and to choose between the different shifts (gravitational, Doppler, both, none) described in \S\ref{mod_acc}. For more details, see \url{https://github.com/arthur-garnier/knds_orbits_and_shadows.git}.}. Though doable with any integration method, we used the Carter equations for \texttt{shadow.sci}, as it is by far the fastest method available (see \S \ref{comparison}). For a non-rotation black hole ($a=0$), the Weierstrass $\wp$ functions yield a more efficient process and they advantageously replace Carter's equations in this particular case, as explained below.

The basic idea is as follows: consider a static point in the KNdS spacetime, far from the center, representing the ``eye'' of our observer. Consider also a screen between our observer and the black hole, orthogonal to the segment joining the center and the observer. The celestial sphere emits light in every direction and some of it will eventually reach the observer, passing through the screen and the point where it hits the screen gives the pixel to draw at this point, depending on where it left the celestial sphere. However, as light will not propagate in straight lines, it is hard to know which ray will cross the screen in advance.

Therefore, we work backwards: suppose the observer emits light in every direction and keep only those rays that hit the screen at some point. As we are far from the source, we assume that light travels in straight lines between the camera and the screen. Then, we let the light ray trace backward in time, in the KNdS geometry, and see where it eventually lands (actually, where it came from): if it dies in the black hole, no pixel is displayed on the screen and if it crosses the celestial sphere, then the pixel is coloured in accordance with where it touches the sphere.

More precisely, first, we consider an artificial celestial hemisphere on which we project our original image, seeing it as a portion of its tangent plane parallel to the screen (and on the other side of the black hole). As a projection, we simply choose the standard and widely used \emph{equirectangular projection}, which has the advantage of taking the celestial hemisphere to a square, which we may rescale to fit our image. However, as we are dealing with black holes, a light ray may land on the other hemisphere (see Figure \ref{rays}), which we therefore choose to fill with a mirrored version of the original image. This avoids pixel loss and too much distortion of the original picture, which is assumed to be flat.

Next, for each pixel of the screen, we consider the null geodesic starting at this point and with velocity directed by the line from the point observer. We then solve the geodesic equations (backwards) and we see if the ray ends in (came from) the black hole or touches the sphere somewhere. If so, the RGB value of the pixel on the screen is given by the value of the landing pixel on the sphere and we carry this process on until every pixel has been worked out. We illustrate the process in Figure \ref{rt}.

In the case of an RNdS (non-rotating) black hole, the metric is spherically symmetric and, as described in Proposition \ref{into_weierstrass} and Corollary \ref{ini_conds_weierstrass}, a photon path is explicitly described in terms of the Weierstrass $\wp$ function, for which efficient approximation algorithms exist \cite{coquereaux,carlson}. Moreover, because of the symmetry, we don't have to compute every geodesic: given an initial datum, use a linear rotation to bring the initial velocity (and hence the full orbit) in the plane $\{\theta=\pi/2\}$. Then, we give values to the various constants involved in the expression of the polar radial geodesic and, instead of computing the full orbit, we simply solve the equation $r=r_{\Sph}$ where $r_{\Sph}$ is the radius of the celestial sphere. This can be done rather easily, precisely and quickly: we compute some values until we cross the sphere and the first such point is used as an initial value for the Newton method\footnote{We also use this procedure for the accretion disk, rather than a naive interpolation.}. We finally rotate the result back and find our landing pixel. Thus, no full orbit calculation nor ODE solving is required, making the resulting program rather fast.

\begin{center}
\begin{figure}[h!]
\hspace{-0.75cm}
\begin{subfigure}[c]{0.5\columnwidth}
\centering
\includegraphics[scale=0.22]{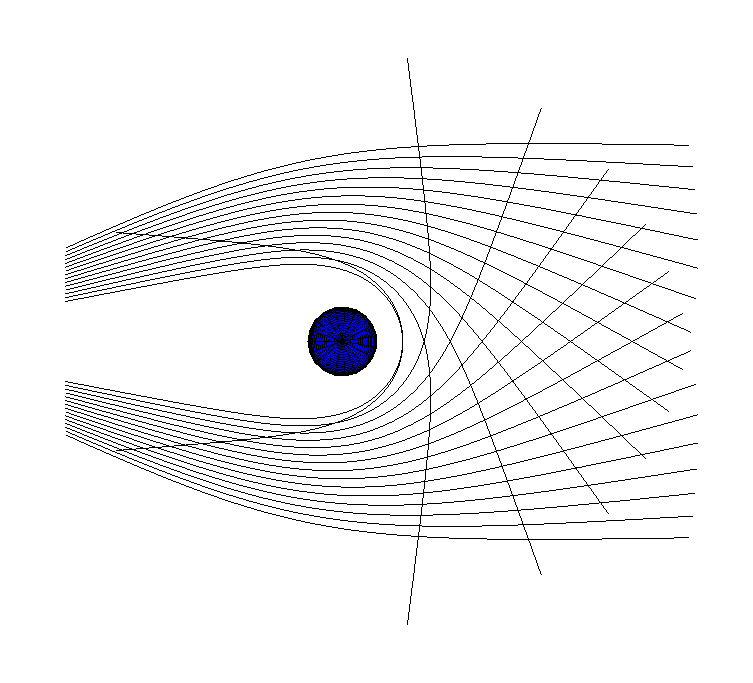}
\caption{Schwarzschild}
\end{subfigure}
\hfill
\begin{subfigure}[c]{0.5\columnwidth}
\includegraphics[scale=0.22]{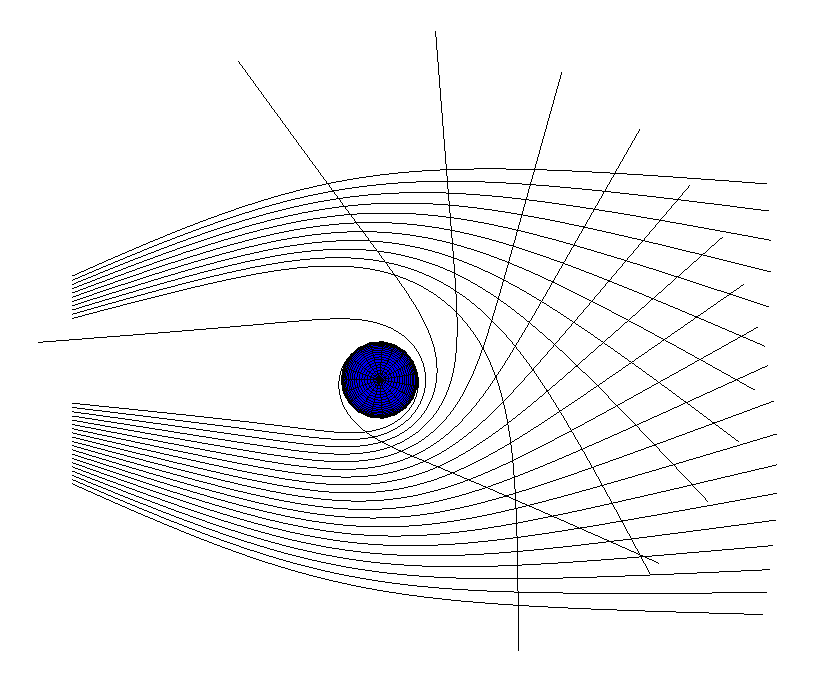}
\caption{Extremal Kerr}
\end{subfigure}
\caption{A pencil of equatorial rays near a black hole.}\label{rays}
\end{figure}
\end{center}

Concerning the accretion disk, we simply interpolate the plane $\{\theta=\pi/2\}$: if the geodesic ray hits the plane (up to some fixed threshold) at a point whose radius is between the extremal radii of the disk, then we compute the radiation temperature at this point, as well as the gravitational and Doppler effects described in \S\ref{mod_acc}. We then give the corresponding colors and brightness to the associated pixel on the screen.

All this requires a Scilab package for processing images. The package \texttt{IPCV 4.1.2}\footnote{See \url{https://atoms.scilab.org/toolboxes/IPCV} and \url{https://ipcv.scilab-academy.com}} is well-suited for this purpose. The command \texttt{imread} loads an image (\texttt{.jpg}, \texttt{.png}, etc) with $N\times M$ pixels and encodes it as an $N\times M\times 3$ hypermatrix with, for each $(i,j)\in\{0,\dotsc,N\}\times\{0,\dotsc,M\}$, the three RGB values of the pixel in position $(i,j)$. Then, we produce the pixels for the shadowed image as described above and put them in a similar $N\times M\times 3$ hypermatrix, which we can display as an image using the command \texttt{imshow}.

\begin{figure}[h!]
\begin{tikzpicture}[scale=0.9]
  \coordinate (z) at (0,0);
  \coordinate (x) at (-3.64,0);
  \coordinate (p) at (0,2);
  \coordinate (m) at (0,-2);
  \coordinate (mp) at (-1.67,1.08);
  \coordinate (mm) at (-1.67,-1.08);
  \coordinate (ppa) at (2,3.1416);
  \coordinate (mma) at (2,-3.1416);
  
  \draw[fill=black] (0,0.235) arc[start angle=90, end angle=-270,radius=0.235cm];
  
  \draw[draw=green,ultra thick] (mma)--(ppa);
  \draw[draw=orange,ultra thick] (mp)--(mm);
  \draw[draw=orange,dotted,ultra thick] (mm)--($(mp)!1.2!(mm)$) (mp)--($(mm)!1.2!(mp)$);
  
  \draw[dashed,<-,very thick] (1.2,2)--(1.8,2.5);
  \draw[dashed,->,very thick] (1.8,-2.5)--(1.2,-2);
  
  \draw[draw=red] (p) arc[start angle=90, end angle=-90,radius=2cm];
  \draw[draw=red,dotted] (p) arc[start angle=90, end angle=110,radius=2cm];
  \draw[draw=red,dotted] (m) arc[start angle=-90, end angle=-110,radius=2cm];
  \draw (x)--(p) (x)--(m);
  \draw[dotted] (p)--($(x)!1.7!(p)$) (m)--($(x)!1.7!(m)$);
  
  \draw[dashed,opacity=0.3,->] (-4,0)--(4,0);
  \draw[dashed,opacity=0.3,->] (0,-4)--(0,4);
  
  \draw[opacity=0.3] (4,0) node[below right]{$x$};
  \draw[opacity=0.3] (0,4) node[above right]{$y$};
  
  \draw [cyan] plot [smooth, tension=1] coordinates {(1.7,-2) (0,0.7) (x)} [arrow inside={end=stealth,opt={cyan,scale=2}}{0.2,0.4,0.66}];
  
  \matrix [draw] at (current bounding box.south west) {
  \node [draw=black, shape=circle, fill=black,label=right:Black hole] {}; \\
  \node [draw=red,shape=circle,line width=1,label=right:Celestial sphere] {}; \\
  \node [draw=green,shape=rectangle,line width=2,label=right:Original image] {}; \\
  \node [draw=orange,shape=rectangle,line width=2,label=right:Final screen] {}; \\
  \node [draw=cyan,shape=circle,line width=1,label=right:Photon path] {}; \\
  \node [draw=black,shape=rectangle,line width=1,dashed,label=right:Projection] {}; \\
};
 
  \draw[opacity=0.3,purple] plot [smooth,tension=1] coordinates {($(p)!0.35!(ppa)$) (2,0) ($(m)!0.35!(mma)$)};
  \draw[opacity=0.3,purple] plot [smooth,tension=1] coordinates {($(p)!0.55!(ppa)$) (2,0) ($(m)!0.55!(mma)$)};
  \draw[opacity=0.3,purple] plot [smooth,tension=1] coordinates {($(p)!0.75!(ppa)$) (2,0) ($(m)!0.75!(mma)$)};
\end{tikzpicture}
\caption{Schematics of our shadowing method (in the $xy$-plane).}\label{rt}
\end{figure}
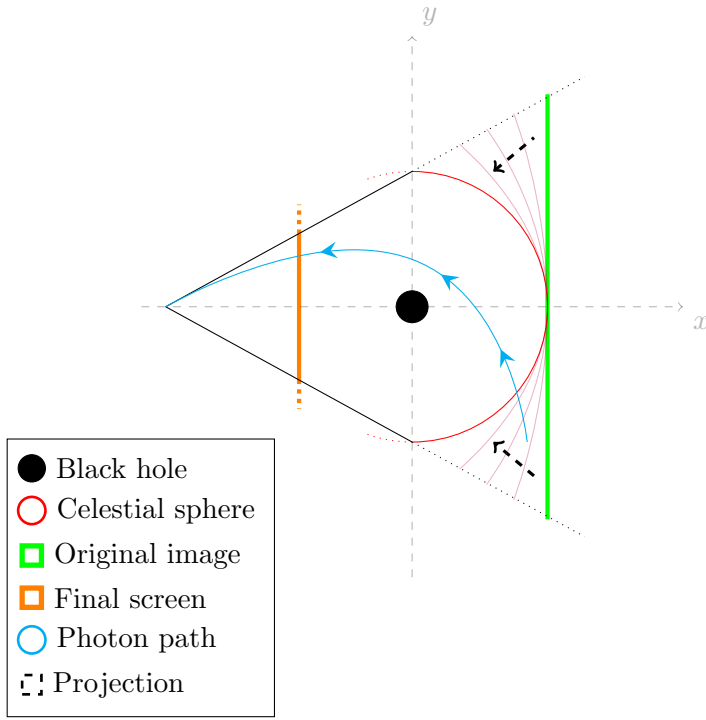

\subsection{Comparison of the accuracy and execution time of the integrators}\label{comparison}
We can now compare the different formulations and schemes. Regarding the accuracy, we compare the conservation of the Hamiltonian and Carter's constant along trajectories. For the execution times, we will shadow a KNdS black hole in low resolutions, as some of the methods are quite long and the differences are obvious, even in low resolution. We shall observe that the fastest and most accurate method is, as one could expect, the Carter equations. In the case of a RNdS black hole, the analytic method using $\wp$ is the fastest one.

For all calculations and illustrations, except otherwise stated, we consider a KNdS space with parameters\footnote{the corresponding unit-less cosmological constant is $\Lambda=3\cdot10^{-4}$} $\Lambda=3.4\cdot10^{-11}~{\rm m}^{-2}$, $M=4\cdot10^{30}~{\rm kg}\approx2M_\odot$, $a=0.95$ and $Q=0.3$. It should be mentioned that our shadowing program \texttt{shadow.sci} is designed for a black hole with mass around $2M_\odot$ (typically between $10^{-3}M_\odot$ and $10M_\odot$), because of the choices we had to make in the actual code: the integration interval for the affine parameter, the numerical tolerance on the intersection of a ray with the accretion disk or the celestial sphere, the position and size of the virtual screen, etc. Every computation was made on a 8-core 3.00 GHz CPU with 16 Go of RAM.

Consider a massive orbit ($\mu=-1$) with $(r_0,\theta_0,\phi_0,\dot{r}_0,\dot{\theta}_0,\dot{\phi}_0)=(12.3,\pi/2,0,0,0.014,0.014)$. This is a non-planar orbit so the Carter constant is not $0$. The orbit is depicted in Figure \ref{orbit}. The evolution of the Hamiltonian and Carter constant are depicted in Figures \ref{hams} and \ref{cars}.

To be more quantitative, the maximal deviations and execution times for each method are summarized in the Table \ref{compa}. We observe that the two symplectic Euler schemes are comparable in terms of conservation, but the $q$-implicit one is, as expected, almost twice as fast as the other one. The implicit St\"{o}rmer--Verlet scheme is far more efficient that the Verlet scheme, but it is also the slowest method. The Euler--Lagrange formulation is rather fast, but leaves the Hamiltonian far from being constant. The Hamilton equations are very efficient and rather fast, and seem to form the most reasonable method, except for the Carter equations (\ref{motion_with_momenta}). This last method is definitely the fastest and most accurate one.

Figure \ref{leaves} depicts some remarkable planar leaf-orbits. These all have $\theta_0=\pi/2$, $\phi_0=\dot{r}_0=\dot{\theta}_0=0$ and $\dot{\phi}_0=0.05$. These orbits are consistent with the ones from the figures in \cite{perez-giz-levin}.

In order to compare the different methods of integration regarding the shadowing process, we make several shadows of the same black hole, using the simple coloured grid displayed in Figure \ref{figure_0}. This picture is chosen so that the reader may easily compare our figures with the already existing ones in the literature, such as \cite[Fig. 11]{bacchini-ripperda}, \cite[Fig. 12]{osiris} or \cite[Fig. 9]{wang-chen-jing}. The resulting comparison images have a resolution of $100\times100$ and are depicted in Figure \ref{compa_shadows}. The corresponding execution times are in Table \ref{exec_times}. We can see that all the symplectic schemes display a singularity at the rotation axis.

We may also see how the execution times grow with the number of pixels. As the answer is pretty clear, we only made the computation for the same image and with resolutions from $10^2$ to $30^2$ pixels. The resulting graph (in log-log scale) is in Figure \ref{pixels_and_times}. As expected, the best method is, by far, the Carter equations (numerically integrated with the Adams methods from \cite{hindmarsh}). It should be mentioned that the function we used to draw these shadows is longer than the program \texttt{shadow.sci} using a specific method, since it is designed to work with all methods at once. We finish our comparison by considering the differences between the methods using Carter's equations and Weierstrass' functions, in the case $a=0$. We see in Figure \ref{slow_and_fast} that the resulting images are the same, while the analytic method using $\wp$ is highly faster than the ODE one. To make the time comparisons more quantitative, in Table \ref{expo-regs} we provide an exponential regression on the number of displayed pixels for each method.

\vspace{-2mm}

\begin{center}
\begin{figure}[h!]
\includegraphics[scale=0.242]{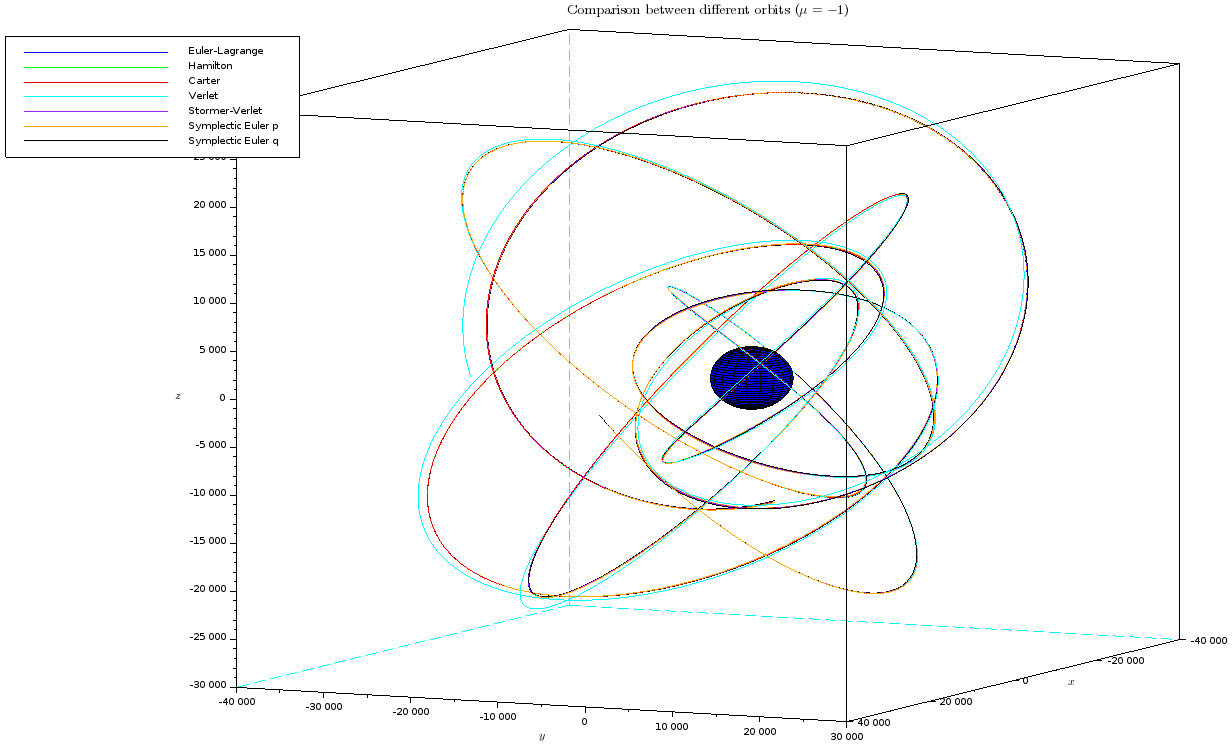}
\caption{A non-planar prograde orbit around a KNdS black hole. The central ellipsoid represents the outer horizon.}\label{orbit}
\end{figure}

\begin{table}[h!]
\begin{tabular}{c|c|c|c|}
\cline{2-4} & Max deviation on $\mathcal{H}$ & Max deviation on $C$ & Execution time (sec) \\
 \hline
\multicolumn{1}{|c|}{Carter} & $7.53\cdot10^{-7}$ & $1.67\cdot10^{-5}$ & $0.481$ \\
\multicolumn{1}{|c|}{Hamilton} & $8.91\cdot10^{-7}$ & $1.46\cdot10^{-5}$ & $1.083$ \\
\multicolumn{1}{|c|}{St\"{o}rmer--Verlet} & $5.71\cdot10^{-3}$ & $4.92\cdot10^{-4}$ & $4.818$ \\
\multicolumn{1}{|c|}{Euler--Lagrange} & $9.99\cdot10^{-1}$ & $1.25\cdot10^{-5}$ & $1.065$ \\
\multicolumn{1}{|c|}{Verlet} & $9.74\cdot10^{-1}$ & $1.19\cdot10^{-2}$ & $1.876$ \\
\multicolumn{1}{|c|}{$q$-symplectic Euler} & $9.74\cdot10^{-1}$ & $2.29\cdot10^{-2}$ & $2.505$ \\
\multicolumn{1}{|c|}{$p$-symplectic Euler} & $9.74\cdot10^{-1}$ & $2.36\cdot10^{-2}$ & $4.346$ \\
\hline
\end{tabular}
\captionof{table}{Comparison of the methods (the deviations are expressed as absolute percentages of the initial value).}
\label{compa}
\end{table}

\begin{figure}[h!]
\includegraphics[scale=0.242]{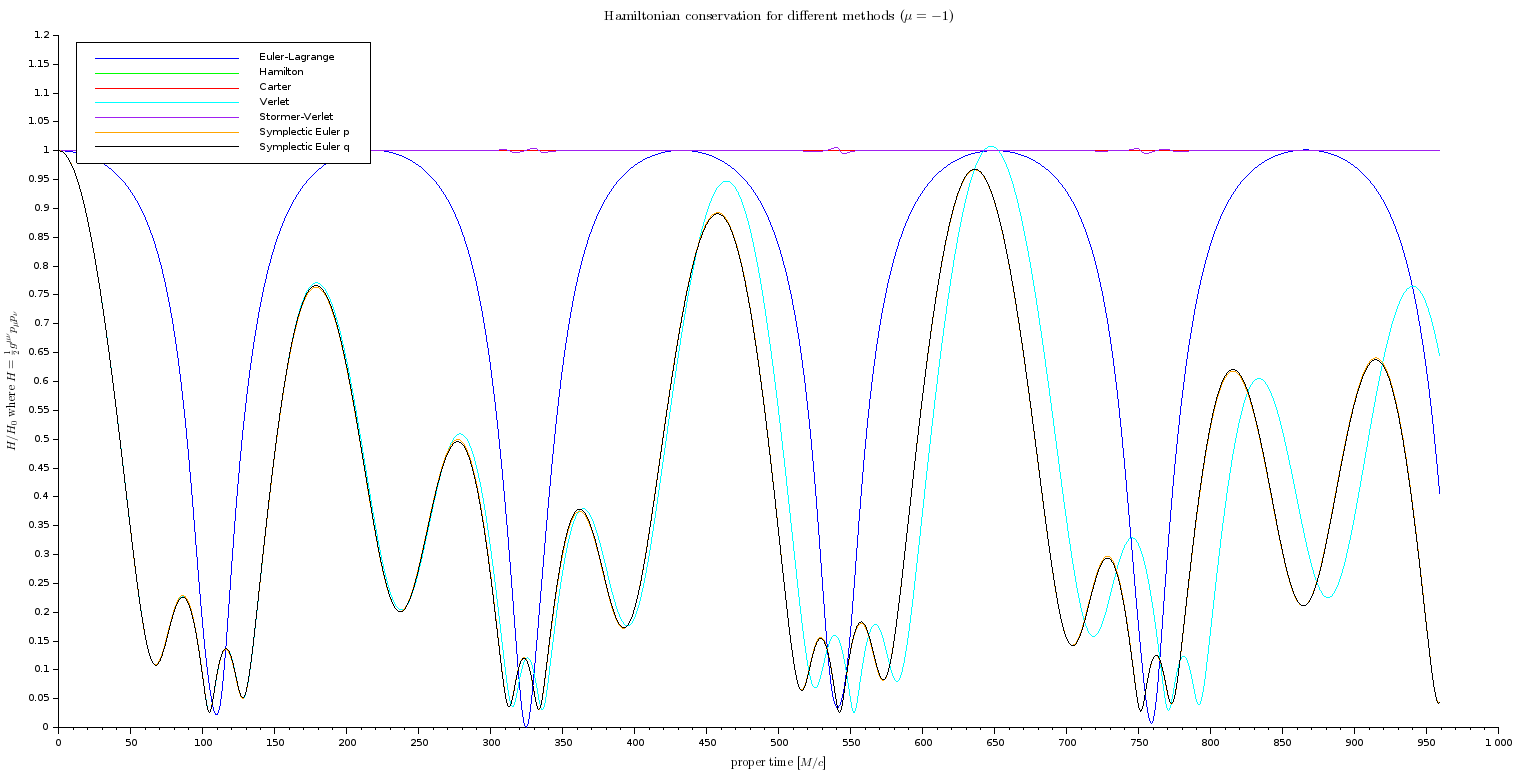}
\caption{Hamiltonian evolution.}\label{hams}
\end{figure}

\begin{figure}[h!]
\includegraphics[scale=0.242]{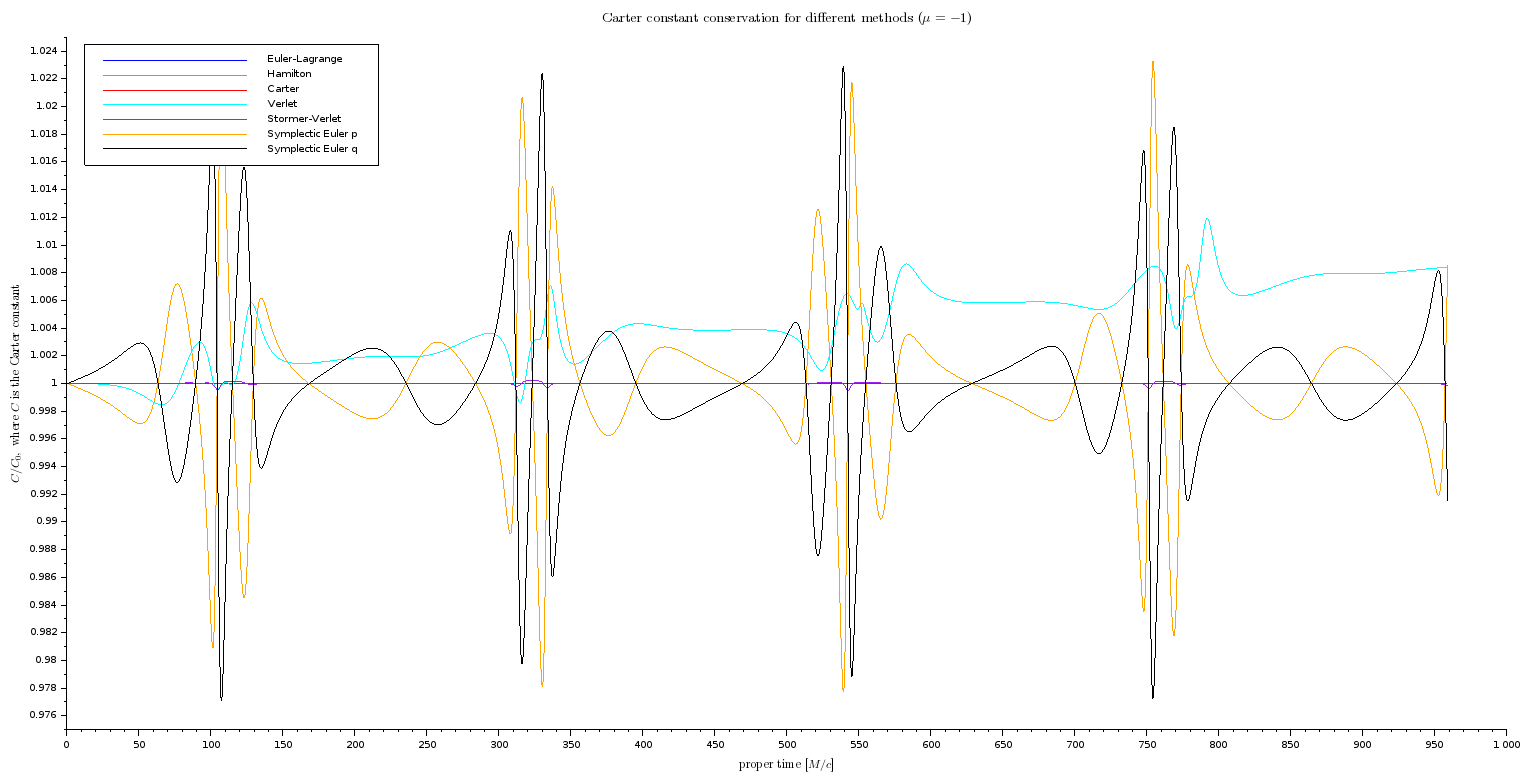}
\caption{Carter's constant evolution.}\label{cars}
\end{figure}
\end{center}

\begin{center}
\begin{figure}[h!]
\includegraphics[scale=0.242]{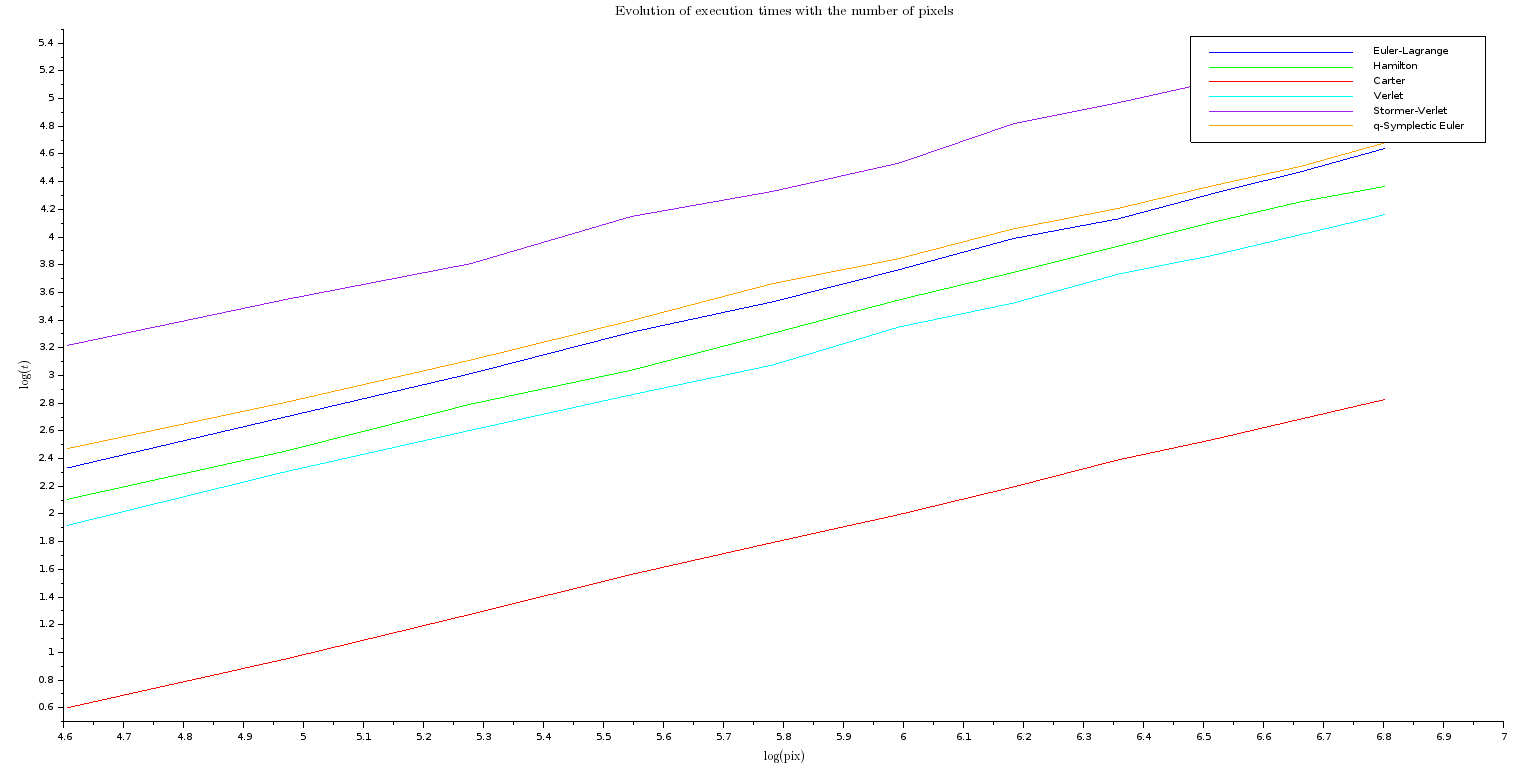}
\caption{Growth of execution times with the number of pixels to shadow.}\label{pixels_and_times}
\end{figure}
\end{center}

\vspace{-8mm}

\begin{center}
\begin{figure}[h!]
\begin{subfigure}[c]{0.4\textwidth}
\centering
\includegraphics[scale=0.135]{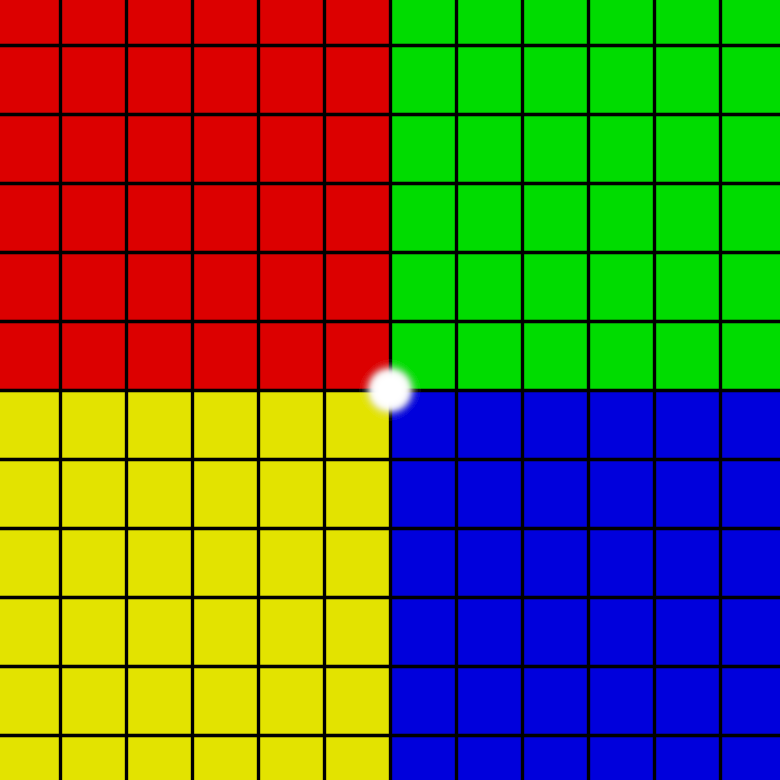}
\caption{The original grid used in the programs.}
\end{subfigure}
\hfill
\begin{subfigure}[c]{0.4\textwidth}
\centering
\includegraphics[scale=0.15]{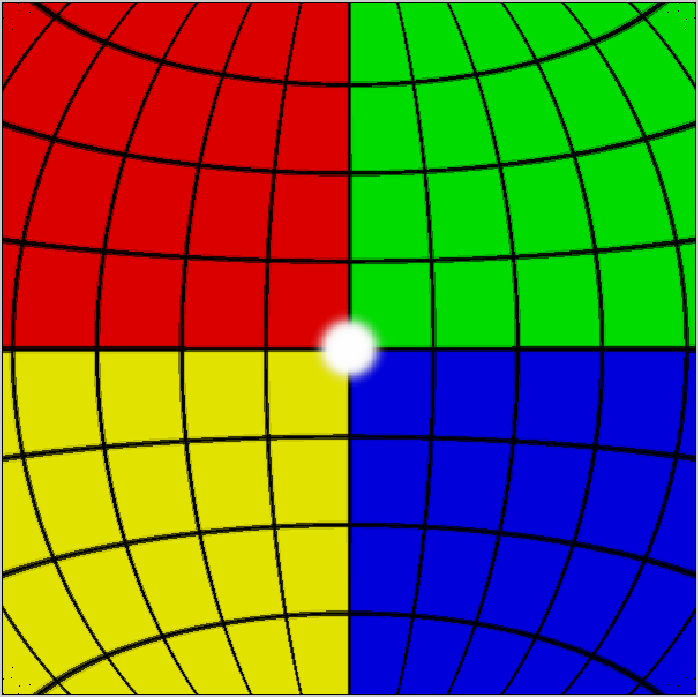}
\caption{Picture obtained with \texttt{shadow.sci}, setting $\Lambda=M=a=Q=0$.}
\end{subfigure}
\caption{The base pictures.}\label{figure_0}
\end{figure}
\end{center}

\vspace{-8mm}

\begin{center}
\begin{figure}[h!]
\begin{subfigure}[c]{0.3\columnwidth}
\centering
\includegraphics[width=0.7\linewidth]{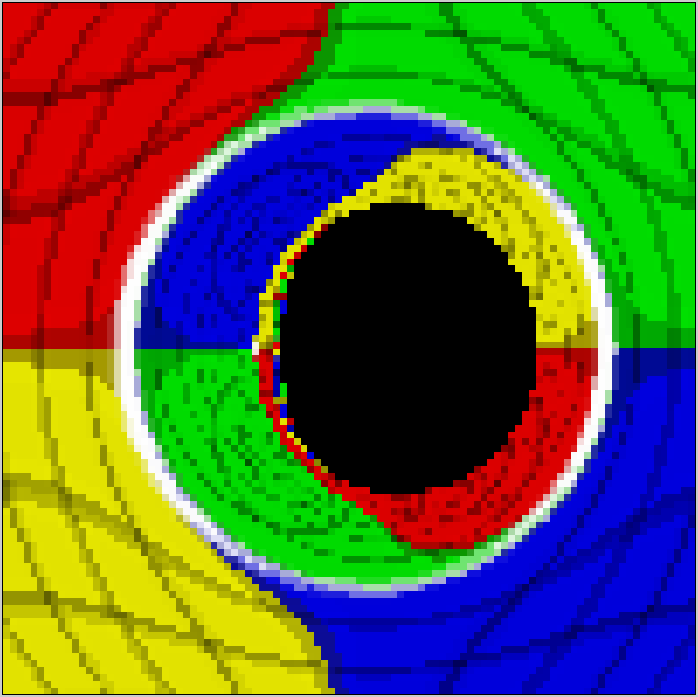}
\caption{Euler--Lagrange}
\end{subfigure}
\hfill
\begin{subfigure}[c]{0.3\columnwidth}
\centering
\includegraphics[width=0.7\linewidth]{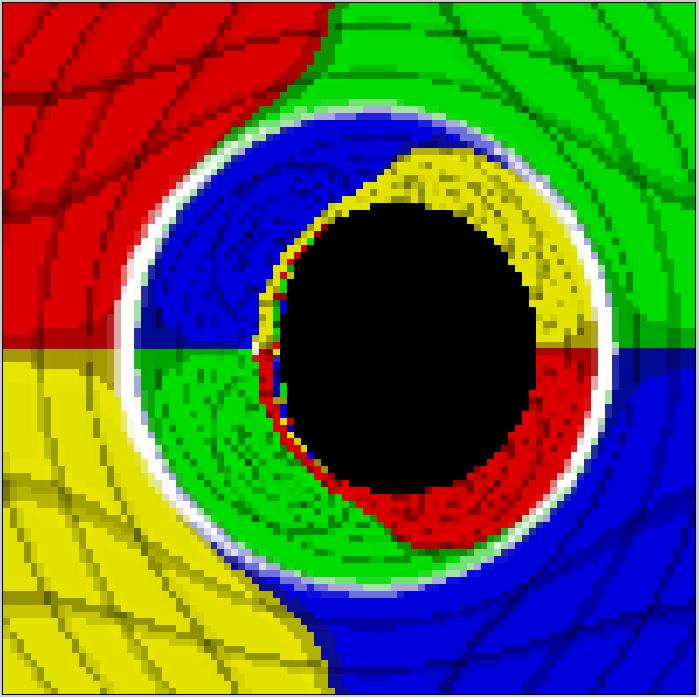}
\caption{Hamilton}
\end{subfigure}
\hfill
\begin{subfigure}[c]{0.3\columnwidth}
\centering
\includegraphics[width=0.7\linewidth]{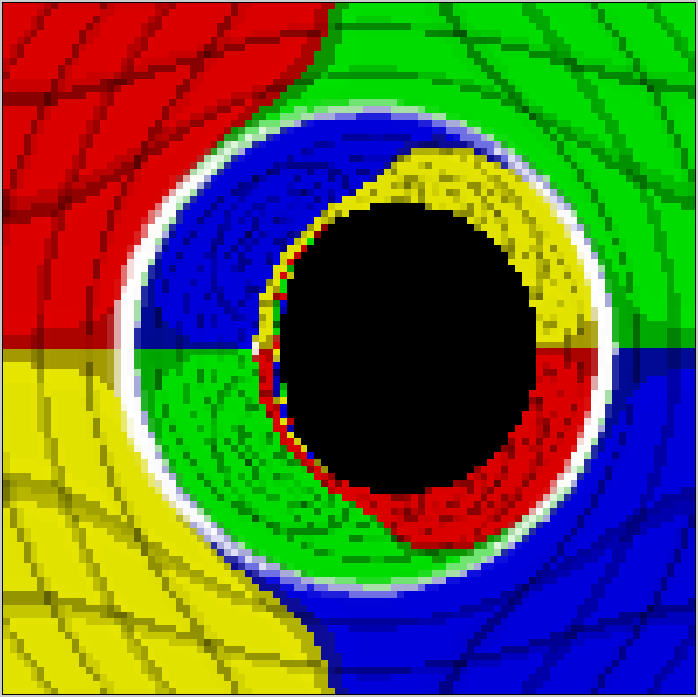}
\caption{Carter}
\end{subfigure}\\[1em]

\begin{subfigure}[c]{0.3\columnwidth}
\centering
\includegraphics[width=0.7\linewidth]{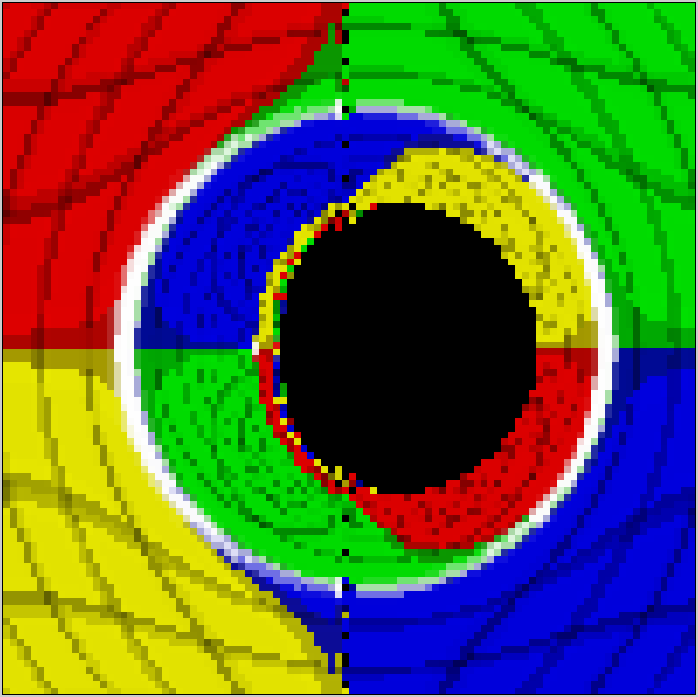}
\caption{Verlet}
\end{subfigure}
\hfill
\begin{subfigure}[c]{0.3\columnwidth}
\centering
\includegraphics[width=0.7\linewidth]{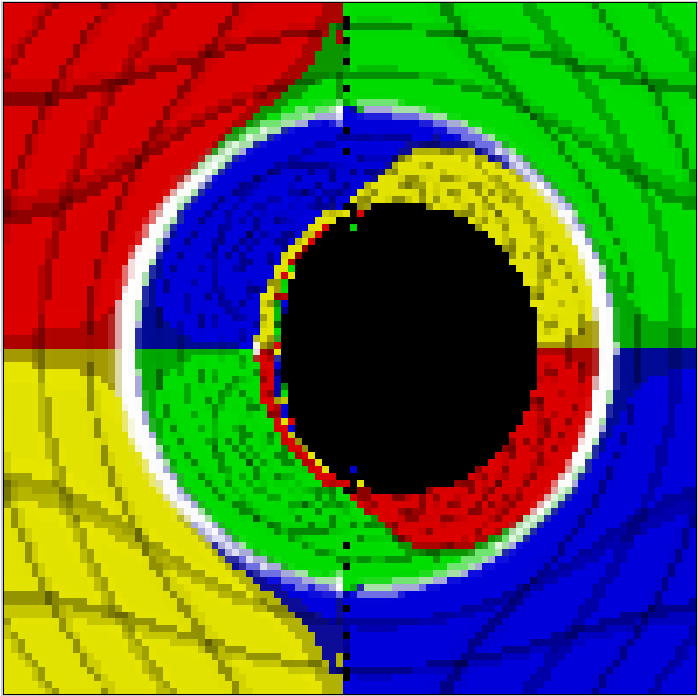}
\caption{St\"{o}rmer--Verlet}
\end{subfigure}
\hfill
\begin{subfigure}[c]{0.3\columnwidth}
\centering
\includegraphics[width=0.7\linewidth]{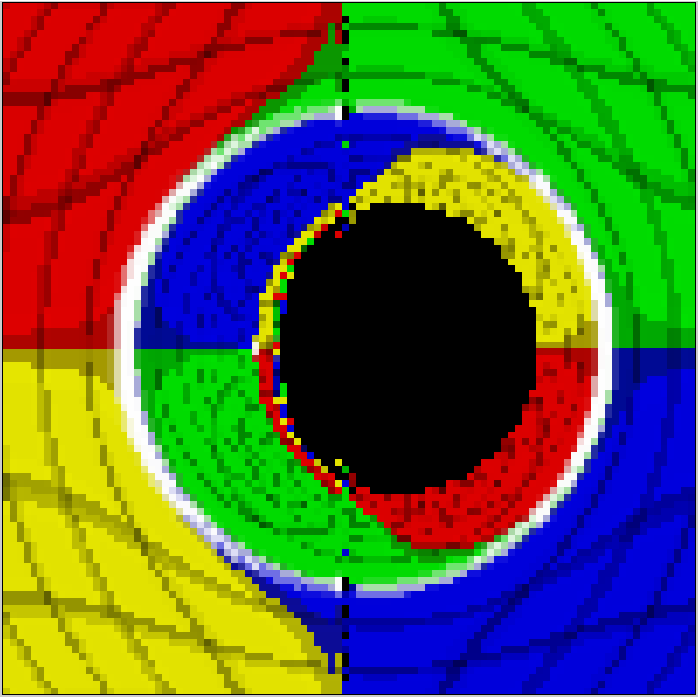}
\caption{$q$-symplectic Euler}
\end{subfigure}
\caption{Shadows obtained with the different methods ($100^2$ pixels).}
\label{compa_shadows}
\end{figure}
\end{center}

\vspace{-5mm}


\begin{table}[h!]
\begin{tabular}{c|c|c|c|c|c|c|}
\cline{2-7} & Carter & Verlet & Hamilton & Euler--Lagrange & $q$-symplectic Euler & St\"{o}rmer--Verlet \\ 
 \hline
\multicolumn{1}{|c|}{Times (sec)} & 190 & 690 & 887 & 1116 & 1198 & 2391 \\
\hline
\end{tabular}
\captionof{table}{Execution times for the shadows of Figure \ref{compa_shadows}.}\label{exec_times}
\end{table}

\begin{table}[h!]
\begin{tabular}{c|c||c|c|c|}
\cline{2-5} & \backslashbox{Method}{\raise-.3ex\hbox{Parameters}} & $a$ & $k$ & $\sigma$ \\
\hline
\multicolumn{1}{|c|}{\multirow{6}{*}{\parbox{5.5cm}{General shadowing program, $\text{pix}=\{10,12,14,\dotsc,30\}$.}}} & Carter & 1.0181 & 4.0953 & 0.0077 \\
\multicolumn{1}{|c|}{} & Verlet & 1.0233 & 2.7985 & 0.0173 \\
\multicolumn{1}{|c|}{} & Hamilton & 1.05 & 2.7525 & 0.017 \\
\multicolumn{1}{|c|}{} & Euler--Lagrange & 1.047 & 2.5028 & 0.0136 \\
\multicolumn{1}{|c|}{} & $q$-symplectic Euler & 1.0072 & 2.1848 & 0.0151 \\
\multicolumn{1}{|c|}{} & St\"{o}rmer--Verlet & 1.0062 & 1.4515 & 0.0343 \\
\hline
\multicolumn{1}{|c|}{\multirow{2}{*}{\parbox{5.5cm}{Dedicated program \texttt{shadow.sci}, $\text{pix}=\{20,30,40,\dotsc,120\}$.}}} & Weierstrass & 0.8037 & 5.7222 & 0.0487 \\
\multicolumn{1}{|c|}{} & Carter & 1.0293 & 4.9464 & 0.0239 \\
\hline
\end{tabular}
\captionof{table}{Regression: ${\rm time}\approx e^{-k}({\rm pix})^{2a}$, with standard deviation $\sigma$.}
\label{expo-regs}
\end{table}

\begin{center}
\begin{figure}[h!]
\hspace{-.5cm}
\begin{subfigure}[c]{0.25\columnwidth}
\centering
\includegraphics[width=1\linewidth]{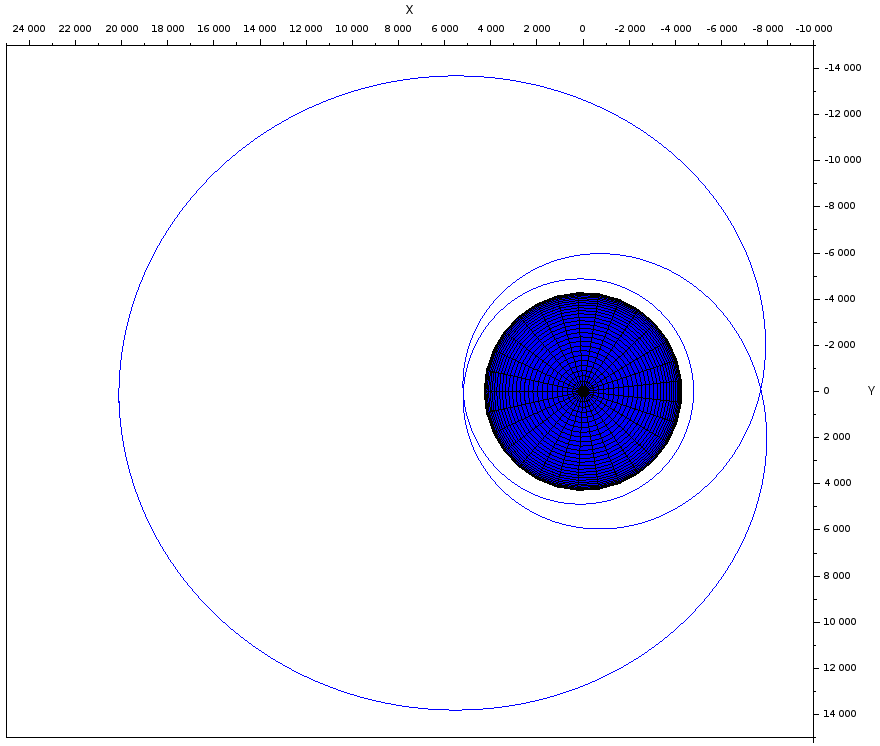}
\caption{$r_0=6.77253$}
\end{subfigure}
\hfill
\begin{subfigure}[c]{0.25\columnwidth}
\centering
\includegraphics[width=1\linewidth]{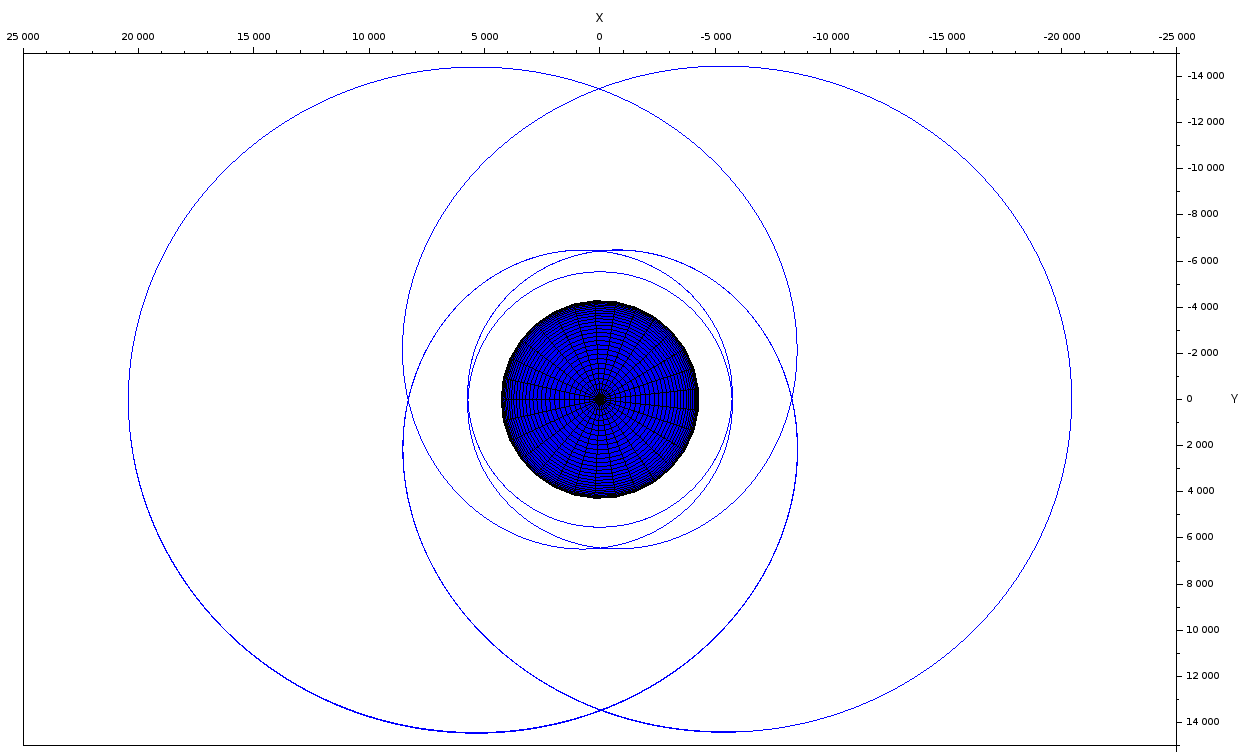}
\caption{$r_0=6.88102$}
\end{subfigure}
\hfill
\begin{subfigure}[c]{0.25\columnwidth}
\centering
\includegraphics[width=1\linewidth]{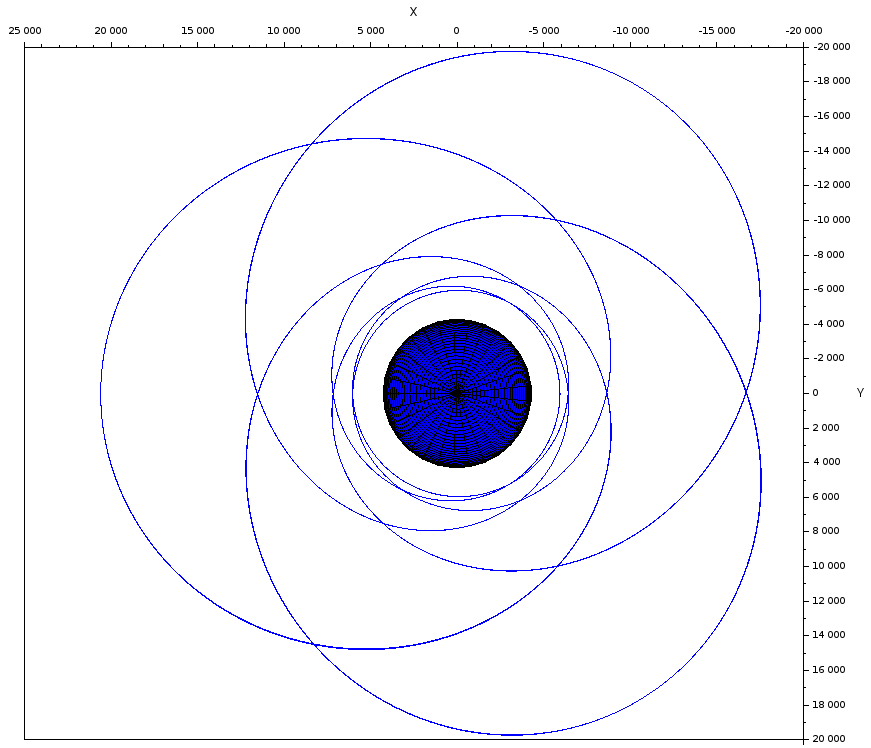}
\caption{$r_0=6.9361$}
\end{subfigure}
\hfill
\begin{subfigure}[c]{0.25\columnwidth}
\centering
\includegraphics[width=1\linewidth]{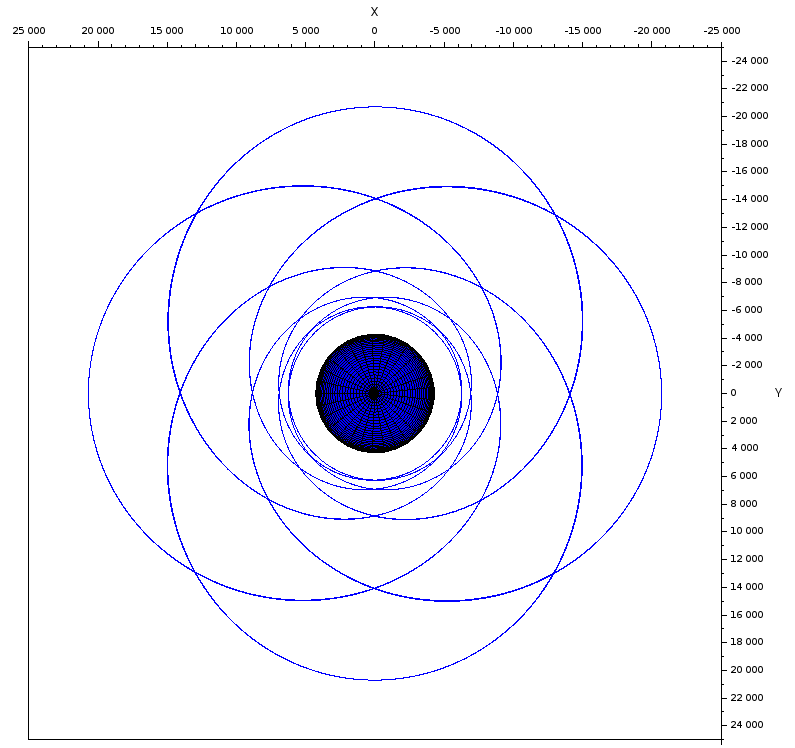}
\caption{$r_0=6.96938$}
\end{subfigure}\\
\hspace{-.5cm}
\begin{subfigure}[c]{0.25\columnwidth}
\centering
\includegraphics[width=1\linewidth]{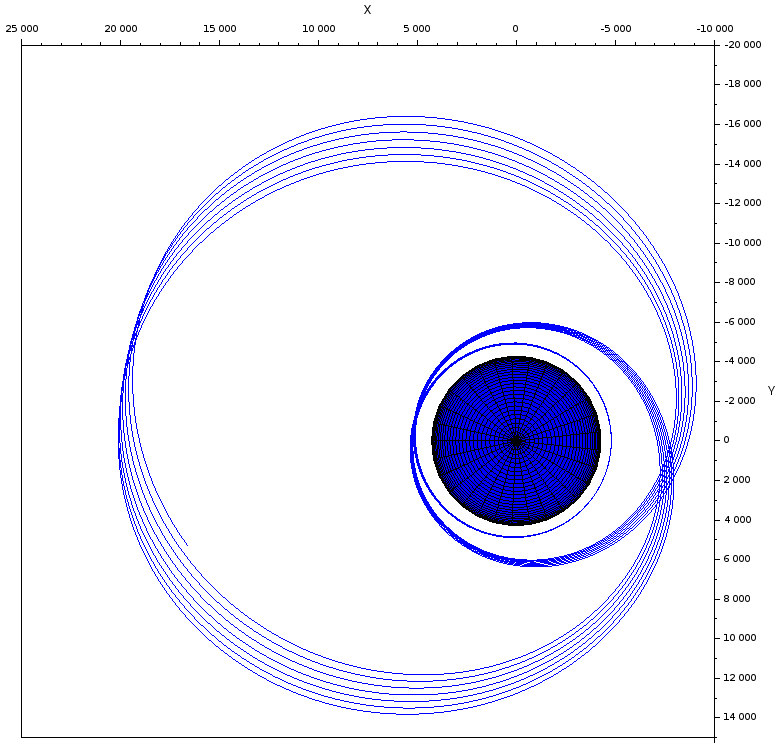}
\caption{$r_0=6.77253+10^{-3}$}
\end{subfigure}
\hfill
\begin{subfigure}[c]{0.25\columnwidth}
\centering
\includegraphics[width=1\linewidth]{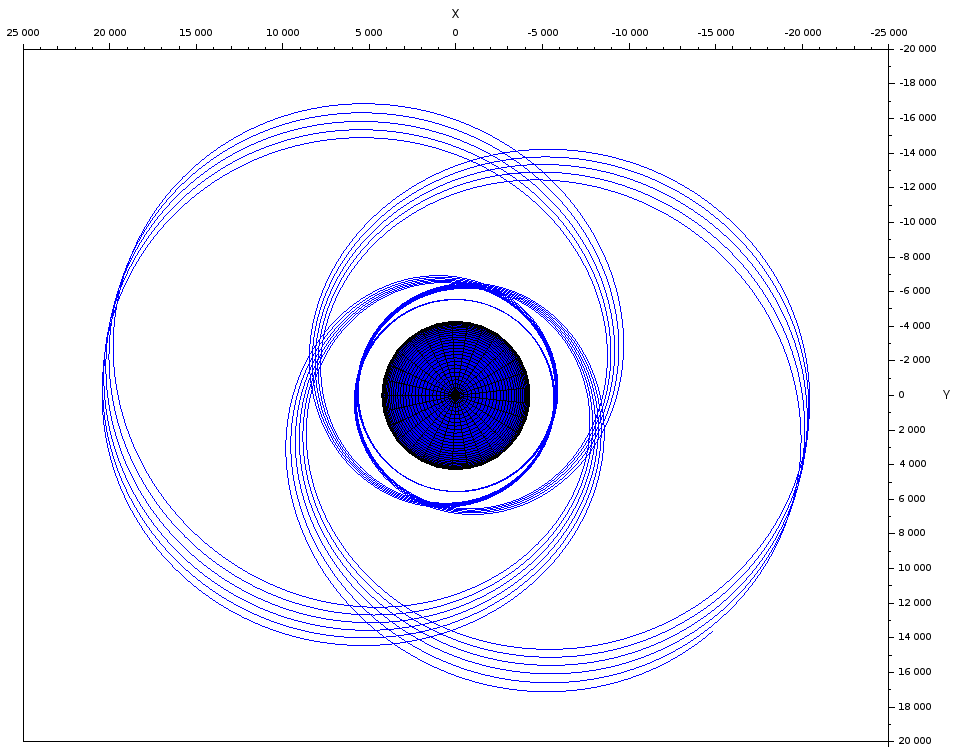}
\caption{$r_0=6.88102+10^{-3}$}
\end{subfigure}
\hfill
\begin{subfigure}[c]{0.25\columnwidth}
\centering
\includegraphics[width=1\linewidth]{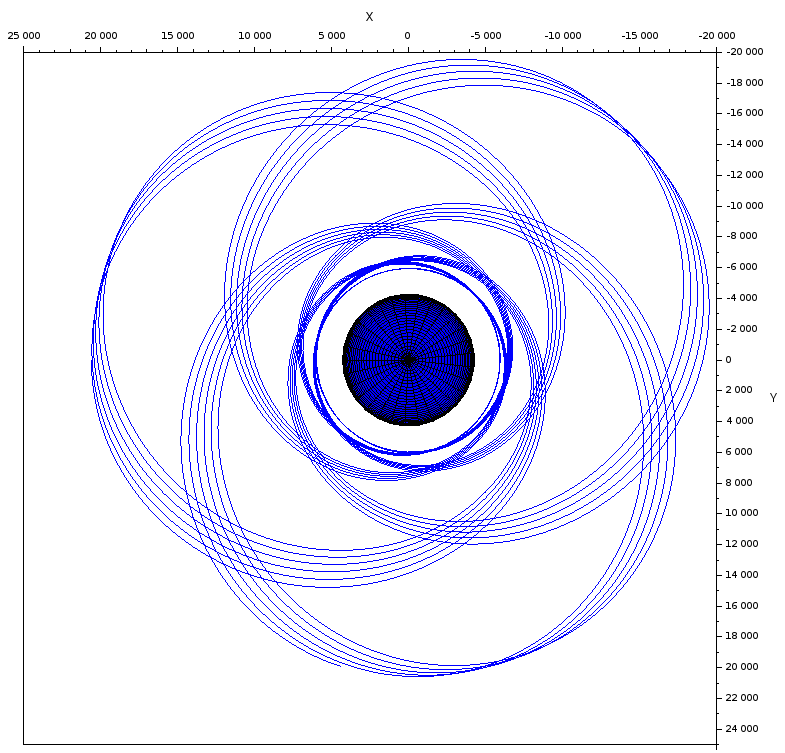}
\caption{$r_0=6.9361+10^{-3}$}
\end{subfigure}
\hfill
\begin{subfigure}[c]{0.25\columnwidth}
\centering
\includegraphics[width=1\linewidth]{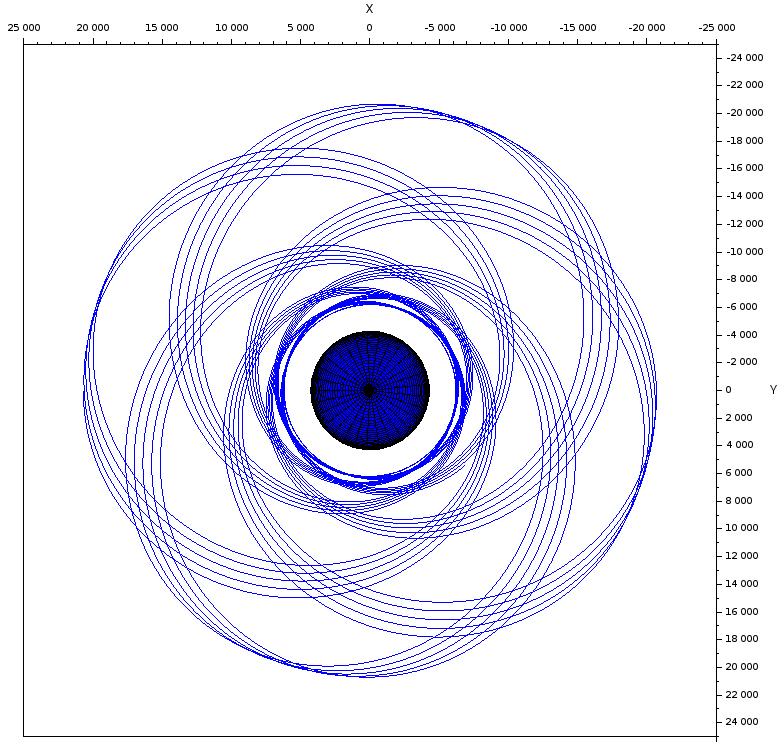}
\caption{$r_0=6.96938+10^{-3}$}
\end{subfigure}
\caption{Orbits with leaves (unit-less initial data).}\label{leaves}
\end{figure}
\end{center}

\begin{center}
\begin{figure}[h!]
\hspace{-.3cm}
\begin{subfigure}[c]{0.35\textwidth}
\centering
\includegraphics[scale=0.58]{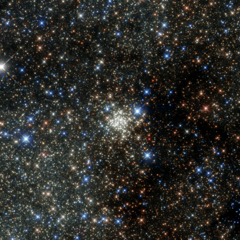}
\caption{The Arches cluster.}
\end{subfigure}
\hspace{-.5mm}
\begin{subfigure}[c]{0.3\textwidth}
\centering
\includegraphics[scale=0.2]{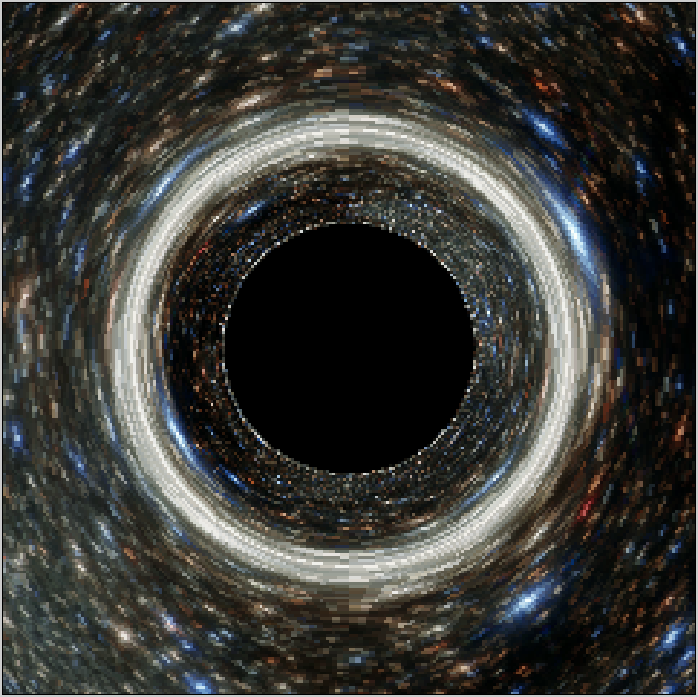}
\caption{Carter (521 s).}
\end{subfigure}
\hspace{5mm}
\begin{subfigure}[c]{0.3\textwidth}
\centering
\includegraphics[scale=0.2]{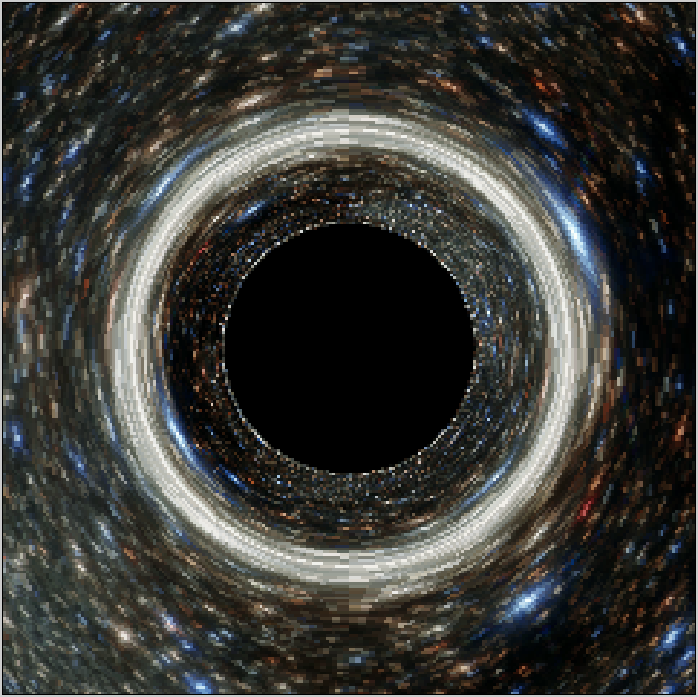}
\caption{Weierstrass (32 s).}
\end{subfigure}
\caption{Carter's and Weierstrass' methods ($a=0$), on a $240^2$ picture from the NASA (original image: \url{https://images.nasa.gov/details/GSFC_20171208_Archive_e000717}).}\label{slow_and_fast}
\end{figure}
\end{center}

\vspace{-7mm}

\begin{center}
\begin{figure}[h!]
\begin{subfigure}[b]{0.45\textwidth}
\centering
\includegraphics[scale=0.18]{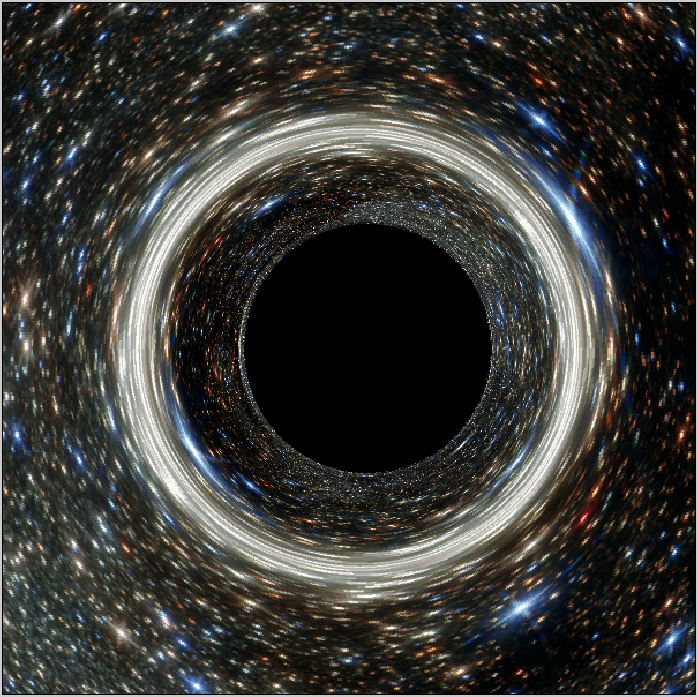}
\caption{$a=0.4$}
\end{subfigure}
\hfill
\begin{subfigure}[b]{0.45\textwidth}
\centering
\includegraphics[scale=0.18]{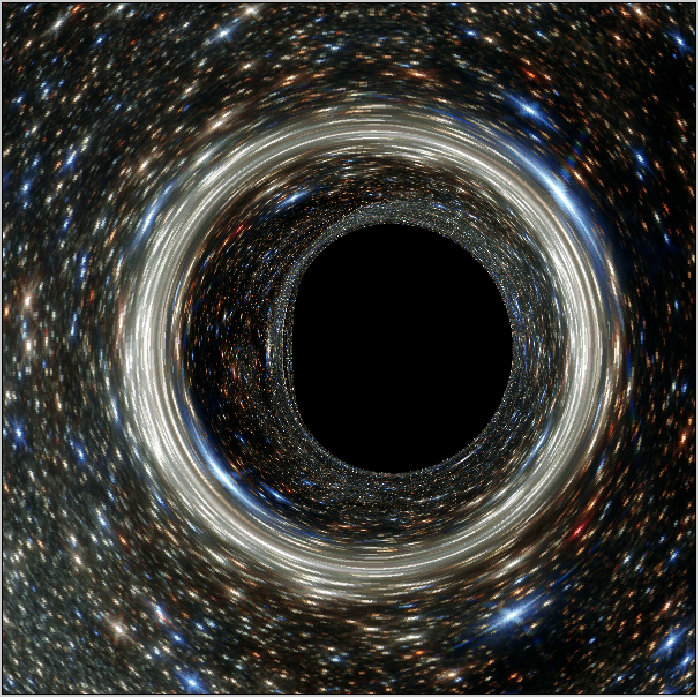}
\caption{$a=0.95$}
\end{subfigure}
\caption{Different Kerr parameters, with $\mathrm{pix}=600^2$, $t_e\approx 5200s$.}\label{KN_4_9}
\end{figure}
\end{center}

\vspace{-7mm}

\newpage
\subsection{Some illustrations and a simulation of M87*}
We finish by giving some figures using \texttt{shadow.sci} and we illustrate the model for the accretion disk described in \S\ref{mod_acc}. We include a simulation of the M87 black hole, using the data from \cite{M87}.

In each case, we label the figure with the resolution $\mathrm{pix}$ and average execution time $t_e$. We also indicate the inclination angle $i$ (from the symmetry axis), the inner (resp. outer) radius $r_i$ (resp. $r_e$) of the accretion disk, the accretion rate $\dot{M}$ used in the equation (\ref{SS}) and the brightness scaling $B_0$. Figure \ref{KN_4_9} displays the well-known influence of the Kerr parameter on the shadow. Figure \ref{effect2} shows the Doppler and gravitational redshifts of the light emitted by an accretion disk radiating as a blackbody, the resulting radiation temperatures being depicted in figure \ref{temper}.

Concerning the modelling of M87*, the black hole and accretion parameters are inspired by \cite{M87} and gathered in Table \ref{M87_para}. As explained in \S\ref{comparison}, the typical order of magnitude of the mass should be $2M_\odot$ for our programs; the mass of M87* is thus rescaled to $1.5M_\odot$. Next, as the physical value $\Lambda\sim10^{-52}{\rm m}^{-2}$ will not visibly affect the picture, we choose to take $\Lambda=0$. Also, we changed some values, marked with a star, which we had to arbitrarily choose (or change from the reference) for the implementation. Besides, we included the combination of the gravitational and Doppler effects in the computation. The result is depicted in Figure \ref{m87_pict}. To make the photon rings even more visible, we also took another picture of the same black hole. The only changed values are $r_i=5.82M$, $r_e=16M$, $\dot{M}=10$, $B_0=5000$. These figures can be compared with \cite[Figs. 8, 10, 11]{M87}.

\begin{table}[h!]
\begin{tabular}{|c|c|}
\hline
\textbf{Parameter} & \textbf{Value} \\
 \hline
 \hline
Mass* & $M=1.5M_{\odot}$ (value in \cite{M87}: $M\approx6.2\cdot10^{9}M_{\odot}$) \\
Cosmological constant & $\Lambda=0$ \\
Charge & $Q=0$ \\
Kerr parameter & $a=0.8$ \\
Accretion rate* & $\dot{M}=3$ \\
Inner radius & $r_i=2.91M$ \\ 
Outer radius* & $r_e=10M$ \\
Brightness rescaling & $B_0=3800$ \\
Angle from symmetry axis & $i=160^\circ$ \\
Resolution & $720$p \\
\hline
\end{tabular}
\captionof{table}{Parameters for the picture of M87* and its accretion disk.}
\label{M87_para}
\end{table}

\vspace{-5mm}

\begin{center}
\begin{figure}[h!]
\begin{subfigure}[c]{0.32\textwidth}
\centering
\includegraphics[scale=0.2]{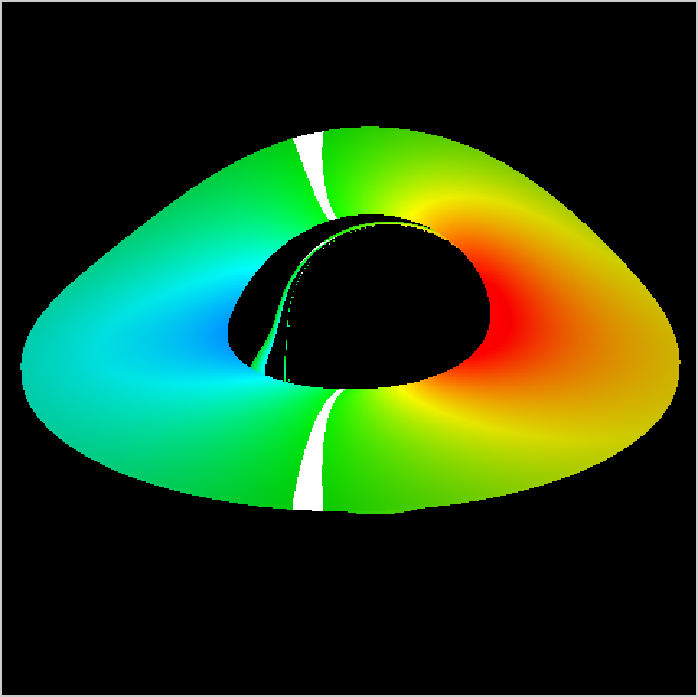}
\caption{Doppler redshift}
\end{subfigure}
\hfill
\begin{subfigure}[c]{0.32\textwidth}
\centering
\includegraphics[scale=0.2]{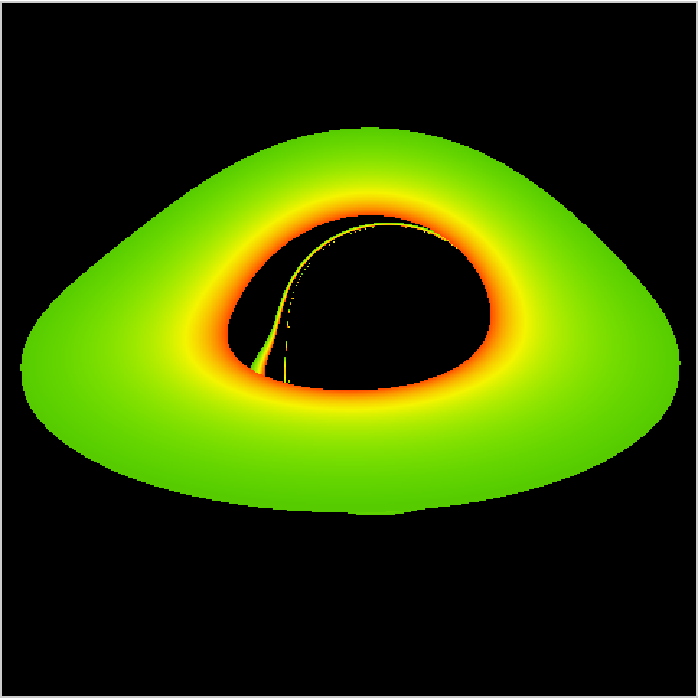}
\caption{Gravitational redshift}
\end{subfigure}
\hfill
\begin{subfigure}[c]{0.32\textwidth}
\centering
\includegraphics[scale=0.2]{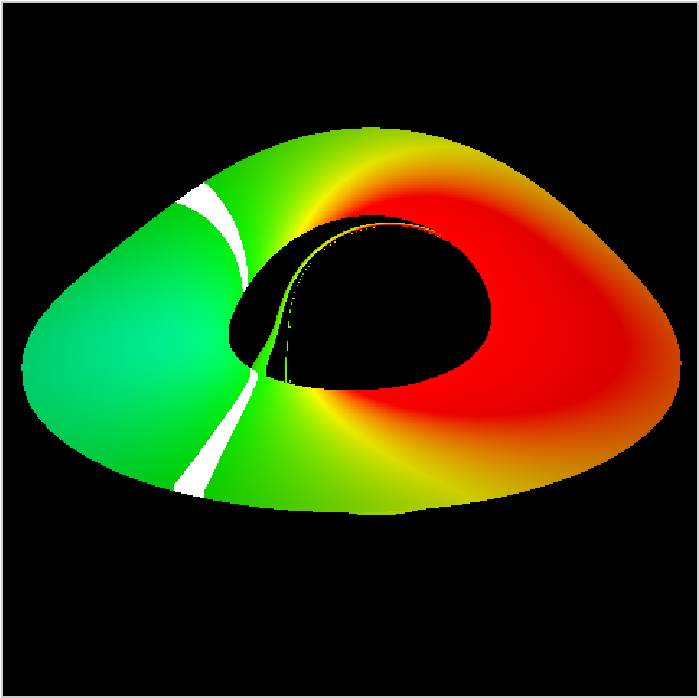}
\caption{Both}
\end{subfigure}
\caption{Different effects on the accretion disk, with $\mathrm{pix}=480^2$, $t_e\approx10^4s$, $i=3\pi/8$, $r_i=4M$, $r_e=12M$. The white strips correspond to where the shift is negligible. Compare with \cite[Figure 1]{bromley97}.}\label{effect2}
\end{figure}
\end{center}

\begin{center}
\begin{figure}[h!]
\begin{subfigure}[c]{0.45\textwidth}
\centering
\includegraphics[scale=0.22]{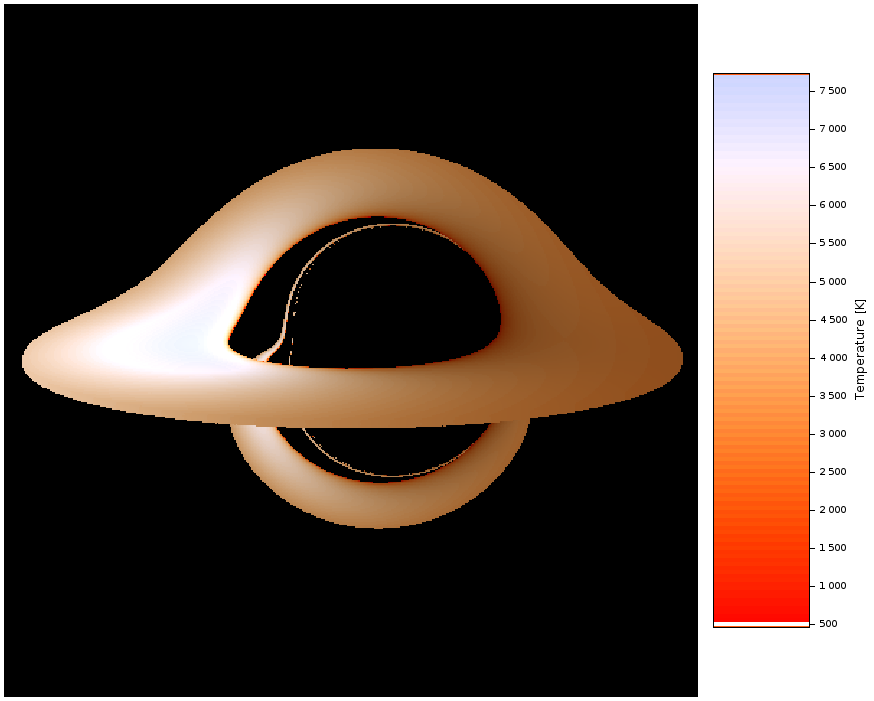}
\caption{$B_0=0$}
\end{subfigure}
\hfill
\begin{subfigure}[c]{0.45\textwidth}
\centering
\includegraphics[scale=0.22]{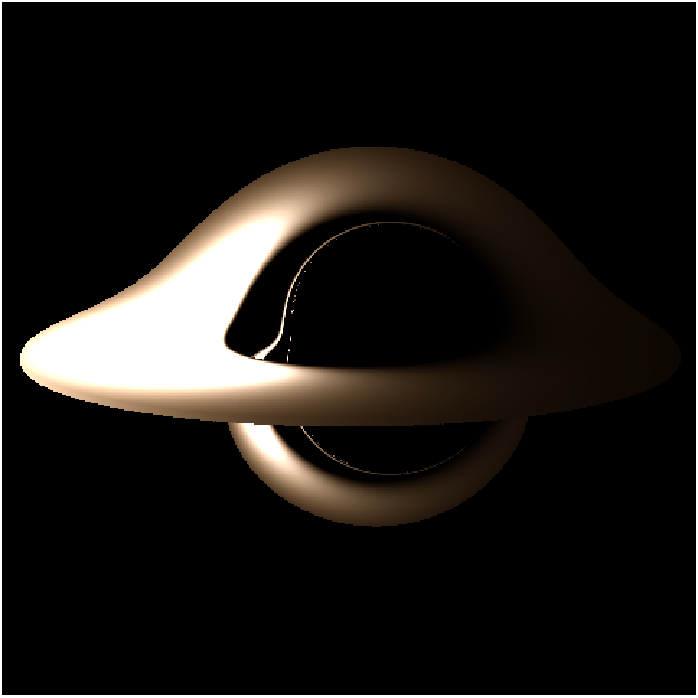}
\caption{$B_0=50$}
\end{subfigure}
\caption{Radiation temperatures, with $\mathrm{pix}=480^2$, $t_e\approx10^4s$, $i=4\pi/9$, $r_i=4M$, $r_e=12M$, $\dot{M}=90$.}\label{temper}
\end{figure}
\end{center}


\begin{center}
\begin{figure}[h!]
\begin{subfigure}[c]{0.292\columnwidth}
\centering
\includegraphics[width=1\linewidth]{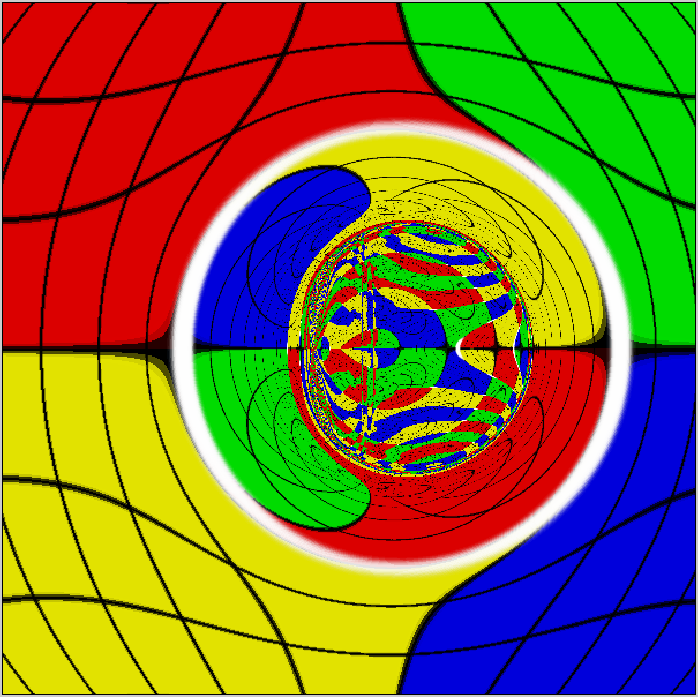}
\caption{$\Lambda=-1.2059\cdot10^{-2}$}
\end{subfigure}
\hfill
\begin{subfigure}[c]{0.292\columnwidth}
\centering
\includegraphics[width=1\linewidth]{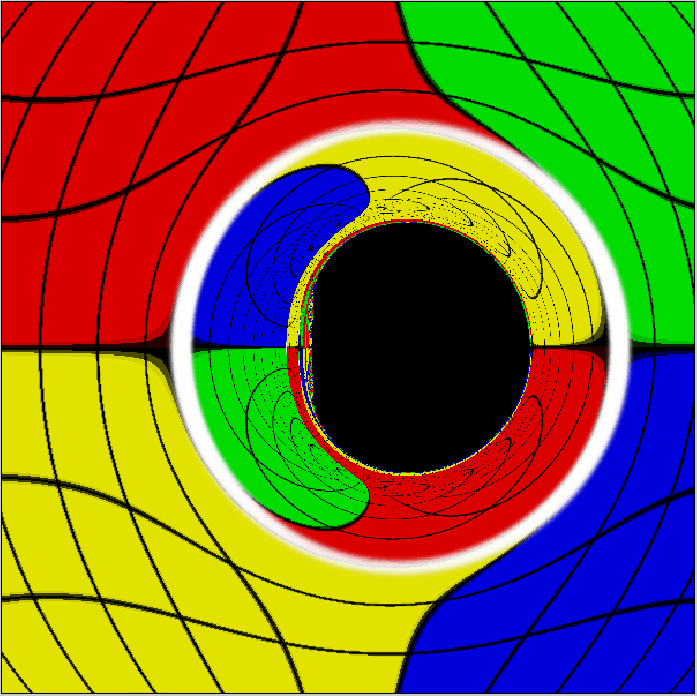}
\caption{$\Lambda=-1.2\cdot10^{-2}$}
\end{subfigure}
\hfill
\begin{subfigure}[c]{0.292\columnwidth}
\centering
\includegraphics[width=1\linewidth]{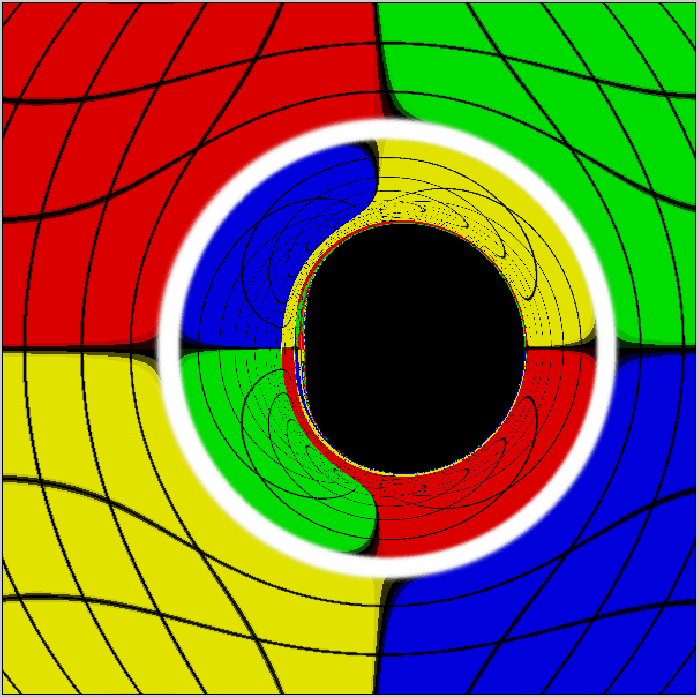}
\caption{$\Lambda=-6\cdot10^{-3}$}
\end{subfigure}\\[.1em]
\begin{subfigure}[c]{0.292\columnwidth}
\centering
\includegraphics[width=1\linewidth]{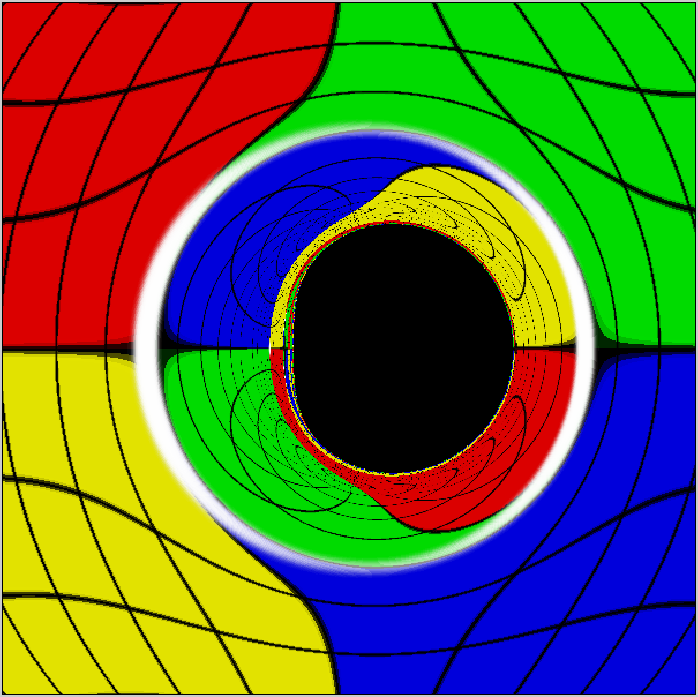}
\caption{$\Lambda=0$}
\end{subfigure}
\hspace{6.8mm}
\begin{subfigure}[c]{0.292\columnwidth}
\centering
\includegraphics[width=1\linewidth]{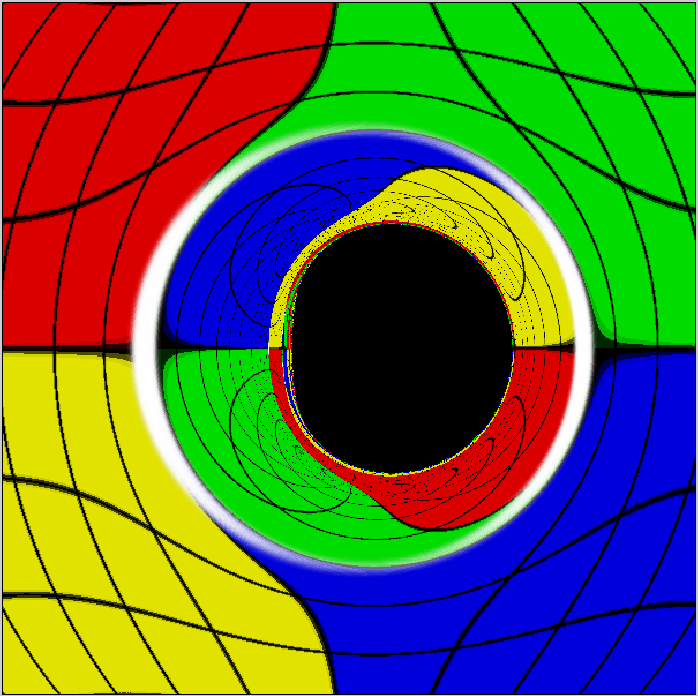}
\caption{$\Lambda=3\cdot10^{-4}$}
\end{subfigure}
\hspace{6.8mm}
\begin{subfigure}[c]{0.292\columnwidth}
\centering
\includegraphics[width=1\linewidth]{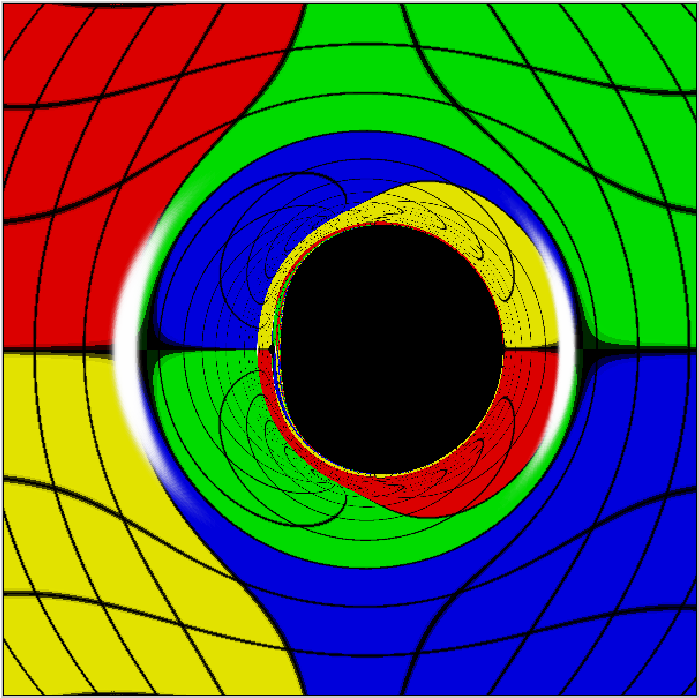}
\caption{$\Lambda=3\cdot10^{-3}$}
\end{subfigure}
\vspace{-.4cm}
\caption{Influence of $\Lambda$ (natural units) on shadows ($\mathrm{pix}=600^2$, $t_e\approx 5300s$ each). The upper left spacetime has no horizon since the quartic $\Delta_r=(1-\lambda r^2)(r^2+0.95^2)-2r+0.3^2$ has no real root for $\Lambda<-1.2034\cdot10^{-2}$.}
\label{lambda_on_shadows}
\end{figure}


\begin{figure}[h!]
\begin{subfigure}[c]{0.292\columnwidth}
\centering
\includegraphics[width=1\linewidth]{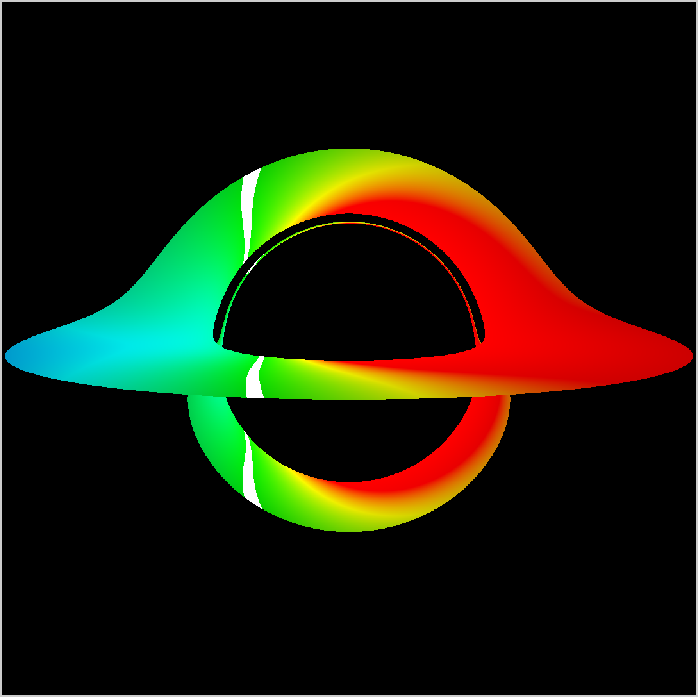}
\end{subfigure}
\hfill
\begin{subfigure}[c]{0.292\columnwidth}
\centering
\includegraphics[width=1\linewidth]{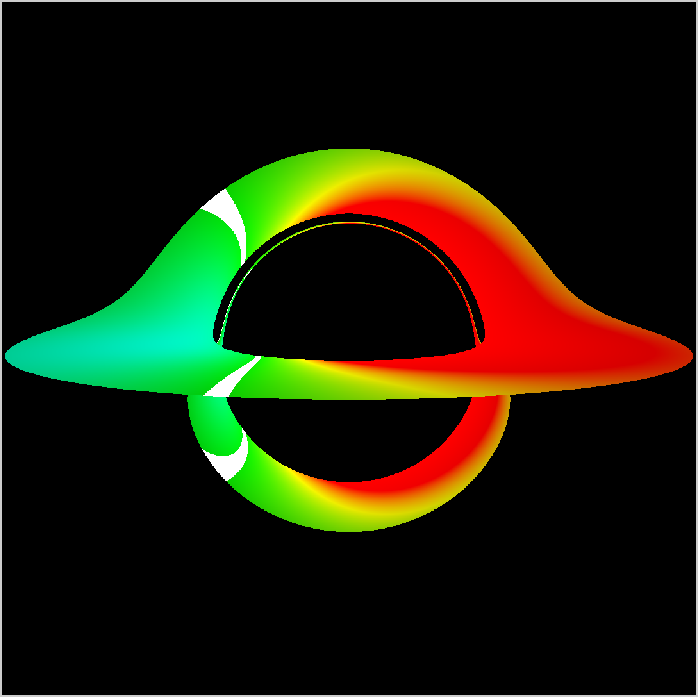}
\end{subfigure}
\hfill
\begin{subfigure}[c]{0.292\columnwidth}
\centering
\includegraphics[width=1\linewidth]{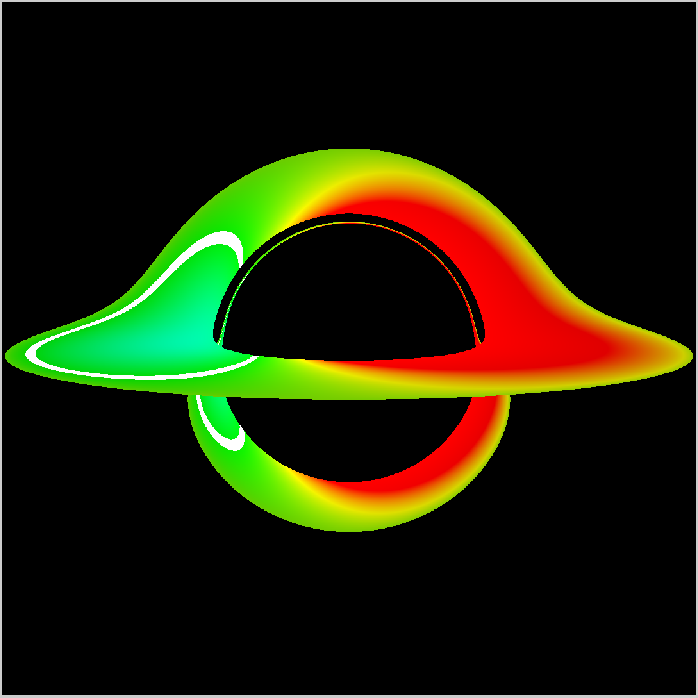}
\end{subfigure}\\[.1em]
\begin{subfigure}[c]{0.292\columnwidth}
\centering
\includegraphics[width=1\linewidth]{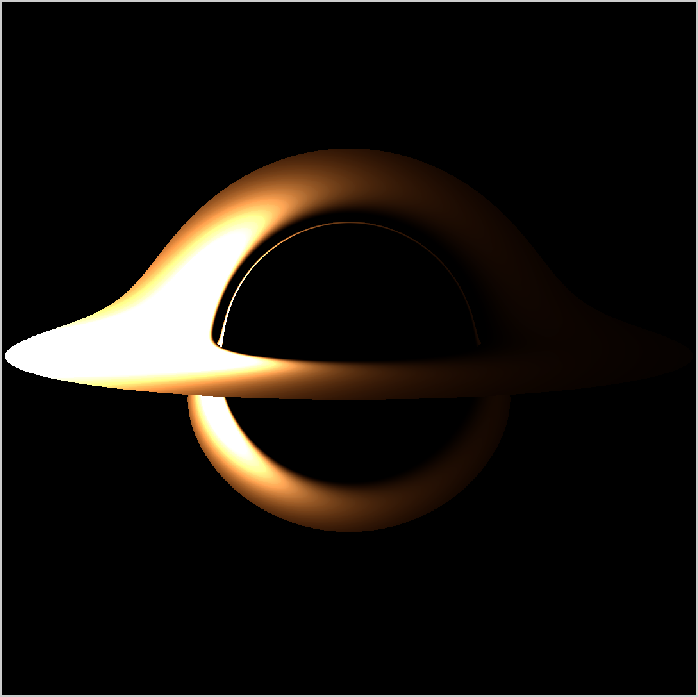}
\caption{$\Lambda=-1.5\cdot10^{-3}$}
\end{subfigure}
\hspace{6.8mm}
\begin{subfigure}[c]{0.292\columnwidth}
\centering
\includegraphics[width=1\linewidth]{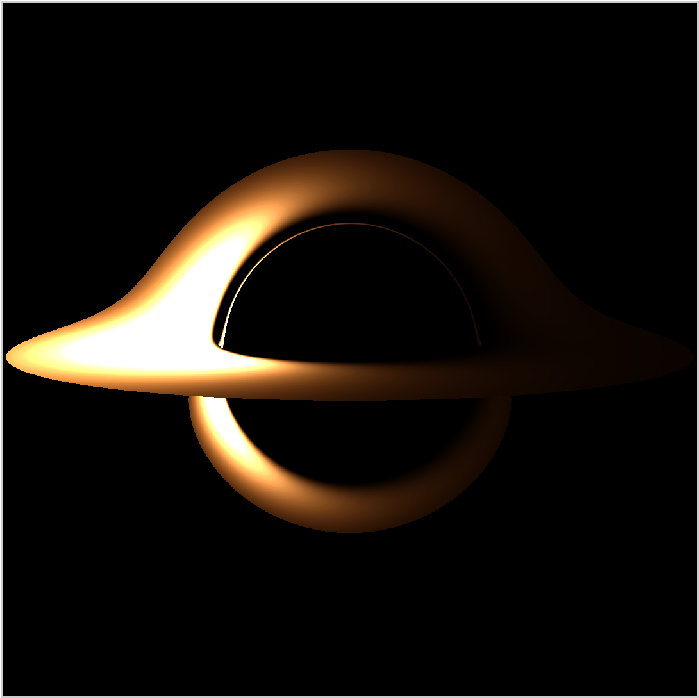}
\caption{$\Lambda=0$}
\end{subfigure}
\hspace{6.8mm}
\begin{subfigure}[c]{0.292\columnwidth}
\centering
\includegraphics[width=1\linewidth]{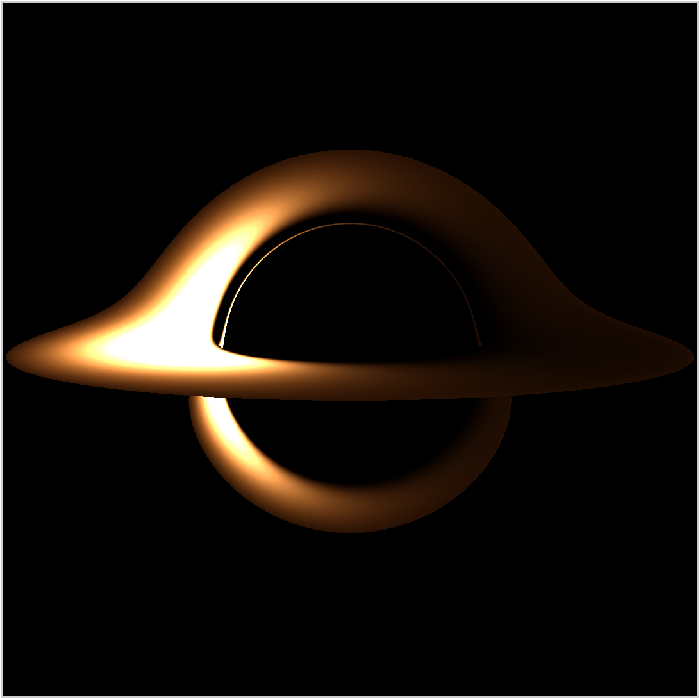}
\caption{$\Lambda=1.5\cdot10^{-3}$}
\end{subfigure}
\vspace{-.4cm}
\caption{Influence of $\Lambda$ (natural units) on the Doppler and gravitational redshift, and on the radiation temperature of an RNdS black hole ($Q=0$), with $\mathrm{pix}=1080^2$; $t_e\approx 2200s$, $i=13\pi/28$, $r_i=4M$, $r_e=12.5M$, $\dot{M}=20$ and $B_0=300$. Compare with \cite[Fig. 3]{gyoto} and \cite[Fig. 13]{osiris}.}
\label{lambda_on_accretion}
\end{figure}
\end{center}

\begin{center}
\begin{figure}[h!]
\begin{subfigure}[c]{0.45\textwidth}
\centering
\includegraphics[scale=0.225]{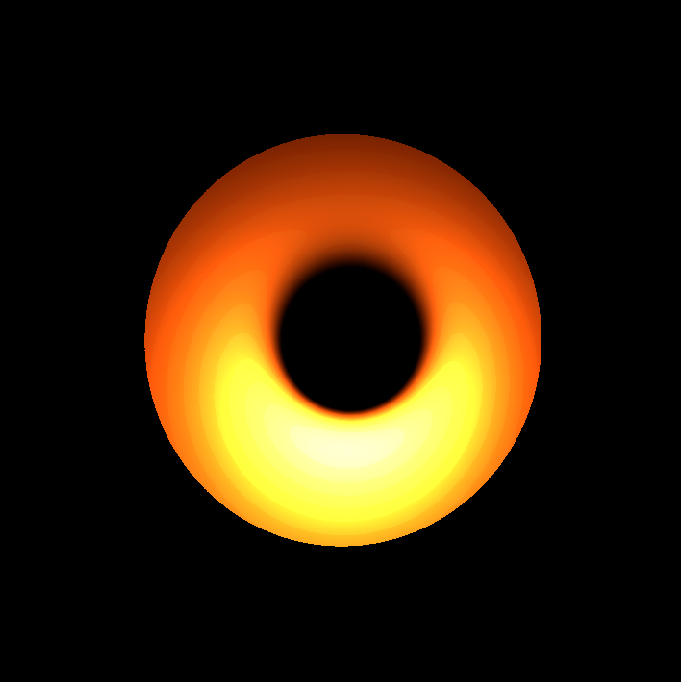}
\caption{$r_i=2.91M$}\label{m877}
\end{subfigure}
\hfill
\begin{subfigure}[c]{0.45\textwidth}
\centering
\includegraphics[scale=0.22]{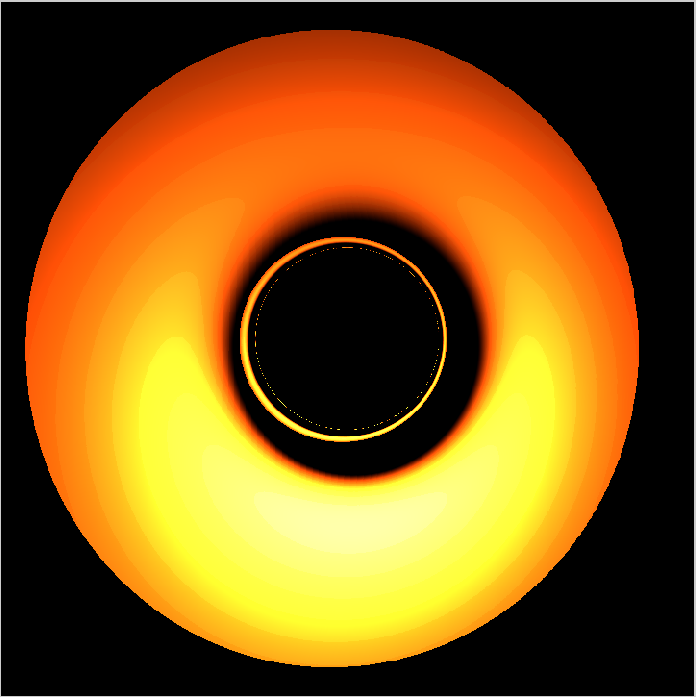}
\caption{$r_i=5.82M$}
\end{subfigure}
\caption{Simulations of M87*.}\label{m87_pict}
\end{figure}
\end{center}

\begin{center}
\vspace{-2cm}
\begin{figure}[h!]
\begin{subfigure}[c]{0.45\textwidth}
\centering
\includegraphics[scale=0.225]{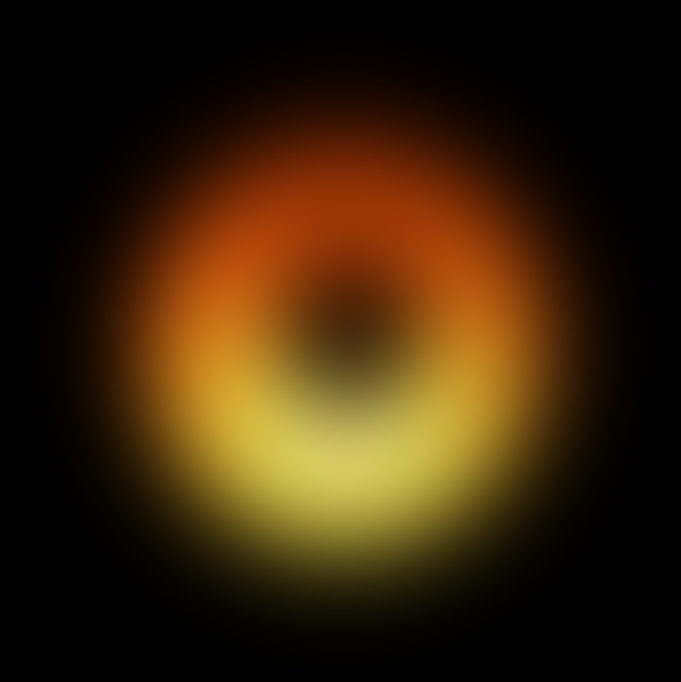}
\caption{Blur}
\end{subfigure}
\hfill
\begin{subfigure}[c]{0.45\textwidth}
\centering
\includegraphics[scale=0.225]{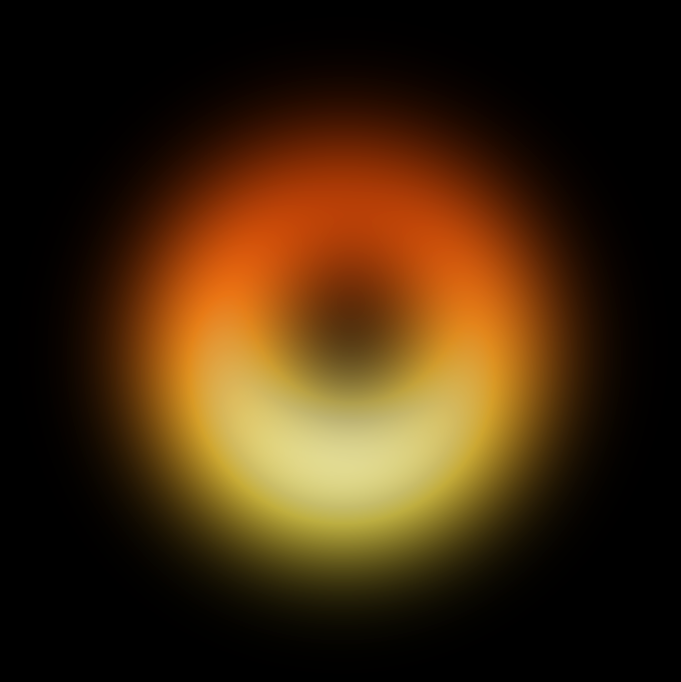}
\caption{Blur and 120\% brighter}
\end{subfigure}
\caption{Blurred versions ($\sigma=50$) of Figure \ref{m877}.}
\label{blurred}
\end{figure}
\end{center}

\newpage
\section{Conclusions}

\subsection{Summary}
We present an open ray-tracing code, developed under \texttt{scilab 6.1.1}, for orbit tracing and shadowing of a Kerr--Newman(anti)--de Sitter spacetime\footnote{The package can be found at \url{https://github.com/arthur-garnier/knds_orbits_and_shadows.git}}. In particular and to the knowledge of the author, such a code, including the cosmological constant is new in the literature. It allows to qualitatively explore the visual effects of the cosmological background on spacetime and black hole shadows, as well as the quantitative study of these effects on a single massive or photon orbit. The code is designed so that the user may tune each spacetime parameter and draw the shadow of a KNdS black hole with a background picture of his choosing. The full ray-tracing process is discussed in detail in \S\ref{brt}.

The shadowing program includes a model for a thin accretion disk orbiting the black hole, radiating as a blackbody and taking the gravitational and Doppler redshift into account. Here again, the code is fully customizable: it lets the user choose the inner and outer radii of the disk, its accretion rate, its inclination and to rescale the brightness, or even to add a background picture. Details on the model are given in \S\ref{mod_acc}.

The single orbit integrator allows the user to select an integration method among the ones discussed in \S\ref{formulations} and \S\ref{mot_con} (including some symplectic integrators and Carter's equations). We illustrate its use and compare the different integration methods in Figure \ref{orbit} and gather the quantitative results in Table \ref{compa}.

As explained in \S\ref{comparison}, the best method for shadowing is the numerical integration of Carter's equations. In the case of a spherically symmetric spacetime (that is, a Reissner--Nordstr\"{o}m(anti)--de Sitter black hole), a simple analytic method detailed in \S\ref{wp} and using Weierstrass' elliptic functions is even more efficient for shadowing, as it doesn't require the computation of the full photon paths. This method is the preferred one for shadowing RNdS spacetimes. While well known, all the formulae are precisely stated and derived through the paper, except for some tedious proofs that can be found in the Appendix. This allows the reader to carefully follow the calculations and use these for future codes and works.

Our results and pictures are consistent with the existing literature: for instance, compare the differential systems (\ref{motion}) and (\ref{motion_with_momenta}) with \cite[\S 2]{balek} and \cite[\S 2.1]{PYYY}, or Figures \ref{leaves}, \ref{compa_shadows} and \ref{effect2} with \cite[Figs 14, 15]{perez-giz-levin}, \cite[Fig. 9]{wang-chen-jing} and \cite[Fig. 1]{bromley97}, respectively.

However, pictures displaying the qualitative effects of the cosmological constant on shadows and accretion disks such as in Figures \ref{lambda_on_shadows} and \ref{lambda_on_accretion}, are new. In particular, we numerically retrieve the algebraic fact that, at fixed mass, angular momentum and charge, a negative enough cosmological constant produces a naked singularity, and not a black hole. The approximate limit value given in Figure \ref{lambda_on_shadows} coincides with the algebraically derived one, validating our code again. We also observe in Figure \ref{lambda_on_accretion} that in an anti-de Sitter background, the redshift and radiation temperature are extrapolated and the brightness concentrates on the ``ingoing'' side of the disk. An opposite effect occurs in a de Sitter background: this is in accordance with the rough idea that an (anti-)de Sitter universe is expanding (retracting).

Finally, and as another validation and illustration of our code, in Figure \ref{m87_pict} we display some simulations of the M87 black hole, based on the data from \cite{M87}. The Figure \ref{blurred} depicts a blurred version of the previous one, imitating an actual black hole photograph, such as the well-known \cite[Fig. 3]{EHT_I}.

\subsection{Discussion and perspectives}
One of the aims of this work is to produce an open, transparent and easy-to-use code for customizable KNdS black hole shadowing, which can be useful for instance in educational purposes or to experience extreme cosmological conditions, without being a ray-tracing expert. In particular, we wanted the code to be easy to install and the \texttt{scilab} environment is a suitable choice, as it includes the handy \texttt{IPCV} package for image processing. This approach comes with a negative side: the code is not optimal and rather slow in comparison to other ray-tracing codes, such as \texttt{GRay} or \texttt{gyoto} (\cite{GRay,gyoto}). Hence, it would be suitable to parallelize the code using a GPU, but this is not an easy task on \texttt{scilab}. To this end, the code has to be translated into a more flexible language first, such as Python for instance. The author plans to do so in the future.

In our programs, we only considered a thin and steady accretion disk. It lacks of dynamical effects and it could be interesting to include models for vortices in the accretion disk and Rossby wave instability for example, such as in \cite[Fig. 4]{gyoto}. However, it would increase the length and execution times of the programs, which should be optimized first.

Also, as indicated in \S\ref{brt}, the shadowing program is designed for a common plane background picture and, in order do avoid distortions, we took the compromise of projecting the image only on a celestial hemisphere, while the other hemisphere is projected on by a mirrored version of the original image. Therefore, it could be of interest to adapt the code and to produce a panoramic version of it.

Finally, our approach dramatically relies on the fact that the KNdS metric is an exact solution of Einstein's field equation. However, there are physical situations where the metric can only be numerically approximated, such as the merging of neutron stars, the study of gravitational waves and binary pulsars or, more generally, the two-body problem in general relativity. Therefore, a nice goal for future works would be to extend the code to an approximated metric, using for instance a 3+1 decomposition.


\subsection*{Acknowledgments} We are much grateful to the referees for their careful reading and their precise and constructive suggestions, which greatly improved the paper. We also deeply thank Daniel Juteau and Antoine Bourget for their help in the correction process.

\begin{appendix}
\section{Some proofs}\label{some_proofs}

\subsection{Proof of the Theorem \ref{EFE_ok}}\label{some_proofs_1}
\hfill

First, we express the metric in Kerr coordinates, using its \emph{principal null geodesics}. More precisely, consider the trajectory of a photon in the plane $\theta\equiv\pi/2$ with total energy $E=1$ and total (azimuthal) angular momentum $L=aE$. Using equations (\ref{motion}), we see that the corresponding four-velocity is given by
\[(\dot{t},\dot{r},\dot{\theta},\dot{\phi})=\left(\frac{\chi^2(r^2+a^2)}{\Delta_r},\pm\chi,0,\frac{a\chi^2}{\Delta_r}\right).\]
By rescaling the affine parameter $\lambda\leadsto \lambda/\chi^2$, and choosing the ingoing geodesic with $\dot{r}<0$, the velocity is given by
\[l^\mu=\left(\frac{r^2+a^2}{\Delta_r},\frac{-1}{\chi},0,\frac{a}{\Delta_r}\right).\]
Now, the coordinates $u$ and $\overline{\phi}$ replacing $t$ and $\phi$ respectively, should be chosen to be constant along this world line, that is, we want $\mathrm{d}u/\mathrm{d}r=\mathrm{d}\overline{\phi}/\mathrm{d}r=0$. Hence, we introduce
\[\left\{\begin{array}{c}
u:=t+T(r),\\[.5em]
\overline{\phi}:=\phi+\Phi(r),\end{array}\right.\]
where $T$ and $\Phi$ are respectively given by
\[T(r)=\chi\int_0^r\frac{\varrho^2+a^2}{\Delta_r(\varrho)}\mathrm{d}\varrho,~~\Phi(r)=a\chi\int_0^r\frac{\mathrm{d}\varrho}{\Delta_r(\varrho)}-\frac{\pi}{2}.\]
The constant in the definition of $\Phi$ ensures that the Kerr--Schild variables approach oblate spheroidal coordinates as $M$ and $\Lambda$ go to zero and doesn't change the metric. The 1-forms $\mathrm{d}t$ and $\mathrm{d}\phi$ can be expressed as
\[\mathrm{d}t=\mathrm{d}u-\mathrm{d}T=\mathrm{d}u-\frac{\chi(r^2+a^2)\mathrm{d}r}{\Delta_r},~~\mathrm{d}\phi=\mathrm{d}\overline{\phi}-\mathrm{d}\Phi=\mathrm{d}\overline{\phi}-\frac{a\chi\mathrm{d}r}{\Delta_r}.\]
Then, the metric in these new coordinates reads
\begin{align*}
\mathrm{d}s^2=&\frac{-\Delta_r}{\chi^2\Sigma}\left(\mathrm{d}u-\frac{\chi(r^2+a^2)\mathrm{d}r}{\Delta_r}-a\sin^2\theta\left(\mathrm{d}\overline{\phi}-\frac{a\chi\mathrm{d}r}{\Delta_r}\right)\right)^2
\\
&+\frac{\Delta_\theta\sin^2\theta}{\chi^2\Sigma}\left(a\left(\mathrm{d}u-\frac{\chi(r^2+a^2)\mathrm{d}r}{\Delta_r}\right)-(r^2+a^2)\left(\mathrm{d}\overline{\phi}-\frac{a\chi\mathrm{d}r}{\Delta_r}\right)\right)^2+\Sigma\left(\frac{\mathrm{d}r^2}{\Delta_r}+\frac{\mathrm{d}\theta^2}{\Delta_\theta}\right)
\\[1em]
=&\frac{a^2\sin^2\theta\Delta_\theta-\Delta_r}{\chi^2\Sigma}\mathrm{d}u^2+\frac{\sin^2\theta}{\chi^2\Sigma}\left[\Delta_\theta(r^2+a^2)^2-\Delta_ra^2\sin^2\theta\right]\mathrm{d}\overline{\phi}^2
\\
&+\frac{\Sigma\mathrm{d}\theta^2}{\Delta_\theta}+\frac{2\mathrm{d}u\mathrm{d}r}{\chi}+\frac{2a\sin^2\theta}{\chi^2\Sigma}(\Delta_r-\Delta_\theta(r^2+a^2))\mathrm{d}u\mathrm{d}\overline{\phi}-\frac{2a\sin^2\theta}{\chi}\mathrm{d}r\mathrm{d}\overline{\phi}.
\\[1em]
=&\frac{-1}{\chi^2\Sigma}\left[a^2\sin^4\theta\Delta_r\mathrm{d}\overline{\phi}^2+(\Delta_r-a^2\sin^2\theta\Delta_\theta)\mathrm{d}u^2-2a\sin^2\theta(\Delta_r-(r^2+a^2)\Delta_\theta)\mathrm{d}u\mathrm{d}\overline{\phi}\right]
\\
&+\frac{\Sigma\mathrm{d}\theta^2}{\Delta_\theta}+\frac{2\mathrm{d}u\mathrm{d}r}{\chi}-\frac{2a\sin^2\theta}{\chi}\mathrm{d}r\mathrm{d}\overline{\phi}
\\[1em]
=&\frac{-1}{\chi^2\Sigma}\left[\Delta_r(\mathrm{d}u-a\sin^2\theta\mathrm{d}\overline{\phi})^2-\sin^2\theta\Delta_\theta(a\mathrm{d}u-(r^2+a^2)\mathrm{d}\overline{\phi})^2\right]+\frac{\Sigma\mathrm{d}\theta^2}{\Delta_\theta}+\frac{2\mathrm{d}u\mathrm{d}r}{\chi}-\frac{2a\sin^2\theta}{\chi}\mathrm{d}r\mathrm{d}\overline{\phi}.
\end{align*}
Defining $\overline{t}:=u-r$, we obtain the KNdS metric in \emph{Kerr coordinates} $(\overline{t},r,\theta,\overline{\phi})$:
\begin{equation}\label{kerr_coords}
\mathrm{d}s^2=\frac{\Delta_\theta\sin^2\theta\left[a\mathrm{d}\overline{t}+a\mathrm{d}r-(r^2+a^2)\mathrm{d}\overline{\phi}\right]^2-\Delta_r\left[\mathrm{d}\overline{t}+\mathrm{d}r-a\sin^2\theta\mathrm{d}\overline{\phi}\right]^2}{\chi^2\Sigma}+\frac{\Sigma\mathrm{d}\theta^2}{\Delta_\theta}+\frac{2\mathrm{d}r}{\chi}\left[\mathrm{d}\overline{t}+\mathrm{d}r-a\sin^2\theta\mathrm{d}\overline{\phi}\right].
\end{equation}
Since $\Lambda a^2>-3$, we have $\Delta_\theta>0$ and therefore the above metric is well-defined everywhere except for $\Sigma=0$. However, computing the determinant of this metric, we find the same result as for Boyer--Lindquist coordinates, that is
\[\det(g_{\mu\nu})=-\chi^{-4}{\Sigma^2\sin^2\theta}\]
and thus it is not clear yet that the metric is Lorentzian (non-degenerate) because $\det(g_{\mu\nu})=0$ for $\theta=0,\pi$. Therefore, we still have to transform the metric.

Following \cite[II, (2.6)]{BL} we define the following ``Cartesian'' Kerr--Schild coordinates
\begin{equation}\label{KS_coords}
\left\{\begin{array}{l}
x:=\sqrt{r^2+a^2}\sin(\theta)\cos(\overline{\phi}+\arctan(a/r)),\\[.5em]
y:=\sqrt{r^2+a^2}\sin(\theta)\sin(\overline{\phi}+\arctan(a/r)),\\[.5em]
z:=r\cos(\theta).
\end{array}\right.
\end{equation}
The variable $r$ becomes a function of $(x,y,z)$ implicitly defined by the relation
\[\frac{x^2+y^2}{r^2+a^2}+\frac{z^2}{r^2}=1.\]
We differentiate (\ref{KS_coords}) and after some manipulations, we find the following relations
\[z\mathrm{d}z=-r^2\cos\theta\sin\theta\mathrm{d}\theta+r\cos^2\theta\mathrm{d}r,~~x\mathrm{d}x+y\mathrm{d}y+z\mathrm{d}z=a^2\cos\theta\sin\theta\mathrm{d}\theta+r\mathrm{d}r\]
as well as
\[x\mathrm{d}y-y\mathrm{d}x=(r^2+a^2)\sin^2\theta\mathrm{d}\overline{\phi}-a\sin^2\theta\mathrm{d}r.\]
Thus, we find the expressions
\[\left\{\begin{array}{l}
\mathrm{d}r=\frac{r^2(x\mathrm{d}x+y\mathrm{d}y)+(r^2+a^2)z\mathrm{d}z}{r\Sigma},\\[.5em]
\mathrm{d}\theta=\frac{x\mathrm{d}x+y\mathrm{d}y+z\mathrm{d}z-r\mathrm{d}r}{a^2\cos\theta\sin\theta}=\frac{\cos^2\theta(x\mathrm{d}x+y\mathrm{d}y+z\mathrm{d}z)-z\mathrm{d}z}{\Sigma\cos\theta\sin\theta},\\[.5em]
\mathrm{d}\overline{\phi}=\frac{a\sin^2\theta\mathrm{d}r+x\mathrm{d}y-y\mathrm{d}x}{(r^2+a^2)\sin^2\theta}.\end{array}\right.\]
These yield
\[a\mathrm{d}r-(r^2+a^2)\mathrm{d}\overline{\phi}=\frac{y\mathrm{d}x-x\mathrm{d}y}{\sin^2\theta},~~\mathrm{d}r-a\sin^2\theta\mathrm{d}\overline{\phi}=\frac{r(x\mathrm{d}x+y\mathrm{d}y)+a(y\mathrm{d}x-x\mathrm{d}y)}{r^2+a^2}+\frac{z\mathrm{d}z}{r}\]
and we may now express the metric in these new variables as (we keep one ``$\mathrm{d}r$'' for now in order to simplify the notation, but we'll give the full expression below)
\begin{align*}
\mathrm{d}s^2=&\left[\frac{2\mathrm{d}r}{\chi}-\frac{\Delta_r}{\chi^2\Sigma}\left(\mathrm{d}\overline{t}+\frac{r(x\mathrm{d}x+y\mathrm{d}y)+a(y\mathrm{d}x-x\mathrm{d}y)}{r^2+a^2}+\frac{z\mathrm{d}z}{r}\right)\right]\left(\mathrm{d}\overline{t}+\frac{r(x\mathrm{d}x+y\mathrm{d}y)+a(y\mathrm{d}x-x\mathrm{d}y)}{r^2+a^2}+\frac{z\mathrm{d}z}{r}\right)
\\
&+\frac{\Delta_\theta \sin^2\theta}{\chi^2\Sigma}\left(a\mathrm{d}\overline{t}+\frac{y\mathrm{d}x-x\mathrm{d}y}{\sin^2\theta}\right)^2+\frac{(\cos^2\theta(x\mathrm{d}x+y\mathrm{d}y+z\mathrm{d}z)-z\mathrm{d}z)^2}{\Sigma\Delta_\theta\cos^2\theta\sin^2\theta}.
\end{align*}
At this point we can formally compute the determinant of the metric and obtain
\[\det((g_{\mu\nu})_{\mathrm{KS}})=-\chi^{-4}\ne0,\]
so that the metric is Lorentzian where it is defined. The first line above indeed is a smooth differential 2-form except on $\{\Sigma=0\}$, so all we have to do is transform the second line. Consider the auxiliary spacial metric
\[\mathrm{d}\sigma^2:=\frac{\Delta_\theta(y\mathrm{d}x-x\mathrm{d}y)^2}{\chi^2\Sigma\sin^2\theta}+\frac{(\cos^2\theta(x\mathrm{d}x+y\mathrm{d}y+z\mathrm{d}z)-z\mathrm{d}z)^2}{\Sigma\Delta_\theta\cos^2\theta\sin^2\theta}.\]
Developing and factorizing this expression yields
\begin{align*}
\mathrm{d}\sigma^2=&\left(\frac{x^2\cos^2\theta}{\Delta_\theta}+\frac{y^2\Delta_\theta}{\chi^2}\right)\frac{\mathrm{d}x^2}{\Sigma\sin^2\theta}+\left(\frac{y^2\cos^2\theta}{\Delta_\theta}+\frac{x^2\Delta_\theta}{\chi^2}\right)\frac{\mathrm{d}y^2}{\Sigma\sin^2\theta}+\left(\frac{\cos^2\theta}{\Delta_\theta}-\frac{\Delta_\theta}{\chi^2}\right)\frac{2xy\mathrm{d}x\mathrm{d}y}{\Sigma\sin^2\theta}
\\
&+\frac{\sin^2\theta z^2\mathrm{d}z^2}{\Sigma\Delta_\theta\cos^2\theta}-\frac{2z\mathrm{d}z}{\Sigma\Delta_\theta}(x\mathrm{d}x+y\mathrm{d}y).
\end{align*}
First, we transform the coefficient of $\mathrm{d}x^2$. Using (\ref{KS_coords}), we have
\[\cos^2\theta=\frac{z^2}{r^2},~~\sin^2\theta=\frac{x^2+y^2}{r^2+a^2},~~\text{so that}~~\Delta_\theta=1+\lambda a^2\cos^2\theta=\frac{r^2+\lambda a^2z^2}{r^2},~\Sigma=\frac{r^4+a^2z^2}{r^2},\]
and recalling that $\chi=1+\lambda a^2$, we compute
\begin{align*}
\frac{1}{\Sigma\sin^2\theta}\left(\frac{x^2\cos^2\theta}{\Delta_\theta}+\frac{y^2\Delta_\theta}{\chi^2}\right)=&\frac{x^2\cos^2\theta(1+\lambda a^2)^2+y^2(1+\lambda a^2\cos^2\theta)^2}{\chi^2\Sigma\Delta_\theta\sin^2\theta}
\\[.5em]
=&\frac{x^2\cos^2\theta+y^2+\lambda^2 a^4\cos^2\theta(x^2+y^2\cos^2\theta)+2\lambda a^2\cos^2\theta(x^2+y^2)}{\chi^2\Sigma\Delta_\theta\sin^2\theta}
\\[.5em]
=&\frac{(x^2+y^2)(\Delta_\theta+\chi\lambda a^2\cos^2\theta)-\sin^2\theta(x^2+y^2\lambda^2 a^4\cos^2\theta)}{\chi^2\Sigma\Delta_\theta\sin^2\theta}
\\[.5em]
=&\frac{(r^2+a^2)(\Delta_\theta+\chi\lambda a^2\cos^2\theta)-x^2-y^2\lambda^2 a^4\cos^2\theta}{\chi^2\Sigma\Delta_\theta}
\\[.5em]
=&\frac{(r^2+a^2)(r^2+\lambda a^2z^2+\chi\lambda a^2 z^2)-r^2x^2-y^2\lambda^2 a^4z^2}{\chi^2r^2\Sigma\Delta_\theta}
\end{align*}
so that
\[
\frac{1}{\Sigma\sin^2\theta}\left(\frac{x^2\cos^2\theta}{\Delta_\theta}+\frac{y^2\Delta_\theta}{\chi^2}\right)=r^2\frac{(r^2+a^2)(r^2+\lambda a^2z^2(1+\chi))-r^2x^2-\lambda^2a^4y^2z^2}{\chi^2(r^4+a^2z^2)(r^2+\lambda a^2z^2)}.
\]
Exchanging $x$ and $y$ gives a similar expression for the coefficient of $\mathrm{d}y^2$. Now we treat the $\mathrm{d}x\mathrm{d}y$ term. We have
\begin{align*}
\frac{2xy}{\Sigma\sin^2\theta}\left(\frac{\cos^2\theta}{\Delta_\theta}-\frac{\Delta_\theta}{\chi^2}\right)=&\frac{2xy\left[\chi^2\cos^2\theta-\Delta_\theta^2\right]}{\chi^2\Sigma\Delta_\theta\sin^2\theta}=\frac{2xy\left[\cos^2\theta(1+\lambda a^2)^2-(1+\lambda a^2\cos^2\theta)^2\right]}{\chi^2\Sigma\Delta_\theta\sin^2\theta}
\\[.5em]
=&\frac{2xy\left[\cos^2\theta+\lambda^2a^4\cos^2\theta-1-\lambda^2a^4\cos^4\theta\right]}{\chi^2\Sigma\Delta_\theta\sin^2\theta}=\frac{2xy(\lambda^2a^4\cos^2\theta-1)}{\chi^2\Sigma\Delta_\theta}
\\[.5em]
=&\frac{2xyr^2(\lambda^2a^4z^2-r^2)}{\chi^2(r^4+a^2z^2)(r^2+\lambda a^2z^2)}.
\end{align*}
Finally, we compute the remaining terms:
\begin{align*}
\frac{\sin^2\theta z^2\mathrm{d}z^2}{\Sigma\Delta_\theta\cos^2\theta}-\frac{2z\mathrm{d}z}{\Sigma\Delta_\theta}(x\mathrm{d}x+y\mathrm{d}y)=&\frac{r^2(x^2+y^2)\mathrm{d}z^2}{(r^2+a^2)\Sigma\Delta_\theta}-\frac{2z\mathrm{d}z}{\Sigma\Delta_\theta}(x\mathrm{d}x+y\mathrm{d}y)
\\[.5em]
=&\frac{r^6(x^2+y^2)\mathrm{d}z^2}{(r^2+a^2)(r^4+a^2z^2)(r^2+\lambda a^2z^2)}-\frac{2zr^4\mathrm{d}z(x\mathrm{d}x+y\mathrm{d}y)}{(r^4+a^2z^2)(r^2+\lambda a^2z^2)}
\\[.5em]
=&\frac{r^4\mathrm{d}z\left[r^2(x^2+y^2)\mathrm{d}z-2z(r^2+a^2)(x\mathrm{d}x+y\mathrm{d}y)\right]}{(r^2+a^2)(r^4+a^2z^2)(r^2+\lambda a^2z^2)}.
\end{align*}
Gathering all, we obtain the full expression of the metric in Kerr--Schild coordinates:
\begin{align*}
\mathrm{d}s^2=&\left[\frac{2\mathrm{d}r}{\chi}-\frac{\Delta_r}{\chi^2\Sigma}\left(\mathrm{d}\overline{t}+\frac{r(x\mathrm{d}x+y\mathrm{d}y)+a(y\mathrm{d}x-x\mathrm{d}y)}{r^2+a^2}+\frac{z\mathrm{d}z}{r}\right)\right]\left(\mathrm{d}\overline{t}+\frac{r(x\mathrm{d}x+y\mathrm{d}y)+a(y\mathrm{d}x-x\mathrm{d}y)}{r^2+a^2}+\frac{z\mathrm{d}z}{r}\right)
\\
&+\mathrm{d}\sigma^2+\frac{a\Delta_\theta\mathrm{d}\overline{t}}{\chi^2\Sigma}(a\sin^2\theta\mathrm{d}\overline{t}+2(y\mathrm{d}x-x\mathrm{d}y)),
\end{align*}
that is,
\begin{align*}
\stepcounter{equation}\tag{\theequation}\label{In_KS_coords} \mathrm{d}s^2=&\frac{2(r^2(x\mathrm{d}x+y\mathrm{d}y)+(r^2+a^2)z\mathrm{d}z)}{\chi(r^4+a^2z^2)}\left(r\mathrm{d}\overline{t}+\frac{r^2(x\mathrm{d}x+y\mathrm{d}y)+ar(y\mathrm{d}x-x\mathrm{d}y)}{r^2+a^2}+z\mathrm{d}z\right)
\\
&-\frac{\Delta_r}{\chi^2(r^4+a^2z^2)}\left(r\mathrm{d}\overline{t}+\frac{r^2(x\mathrm{d}x+y\mathrm{d}y)+ar(y\mathrm{d}x-x\mathrm{d}y)}{r^2+a^2}+z\mathrm{d}z\right)^2
\\
&+\frac{a(r^2+\lambda a^2z^2)\mathrm{d}\overline{t}}{\chi^2(r^4+a^2z^2)}\left(\frac{a(x^2+y^2)\mathrm{d}\overline{t}}{r^2+a^2}+2(y\mathrm{d}x-x\mathrm{d}y)\right)
\\
&+\frac{r^2}{(r^4+a^2z^2)(r^2+\lambda a^2z^2)}\left[\frac{(r^2+a^2)(r^2+\lambda a^2z^2(1+\chi))-r^2x^2-\lambda^2a^4y^2z^2}{\chi^2}\mathrm{d}x^2\right.\\
&+\left.\frac{(r^2+a^2)(r^2+\lambda a^2z^2(1+\chi))-r^2y^2-\lambda^2a^4x^2z^2}{\chi^2}\mathrm{d}y^2+\frac{2xy(\lambda^2a^4z^2-r^2)}{\chi^2}\mathrm{d}x\mathrm{d}y\right.\\
&+\left.\frac{r^2\mathrm{d}z}{r^2+a^2}(r^2(x^2+y^2)\mathrm{d}z-2z(r^2+a^2)(x\mathrm{d}x+y \mathrm{d}y))\right],
\end{align*}
where $r=r(x,y,z)$ is the positive solution\footnote{explicitly: $r=\tfrac{1}{\sqrt{2}}\sqrt{x^2+y^2+z^2-a^2+\sqrt{a^2(a^2-2x^2-2y^2+2z^2)+(x^2+y^2+z^2)^2}}$} of $\tfrac{x^2+y^2}{r^2+a^2}+\tfrac{z^2}{r^2}=1$. It is now manifest that this metric is well-defined everywhere except on the ring $\{\Sigma=0\}=\{z=0,~x^2+y^2=a^2\}$.

On the other hand, the Kretschmann scalar $K=R_{\alpha\beta\gamma\delta}R^{\alpha\beta\gamma\delta}$ for the KNdS metric has recently been computed by Kraniotis \cite[Theorem 1]{kraniotis22} and is given by
\begin{align*}
K=&\frac{8}{\Sigma^6}\left[3\lambda^2a^{12}\cos^{12}\theta+18\lambda^2a^{10}r^2\cos^{10}\theta+45\lambda^2a^8r^4\cos^8\theta+6\cos^6\theta(10\lambda^2a^6r^6-a^6M^2)\right.\\
&\left.+a^4\cos^4\theta(45\lambda^2r^8+90M^2r^2-60MQ^2r+7Q^4)\right.\\
&\left.+a^2r^2\cos^2\theta(18\lambda^2r^2-90M^2r^2+120MQ^2r-34Q^4) +3\lambda^2r^{12}+6M^2r^6-12MQ^2r^5+7Q^4r^4\right].
\end{align*}
It is singular exactly on $\{\Sigma=0\}$ and thus the analytic extension of the KNdS metric to $\mathcal{M}\setminus\{\Sigma=0\}$ is maximal.

To prove that the potential $\mathbf{A}=Qr\chi^{-1}\Sigma^{-1}(\mathrm{d}t-a\sin^2\theta\mathrm{d}\phi)$ solves the vacuum Maxwell equations and that the KNdS metric solves the associated EME, by smoothness of the metric and the potential, it suffices to check them on the dense open chart $\{\sin\theta(1-\cos\phi)\Delta_r\Sigma\ne0\}$ and this relies on tedious but elementary calculations. Details are given in Appendix \ref{proofEFE}.
\qed

\begin{rem}
By the Christodoulou--Ruffini mass formula (see \cite[\S 4, formula 57]{pradhan}), when $M\to0$, the irreducible mass approaches $\sqrt{-Q^2}/2$ and then also $Q\to0$. Using the notation of the previous proof, we have
\begin{align*}
\lim_{M\to0}\overline{\phi}=&\phi+\lim_{M\to0}\int_0^r\frac{a\chi\mathrm{d}\varrho}{\Delta_r(\varrho)}-\frac{\pi}{2}=\phi+\int_0^r\frac{a\chi\mathrm{d}\varrho}{(1-\lambda \varrho^2)(\varrho^2+a^2)}-\frac{\pi}{2} \\
=&\phi+\sqrt{\lambda}\mathrm{argth}(r\sqrt{\lambda})+\arctan\left(\frac{r}{a}\right)-\frac{\pi}{2}.
\end{align*}
Therefore, when $M\to0$, the Kerr--Schild coordinates (\ref{KS_coords}) read
\[\left\{\begin{array}{l}
x=\sqrt{r^2+a^2}\sin(\theta)\cos(\phi+\sqrt{\lambda}\mathrm{argth}(r\sqrt{\lambda})),\\[.5em]
y=\sqrt{r^2+a^2}\sin(\theta)\sin(\phi+\sqrt{\lambda}\mathrm{argth}(r\sqrt{\lambda})),\\[.5em]
z=r\cos(\theta),
\end{array}\right.\]
where it is understood that $\sqrt{\lambda}\mathrm{argth}(r\sqrt{\lambda})=-\sqrt{|\lambda|}\arctan(r\sqrt{|\lambda|})$ for $\lambda<0$. We also have
\[\lim_{M\to0}\overline{t}=t-r+\lim_{M\to0}\int_0^r\frac{\chi(\varrho^2+a^2)\mathrm{d}\varrho}{\Delta_r(\varrho)}=t-r+\frac{\chi}{\sqrt{\lambda}}\mathrm{argth}(r\sqrt{\lambda}),\]
with the convention $\mathrm{argth}(r\sqrt{\lambda})/\sqrt{\lambda}=\arctan(r\sqrt{|\lambda|})/\sqrt{|\lambda|}$ for $\lambda<0$. Hence, with our convention, the Kerr--Schild coordinates coincide with the usual oblate spheroidal coordinates only for $\lambda\to0$ and in this case, the Kerr--Schild and Boyer--Lindquist times agree.
\end{rem}

\begin{rem}
The formula (\ref{kerr_coords}) may be rewritten as
\[\mathrm{d}s^2=\mathrm{d}s^2_0+\frac{2Mr-Q^2}{\chi^2\Sigma}(\mathrm{d}\overline{t}+\mathrm{d}r-a\sin^2\theta\mathrm{d}\overline{\phi})^2,\]
where $\mathrm{d}s^2_0$ is the KNdS metric with $M=Q=0$. In Boyer--Lindquist coordinates, we have
\[\mathrm{d}s^2_0=-\frac{\widetilde{\Delta_r}}{\chi^2\Sigma}(\mathrm{d}t-a\sin^2\theta\mathrm{d}\phi)^2+\frac{\Delta_\theta\sin^2\theta}{\chi^2\Sigma}(a\mathrm{d}t-(r^2+a^2)\mathrm{d}\phi)^2+\Sigma\left(\frac{\mathrm{d}r^2}{\widetilde{\Delta_r}}+\frac{\mathrm{d}\theta^2}{\Delta_\theta}\right),\]
with $\widetilde{\Delta_r}=(1-\lambda r^2)(r^2+a^2)$. Following \cite[\S 4.2]{hoque}, define new coordinates $(T,R,\Theta,\Phi)$:
\[T:=t/\chi,~~R^2:=\tfrac{1}{\chi}(r^2\Delta_\theta+a^2\sin^2\theta),~~R\cos\Theta=r\cos\theta,~~\Phi=\phi-\tfrac{a\lambda}{\chi}T.\]
Then the metric $\mathrm{d}s_0^2$ becomes
\[\mathrm{d}s_0^2=-(1-\lambda R^2)\mathrm{d}T^2+\frac{\mathrm{d}R^2}{1-\lambda R^2}+R^2(\mathrm{d}\Theta^2+\sin^2\Theta\mathrm{d}\Phi^2),\]
which is the usual de Sitter metric. We obtain the \emph{Kerr--Schild form} of the KNdS metric: the flat de Sitter metric plus a perturbation term. In Kerr--Schild coordinates, we have
\[\mathrm{d}s^2=\mathrm{d}s^2_0+\frac{2Mr-Q^2}{\chi^2(r^4+a^2z^2)}\left(r\mathrm{d}\overline{t}+\frac{r^2(x\mathrm{d}x+y\mathrm{d}y)+ar(y\mathrm{d}x-x\mathrm{d}y)}{r^2+a^2}+z\mathrm{d}z\right)^2.\]
\end{rem}

\subsection{Proof of the Theorem \ref{carter_equations}}\label{some_proofs_2}
\hfill

First, we compute the Hamiltonian explicitly:
\begin{align*}
2\mathcal{H}(\gamma,p)\stackrel{\tiny{\text{df}}}=&g^{\mu\nu}(p_\mu-eA_\mu)(p_\nu-eA_\nu)=g^{tt}\left(p_t-\frac{eQr}{\chi\Sigma}\right)^2+2g^{t\phi}\left(p_t-\frac{eQr}{\chi\Sigma}\right)\left(p_\phi+\frac{eQra\sin^2\theta}{\chi\Sigma}\right)\\
&+g^{\phi\phi}\left(p_\phi+\frac{eQra\sin^2\theta}{\chi\Sigma}\right)^2+g^{rr}p_r^2+g^{\theta\theta}p_\theta^2 \\[.5em]
=&\frac{\chi^2}{\Sigma\Delta_r\Delta_\theta}(a^2\sin^2\theta\Delta_r-(r^2+a^2)^2\Delta_\theta)\left(E+\frac{eQr}{\chi\Sigma}\right)^2+\frac{\chi^2}{\Sigma}\left(\frac{1}{\sin^2\theta\Delta_\theta}-\frac{a^2}{\Delta_r}\right)\left(L+\frac{eQra\sin^2\theta}{\chi\Sigma}\right)^2 \\
&-\frac{2a\chi^2}{\Sigma\Delta_r\Delta_\theta}(\Delta_r-(r^2+a^2)\Delta_\theta)\left(L+\frac{eQra\sin^2\theta}{\chi\Sigma}\right)\left(E+\frac{eQr}{\chi\Sigma}\right)+\frac{\Delta_r}{\Sigma}p_r^2+\frac{\Delta_\theta}{\Sigma}p_\theta^2.
\end{align*}
Formally developing and factorizing, we find
\begin{align*}
\mu=&\frac{1}{\Sigma^3\Delta_r\Delta_\theta\sin^2\theta}\left\{-2\chi eQr\Sigma^2\sin^2\theta\Delta_\theta(E(r^2+a^2)-aL)-r^2e^2Q^2a^4\sin^6\theta\Delta_\theta\right. \\
&\left.+a^2\sin^4\theta(2e^2r^2Q^2(r^2+a^2)\Delta_\theta+E^2\chi^2\Sigma^2\Delta_r)+L^2\chi^2\Sigma^2\Delta_r\right. \\
&\left.+\sin^2\theta\left[\Sigma^2\Delta_\theta^2\Delta_rp_\theta^2+\Delta_\theta(\Sigma^2(\Delta_r^2p_r^2-\chi^2(E(r^2+a^2)-aL)^2)-e^2r^2Q^2(r^2+a^2)^2)-2ELa\chi^2\Sigma^2\Delta_r\right]\right\} \\[.5em]
=&\frac{2\chi eQr}{\Sigma\Delta_r}(aL-E(r^2+a^2))-\frac{e^2Q^2r^2a^4\sin^4\theta}{\Sigma^2\Delta_r}+\frac{2a^2e^2Q^2r^2\sin^2\theta(r^2+a^2)}{\Sigma^2\Delta_r}+\frac{a^2\chi^2E^2\sin^2\theta}{\Sigma\Delta_\theta}\\
&+\frac{L^2\chi^2}{\sin^2\theta\Sigma\Delta_\theta}-\frac{2ELa\chi^2}{\Sigma\Delta_\theta}-\frac{\chi^2}{\Sigma\Delta_r}(E(r^2+a^2)-aL)^2-\frac{e^2Q^2r^2(r^2+a^2)^2}{\Sigma^2\Delta_r}+\frac{\Delta_r}{\Sigma}p_r^2+\frac{\Delta_\theta}{\Sigma}p_\theta^2\\[.5em]
=&\frac{e^2Q^2r^2}{\Sigma^3\Delta_r}\underset{=-\Sigma^2}{\underbrace{(2a^2(r^2+a^2)\sin^2\theta-a^4\sin^4\theta-(r^2+a^2)^2)}}+\frac{L^2\chi^2}{\Sigma\Delta_\theta\sin^2\theta}+\frac{\Delta_r}{\Sigma}p_r^2+\frac{\Delta_\theta}{\Sigma}p_\theta^2\\
&+\frac{aL-E(r^2+a^2)}{\Sigma\Delta_r}(2\chi eQr+\chi^2(E(r^2+a^2)-aL))+\frac{aE\chi^2}{\Sigma\Delta_\theta}(aE\sin^2\theta-2L)
\end{align*}
and so
\begin{align*}
\mu=&-\frac{e^2Q^2r^2}{\Sigma\Delta_r}+\frac{\chi(aL-E(r^2+a^2)}{\Sigma\Delta_r}(2eQr+\chi(E(r^2+a^2)-aL))+\frac{a\chi^2E}{\Sigma\Delta_\theta}(aE\sin^2\theta-2L)\\
&+\frac{L^2\chi^2}{\Sigma\Delta_\theta\sin^2\theta}+\frac{\Delta_r}{\Sigma}p_r^2+\frac{\Delta_\theta}{\Sigma}p_\theta^2\\[.5em]
=&-\frac{\chi^2}{\Sigma\Delta_r}\left(E(r^2+a^2)-aL+\frac{eQr}{\chi}\right)^2+\frac{\chi^2(aE\sin^2\theta-L)^2}{\Sigma\Delta_\theta\sin^2\theta}+\frac{\Delta_r}{\Sigma}p_r^2+\frac{\Delta_\theta}{\Sigma}p_\theta^2 \\[.5em]
=&\frac{1}{\Sigma}\left(-\frac{W_r^2}{\Delta_r}+\frac{W_\theta^2}{\Delta_\theta}+\Delta_r p_r^2+\Delta_\theta p_\theta^2\right).
\end{align*}
Therefore, the Hamiltonian is given by
\begin{equation}\label{hamil}
\mathcal{H}(\gamma,p)=\frac{1}{2}\left(\frac{W_\theta^2}{\Sigma\Delta_\theta}-\frac{W_r^2}{\Sigma\Delta_r}+\frac{\Delta_r}{\Sigma}p_r^2+\frac{\Delta_\theta}{\Sigma}p_\theta^2\right).
\end{equation}
Consider now the action integral
\[S:=\int^\ell \mathcal{L}(\gamma,\dot{\gamma})\mathrm{d}l=\int^\ell p_\mu\dot{\gamma}^\mu-\mathcal{H}(\gamma,p)\mathrm{d}l.\]
Then, we have $p_\mu={\partial S}/{\partial {\gamma}^\mu}$ and the motion equations can be expressed as the \emph{Hamilton--Jacobi equation}
\begin{equation}\label{HJ}
\frac{\partial S}{\partial\ell}=\mathcal{H}\left(\gamma,\frac{\partial S}{\partial{\gamma}}\right).
\end{equation}
Using (\ref{hamil}), this amounts to say that
\[\mu=2\frac{\partial S}{\partial\ell}=\frac{W_\theta^2}{\Sigma\Delta_\theta}-\frac{W_r^2}{\Sigma\Delta_r}+\frac{\Delta_r}{\Sigma}\left(\frac{\partial S}{\partial r}\right)^2+\frac{\Delta_\theta}{\Sigma}\left(\frac{\partial S}{\partial \theta}\right)^2.\]
This equation may be rewritten in the separated form
\[\frac{W_\theta^2}{\Delta_\theta}+\Delta_\theta \left(\frac{\partial S}{\partial \theta}\right)^2-\mu a^2\cos^2\theta=\frac{W_r^2}{\Delta_r}-\Delta_r \left(\frac{\partial S}{\partial r}\right)^2+\mu r^2\]
so that each side of this equation is equal to some constant $\kappa\in\R$, as in the statement. But since we have $\Delta_\nu p_\nu=\Sigma \dot{\nu}$ for $\nu=r,\theta$, we obtain the equations for $\dot{r}^2$ and $\dot{\theta}^2$ as claimed.

Now, the equations for $\dot{t}$ and $\dot{\phi}$ may be derived as follows. We compute
\[g^{t\mu}p_\mu=g^{t\mu}(g_{\mu\nu}\dot{\gamma}^{\nu}+eA_\mu)=\dot{t}+eg^{t\mu}A_\mu=\dot{t}+\frac{eQr}{\chi\Sigma}(g^{tt}-a\sin^2\theta g^{t\phi})=\dot{t}-\frac{\chi eQr(r^2+a^2)}{\Sigma\Delta_r}\]
so that we get
\[\dot{t}=\frac{\chi eQr(r^2+a^2)}{\Sigma\Delta_r}+g^{tt}p_t+g^{t\phi}p_\phi=\frac{\chi eQr(r^2+a^2)}{\Sigma\Delta_r}-E g^{tt}+L g^{t\phi}=\frac{\chi W_r (r^2+a^2)}{\Sigma\Delta_r}-\frac{a\chi W_\theta\sin\theta}{\Sigma\Delta_\theta}\]
and we proceed in the same way for $\dot{\phi}$: we have $g^{\phi\mu}p_\mu=\dot{\phi}-\tfrac{eQra\chi}{\Sigma\Delta_r}$ and thus
\[\dot{\phi}=\frac{eQra\chi}{\Sigma\Delta_r}+g^{t\phi}p_t+g^{\phi\phi}p_\phi=\frac{eQra\chi}{\Sigma\Delta_r}-E g^{t\phi}+L g^{\phi\phi}=\frac{a\chi W_r}{\Sigma\Delta_r}-\frac{\chi W_\theta}{\Sigma\Delta_\theta\sin\theta}.\]
\qed

\begin{landscape}
\subsection{Proof that the KNdS metric solves the Einstein-Maxwell equations}\label{proofEFE}
The Christoffel symbols of the KNdS metric are as follows (their symmetry allows to lighten the notation with bullets):
\small{\[
{\Gamma^t}_{\mu\nu}=\tfrac{1}{\Sigma^2}\begin{pmatrix}0&{\frac{(r^2+a^2)\left(\tfrac{1}{2}\Sigma\Delta'_r+r(\Delta_\theta{a}^{2}\sin^2\theta-\Delta_r)\right)}{\Delta_r}}&-{\frac{{a}^{2}\left(\tfrac{1}{2}\sin^2\theta\Sigma\Delta'_\theta+\sin\theta\cos\theta(({a}^{2}+{r}^{2})\Delta_\theta-\Delta_r)\right)}{\Delta_\theta}}&0\\
\noalign{\medskip}\bullet&0&0&-{\frac{a\sin^2\theta\left(\tfrac{1}{2}({a}^{2}+{r}^{2})\Sigma\Delta'_r-r(\Delta_r(\Sigma+r^2+a^2)-({a}^{2}+{r}^{2})^{2}\Delta_\theta)\right)}{\Delta_r}}\\
\noalign{\medskip}\bullet&\bullet&0&{\frac{a\sin^2\theta\left(\tfrac{1}{2}({a}^{2}+{r}^{2})\Sigma\Delta'_\theta+{a}^{2}\sin\theta\cos\theta(({a}^{2}+{r}^{2})\Delta_\theta-\Delta_r)\right)}{\Delta_\theta}}\\
\noalign{\medskip}\bullet&\bullet&\bullet&0\end{pmatrix}
\]

\[
{\Gamma^r}_{\mu\nu}=\tfrac{1}{\Sigma}\begin{pmatrix}{\frac{\Delta_r\left(\tfrac{1}{2}\Sigma\Delta'_r+r(a^2\sin^2\theta\Delta_\theta-\Delta_r)\right)}{\Sigma^{2}{\chi}^{2}}}&0&0&-{\frac{a\sin^2\theta\Delta_r\left(\tfrac{1}{2}\Sigma\Delta'_r+r(a^2\sin^2\theta\Delta_\theta-\Delta_r)\right)}{\Sigma^{2}{\chi}^{2}}}\\
\noalign{\medskip}\bullet&r-{\frac{\Sigma\Delta'_r}{2\Delta_r}}&-{a}^{2}\sin\theta\cos\theta&0\\
\noalign{\medskip}\bullet&\bullet&-{\frac{\Delta_rr}{\Delta_\theta}}&0\\
\noalign{\medskip}\bullet&\bullet&\bullet&{\frac{\Delta_r\sin^2\theta\left(\tfrac{1}{2}{a}^{2}\sin^2\theta\Sigma\Delta'_r-r({a}^{2}\cos^2\theta(2\Delta_\theta({a}^{2}+{r}^{2})-\Delta_r)+(r^4-a^4)\Delta_\theta+{a}^{2}\Delta_r)\right)}{\Sigma^{2}{\chi}^{2}}}\end{pmatrix}
\]

\[
{\Gamma^\theta}_{\mu\nu}=\tfrac{1}{\Sigma}\begin{pmatrix}-{\frac{{a}^{2}\sin^2\theta\Delta_\theta\left(\tfrac{1}{2}\Sigma\Delta'_\theta+\cot\theta(({a}^{2}+{r}^{2})\Delta_\theta-\Delta_r)\right)}{\chi^2\Sigma^{2}}}&0&0&{\frac{a\sin^2\theta\Delta_\theta(r^2+a^2)\left(\tfrac{1}{2}\Sigma\Delta'_\theta+\cot\theta(({a}^{2}+{r}^{2})\Delta_\theta-\Delta_r)\right)}{\chi^2\Sigma^{2}}}\\
\noalign{\medskip}\bullet&{\frac{{a}^{2}\cos\theta\sin\theta\Delta_\theta}{\Delta_r}}&r&0\\
\noalign{\medskip}\bullet&\bullet&-{a}^{2}\cos\theta\sin\theta-\frac{\Sigma\Delta'_\theta}{2\Delta_\theta}&0\\
\noalign{\medskip}\bullet&\bullet&\bullet&-{\frac{\Delta_\theta\sin^2\theta\left(\tfrac{1}{2}({a}^{2}+{r}^{2})^{2}\Sigma\Delta'_\theta+\cot\theta(({a}^{2}+{r}^{2})^{3}\Delta_\theta-{a}^{2}\sin^2\theta\Delta_r(\Sigma+r^2+a^2))\right)}{\chi^2\Sigma^{2}}}\end{pmatrix}
\]

\[
{\Gamma^\phi}_{\mu\nu}=\tfrac{1}{\Sigma^2}\begin{pmatrix}0&{\frac{a\left(\tfrac{1}{2}\Sigma\Delta'_r+r(a^2\sin^2\theta\Delta_\theta-\Delta_r)\right)}{\Delta_r}}&-{\frac{a\left(\tfrac{1}{2}\sin\theta\Sigma\Delta'_\theta+\cos\theta(({a}^{2}+{r}^{2})\Delta_\theta-\Delta_r)\right)}{\Delta_\theta\sin\theta}}&0\\
\noalign{\medskip}\bullet&0&0&-{\frac{\tfrac{1}{2}{a}^{2}\sin^2\theta\Sigma\Delta'_r+r({a}^{2}+{r}^{2})(a^2\sin^2\theta\Delta_\theta-\Delta_r)}{\Delta_r}}\\
\noalign{\medskip}\bullet&\bullet&0&{\frac{\tfrac{1}{2}\sin\theta({a}^{2}+{r}^{2})\Sigma\Delta'_\theta+\cos\theta(\Delta_\theta((a^2+r^2)^2-a^2\sin^2\theta\Sigma)-{a}^{2}\sin^2\theta\Delta_r)}{\Delta_\theta\sin\theta}}\\
\noalign{\medskip}\bullet&\bullet&\bullet&0\end{pmatrix}
\]}
\end{landscape}

\begin{landscape}
From these we deduce that the only non-zero components of the Ricci tensor are the following:
\begin{align*}
R_{tt}=\tfrac{1}{\chi^2\Sigma^3}&\left[-\tfrac{1}{2}{a}^{2}\sin^2\theta\Sigma\Delta_\theta\Delta''_\theta+\tfrac{1}{2}\Sigma\Delta_r\Delta''_r-{a}^{2}\sin\theta\cos\theta\left(\left(\tfrac{1}{2}\Sigma+r^2+a^2\right)\Delta_\theta+\Delta_r\right)\Delta'_\theta\right.\\
&~\left.-r({a}^{2}\sin^2\theta\Delta_\theta+\Delta_r)\Delta'_r-{a}^{2}\sin^2\theta(a^2-r^2)\Delta_\theta^{2}-2a^2\cos^2\theta\Delta_\theta\Delta_r+\Delta_r^{2}\right]\\
=~~~~~~~&\hspace{-7.5mm}\frac{Q^2(\Delta_r+a^2\sin^2\theta\Delta_\theta)}{\chi^2\Sigma^3}-\frac{\Lambda(\Delta_r-a^2\sin^2\theta\Delta_\theta)}{\chi^2\Sigma},
\end{align*}

\begin{align*}
R_{t\phi}=R_{\phi t}=\tfrac{a\sin\theta}{\chi^2\Sigma^3}&\left[\tfrac{1}{2}\sin\theta({a}^{2}+{r}^{2})\Sigma\Delta_\theta\Delta''_\theta-\tfrac{1}{2}\sin\theta\Sigma\Delta_r\Delta''_r+\cos\theta\left(({a}^{2}+{r}^{2})\left(\tfrac{1}{2}\Sigma+r^2+a^2\right)\Delta_\theta+{a}^{2}\sin^2\theta\Delta_r\right)\Delta'_\theta\right.\\
&~\left.+\sin\theta(r\Delta'_r(\Delta_r+({a}^{2}+{r}^{2})\Delta_\theta)+({a}^{4}-{r}^{4})\Delta_\theta^{2}+\Delta_r\Delta_\theta({a}^{2}\cos^2\theta-{r}^{2})-\Delta_r^{2})\right]\\
=~~~~~~~&\hspace{-7.5mm}-\frac{a\sin^2\theta Q^2(\Delta_r+(r^2+a^2)\Delta_\theta)}{\chi^2\Sigma^3}-\frac{\Lambda a\sin^2\theta((r^2+a^2)\Delta_\theta-\Delta_r)}{\chi^2\Sigma},
\end{align*}

\[R_{rr}=\tfrac{1}{\Sigma\Delta_r}\left[-\tfrac{1}{2}\Sigma\Delta''_r+a^2(1+\cos^2\theta)\Delta_\theta+{a}^{2}\cos\theta\sin\theta\Delta'_\theta+r\Delta'_r-\Delta_r\right]=-\frac{Q^2}{\Sigma\Delta_r}+\frac{\Lambda\Sigma}{\Delta_r},\]

\[R_{\theta\theta}=\tfrac{1}{\sin\theta\Sigma\Delta_\theta}\left[-\tfrac{1}{2}\sin\theta\Sigma\Delta''_\theta-\cos\theta\left(\tfrac{1}{2}\Sigma+r^2+a^2\right)\Delta'_\theta-\sin\theta(r\Delta'_r+({a}^{2}-{r}^{2})\Delta_\theta-\Delta_r)\right]=\frac{Q^2}{\Sigma\Delta_\theta}+\frac{\Lambda\Sigma}{\Delta_\theta},\]

\begin{align*}
R_{\phi\phi}=\tfrac{\sin\theta}{\chi^2\Sigma^3}&\left[-\tfrac{1}{2}\sin\theta({a}^{2}+{r}^{2})^{2}\Sigma\Delta_\theta\Delta''_\theta+\tfrac{1}{2}{a}^{2}\sin^3\theta\Sigma\Delta_r\Delta''_r-\cos\theta\Delta'_\theta\left(({a}^{2}+{r}^{2})^{2}\left(\tfrac{1}{2}\Sigma+r^2+a^2\right)\Delta_\theta+{a}^{4}\sin^4\theta\Delta_r\right)\right.\\
&~\left.-\sin\theta(r\Delta'_r(({a}^{2}+{r}^{2})^{2}\Delta_\theta+{a}^{2}\sin^2\theta\Delta_r)+(a^2-r^2)({a}^{2}+{r}^{2})^{2}\Delta_\theta^{2}-\Delta_r\Delta_\theta(a^4\cos^4\theta+2\,{a}^{2}{r}^{2}+{r}^{4})-{a}^{2}\sin^2\theta\Delta_r^{2})\right]\\
=~~~~~~~&\hspace{-7.5mm}\frac{Q^2\sin^2\theta(a^2\sin^2\theta\Delta_r+(r^2+a^2)^2\Delta_\theta)}{\chi^2\Sigma^3}-\frac{\Lambda\sin^2\theta(a^2\Delta_r-(r^2+a^2)^2\Delta_\theta)}{\chi^2\Sigma}.
\end{align*}

Therefore, the Ricci tensor can be written as $R_{\mu\nu}=Q^2R_{\mu\nu}^{\rm ch}+\Lambda g_{\mu\nu}$, where
\[R_{\mu\nu}^{\rm ch}:=\tfrac{1}{\chi^2\Sigma}\begin{pmatrix}\frac{\Delta_r+a^2\sin^2\theta\Delta_\theta}{\Sigma^2}&0&0&-\frac{a\sin^2\theta(\Delta_r+(r^2+a^2)\Delta_\theta)}{\Sigma^2} \\ 0&-\frac{\chi^2}{\Delta_r}&0&0 \\ 0&0&\frac{\chi^2}{\Delta_\theta}&0 \\ -\frac{a\sin^2\theta(\Delta_r+(r^2+a^2)\Delta_\theta)}{\Sigma^2} &0&0& \frac{\sin^2\theta(a^2\sin^2\theta\Delta_r+(r^2+a^2)^2\Delta_\theta)}{\Sigma^2}\end{pmatrix}.\]
As $g^{\mu\nu}R_{\mu\nu}^{\rm ch}=0$, the Ricci scalar is $R=g^{\mu\nu}R_{\mu\nu}=4\Lambda$,
the Einstein tensor reads $G_{\mu\nu}=R_{\mu\nu}-\tfrac{1}{2}Rg_{\mu\nu}=Q^2R_{\mu\nu}^{\rm ch}-\Lambda g_{\mu\nu}$ and we obtain
\[G_{\mu\nu}+\Lambda g_{\mu\nu}=Q^2R_{\mu\nu}^{\rm ch}.\]
\end{landscape}

\begin{landscape}
Consider now the electromagnetic vector potential
\[A_\mu=\frac{rQ}{\chi\Sigma}(\mathrm{d}t-a\sin^2\theta\mathrm{d}\phi).\]
The associated electromagnetic field tensor $F=\mathrm{d}A$ has coordinates $F_{\mu\nu}=\nabla_\mu A_\nu-\nabla_\nu A_\mu=\partial_\mu A_\nu-\partial_\nu A_\mu$ and is given by
\[F_{\mu\nu}=\frac{Q}{\chi\Sigma^2}\begin{pmatrix}0 & r^2-a^2\cos^2\theta & -2ra^2\cos\theta\sin\theta & 0 \\ a^2\cos^2\theta-r^2 & 0 & 0 & a\sin^2\theta(r^2-a^2\cos^2\theta) \\ 2ra^2\cos\theta\sin\theta & 0 & 0 & -2ar\cos\theta\sin\theta(r^2+a^2) \\ 0 & a\sin^2\theta(a^2\cos^2\theta-r^2) & 2ar\cos\theta\sin\theta(r^2+a^2) & 0\end{pmatrix}\]
Raising the indices yields the associated contravariant tensor
\[F^{\mu\nu}=g^{\mu\alpha}F_{\alpha\beta}g^{\beta\nu}=\frac{\chi Q}{\Sigma^3}\begin{pmatrix}0 & (r^2+a^2)(a^2\cos^2\theta-r^2) & 2ra^2\cos\theta\sin\theta & 0 \\ (r^2+a^2)(r^2-a^2\cos^2\theta) & 0 & 0 & a(r^2-a^2\cos^2\theta) \\ -2ra^2\cos\theta\sin\theta & 0 & 0 & -2ra\cot\theta \\ 0 & a(a^2\cos^2\theta-r^2) & 2ra\cot\theta & 0\end{pmatrix}.\]
It is easy to check that for $\nu=t,\phi$, we have $\partial_r(\sqrt{-g}F^{r\nu})+\partial_\theta(\sqrt{-g}F^{\theta\nu})=0$ where $g=\det(g_{\mu\nu})=-\chi^{-4}\sin^2\theta\Sigma^2$. Because $\nabla_\mu F^{\mu\nu}=\tfrac{1}{\sqrt{-g}}\partial_\mu(\sqrt{-g}F^{\mu\nu})$, we obtain that for all $\nu$, we have 
\[\nabla_\mu F^{\mu\nu}=0.\]
Hence, the tensor $F$ is a vacuum solution of the Maxwell equations.

We compute the trace
\[F_{\alpha\beta}F^{\alpha\beta}=2\left(\frac{4Q^2a^2r^2\cos^2\theta}{\Sigma^4}-\frac{Q^2(r^2-a^2\cos^2\theta)}{\Sigma^4}\right)=2(B^2-E^2)\]
and the stress-energy tensor 
\[T_{\mu\nu}=\frac{1}{\mu_0}\left(g^{\alpha\beta}F_{\alpha\mu}F_{\beta\nu}-\tfrac{1}{4}g_{\mu\nu}F_{\alpha\beta}F^{\alpha\beta}\right)=\frac{Q^2}{8\pi\chi^2\Sigma}\begin{pmatrix}\frac{\Delta_r+a^2\sin^2\theta\Delta_\theta}{\Sigma^2}&0&0&-\frac{a\sin^2\theta(\Delta_r+(r^2+a^2)\Delta_\theta)}{\Sigma^2} \\ 0&-\frac{\chi^2}{\Delta_r}&0&0 \\ 0&0&\frac{\chi^2}{\Delta_\theta}&0 \\ -\frac{a\sin^2\theta(\Delta_r+(r^2+a^2)\Delta_\theta)}{\Sigma^2} &0&0& \frac{\sin^2\theta(a^2\sin^2\theta\Delta_r+(r^2+a^2)^2\Delta_\theta)}{\Sigma^2}\end{pmatrix}=\frac{Q^2}{8\pi}R_{\mu\nu}^{\rm ch}=\frac{G_{\mu\nu}+\Lambda g_{\mu\nu}}{8\pi}\]
and so the EME is indeed satisfied by the KNdS metric and the vector potential $A_\mu$. \qed
\end{landscape}
\end{appendix}

\subsection*{Data availability statement} All data that support the findings of this study are included within the article (and any supplementary files).

\printbibliography

\end{document}